\documentclass[11pt]{article}

\usepackage[T1]{fontenc}
\usepackage{lmodern}
\usepackage{microtype}
\usepackage[letterpaper,margin=1in]{geometry}

\usepackage{amsmath,amssymb,amsthm,mathtools}
\usepackage{aliascnt}
\usepackage{array}
\usepackage{booktabs}
\usepackage{enumitem}
\usepackage{needspace}
\usepackage{algorithm,algpseudocode}
\algrenewcommand\algorithmicrequire{\textbf{Input:}}
\algrenewcommand\algorithmicensure{\textbf{Output:}}
\usepackage{tikz}
\usepackage{xcolor}
\usetikzlibrary{arrows.meta,backgrounds,fit,positioning}

\definecolor{linkblue}{RGB}{25,75,130}
\usepackage[
  colorlinks=true,
  linkcolor=linkblue,
  citecolor=linkblue,
  urlcolor=linkblue
]{hyperref}
\usepackage[nameinlink,noabbrev,capitalise]{cleveref}

\allowdisplaybreaks
\setlist{itemsep=0.25em,topsep=0.5em}
\newtheorem{theorem}{Theorem}[section]
\newaliascnt{lemma}{theorem}
\newtheorem{lemma}[lemma]{Lemma}
\aliascntresetthe{lemma}
\newaliascnt{proposition}{theorem}
\newtheorem{proposition}[proposition]{Proposition}
\aliascntresetthe{proposition}
\newaliascnt{corollary}{theorem}
\newtheorem{corollary}[corollary]{Corollary}
\aliascntresetthe{corollary}
\theoremstyle{definition}
\newaliascnt{definition}{theorem}
\newtheorem{definition}[definition]{Definition}
\aliascntresetthe{definition}
\theoremstyle{remark}
\newaliascnt{remark}{theorem}
\newtheorem{remark}[remark]{Remark}
\aliascntresetthe{remark}

\crefname{theorem}{Theorem}{Theorems}
\crefname{lemma}{Lemma}{Lemmas}
\crefname{proposition}{Proposition}{Propositions}
\crefname{corollary}{Corollary}{Corollaries}
\crefname{definition}{Definition}{Definitions}
\crefname{remark}{Remark}{Remarks}

\numberwithin{equation}{section}

\AddToHook{cmd/thebibliography/after}{\setlength{\itemsep}{0pt}}

\newcommand{\bits}{\{0,1\}}
\newcommand{\I}{\mathrm{I}}
\newcommand{\J}{\mathrm{J}}
\newcommand{\ones}{\mathbf{1}}
\newcommand{\zeros}{\mathbf{0}}
\newcommand{\transpose}{\mathsf{T}}
\newcommand{\supp}{\operatorname{supp}}
\newcommand{\im}{\operatorname{Im}}

\newcommand{\BMPV}{\operatorname{\mathsf{BMPV}}}
\newcommand{\OV}{\mathsf{OV}}
\newcommand{\promOV}{\mathsf{PromOV}}

\newcommand{\BMPRS}{\operatorname{\mathsf{BMPRS}}}
\newcommand{\DIAM}{\operatorname{\mathsf{DIAM}}}
\newcommand{\RAD}{\operatorname{\mathsf{RAD}}}
\newcommand{\Triangle}{\operatorname{\mathsf{Triangle}}}
\newcommand{\ONTO}{\operatorname{\mathsf{ONTO}}}
\newcommand{\AND}{\operatorname{\mathsf{AND}}}
\newcommand{\OR}{\operatorname{\mathsf{OR}}}
\newcommand{\Adv}{\operatorname{Adv}^{\!\pm}}

\newcommand{\booland}{\mathbin{\wedge}}
\newcommand{\boolor}{\mathbin{\vee}}
\newcommand{\boolprod}{\mathbin{\bullet}}

\newcommand{\ind}[1]{\mathbf{1}\!\left[#1\right]}

\newcommand{\ket}[1]{\left\lvert #1\right\rangle}
\newcommand{\ceil}[1]{\left\lceil#1\right\rceil}
\newcommand{\floor}[1]{\left\lfloor#1\right\rfloor}
\newcommand{\from}{\colon}
\newcommand{\polylog}{\operatorname{polylog}}

\newcommand{\Ot}{\widetilde O}
\newcommand{\Left}[1]{\mathsf{Left}_{#1}}
\newcommand{\Right}[1]{\mathsf{Right}_{#1}}

\newcommand{\Remaining}[1]{\mathsf{Remaining}_{#1}}
\newcommand{\Domain}[1]{\mathsf{Domain}_{#1}}

\newcommand{\Heavy}[1]{\mathsf{Heavy}_{#1}}
\newcommand{\Size}[1]{\mathsf{size}_{#1}}
\newcommand{\Work}[1]{\mathsf{work}_{#1}}
\newcommand{\Budget}[1]{\mathsf{budget}_{#1}}
\newcommand{\First}[2]{\operatorname{first}(#1,#2)}
\newcommand{\Record}[1]{\operatorname{record}(#1)}
\newcommand{\Pivots}{\mathsf{Pivots}}
\providecommand{\Covered}[1]{\mathsf{Covered}_{#1}}
\providecommand{\DegSamp}{\mathsf{DegSamp}}
\providecommand{\Samp}{\mathsf{Samp}}

\title{Improved Quantum Query Bounds for\\ Boolean Matrix Product Verification}

\author{Amin Shiraz Gilani\thanks{University of Maryland.
\href{mailto:asgilani@umd.edu}{\texttt{asgilani@umd.edu}}}
\qquad
Fran\c{c}ois Le Gall\thanks{Graduate School of Mathematics, Nagoya University.
\href{mailto:legall@math.nagoya-u.ac.jp}{\texttt{legall@math.nagoya-u.ac.jp}}}
\qquad
Xingyu Zhou\thanks{The University of British Columbia.
\href{mailto:zxingyu@cs.ubc.ca}{\texttt{zxingyu@cs.ubc.ca}}}}
\date{}

\hypersetup{
  pdftitle={Improved Quantum Query Bounds for Boolean Matrix Product
    Verification},
  pdfauthor={Amin Shiraz Gilani, Fran\c{c}ois Le Gall, and Xingyu Zhou},
}

\begin{document}

\maketitle
\begin{abstract}
We prove the first non-trivial upper bound for the quantum query complexity of Boolean Matrix Product Verification ($\mathsf{BMPV}$), answering a longstanding open question in quantum query complexity. For $n\times n$ matrices, our upper bound is $\widetilde O(n^{17/12})$, improving on the standard $O(n^{3/2})$ bound obtained using Grover search by Buhrman and Špalek (SODA 2006). We complement this result by showing an $\Omega(n^{5/4})$ lower bound, which improves over the previous best known lower bound of $\widetilde\Omega(n^{19/18})$ by  Childs, Kimmel, and Kothari (ESA 2012).

Our approach centers on a connection with Orthogonal Vectors ($\mathsf{OV}$), which asks whether an indexed list of $n$ Boolean vectors of dimension $n$ contains two vectors with disjoint supports. 
In particular, we prove equivalences between $\mathsf{OV}$ and $\mathsf{BMPV}$ and establish the above bounds for $\mathsf{OV}$.
We also prove a tight $\widetilde \Theta(n^{3/2})$ bound for a variant of $\mathsf{BMPV}$ that asks whether the product contains a given row vector. Together, these results imply a polynomial separation between the quantum query complexities of deciding whether a graph has radius at most two and whether it has diameter at most two. 
\end{abstract}

\clearpage
\begingroup
\setcounter{tocdepth}{2}
\tableofcontents
\endgroup
\clearpage

\section{Introduction}
\label{sec:introduction}

\subsection{Background and statement of the main results}
\paragraph{Boolean Matrix Product Verification.}
Boolean matrix product verification (\(\BMPV_n\)) is one of the central problems in fine-grained complexity~\cite{VassilevskaWilliamsWilliams2018,Kunnemann2018Freivalds}. Given as input three matrices \(A,B,C\in\bits^{n\times n}\), this problem asks whether \(A\boolprod B=C\), where $\boolprod$ denotes the Boolean matrix product, i.e., the matrix product over the Boolean semiring in which addition is OR and multiplication is AND. While over rings Freivalds's randomized linear fingerprint verifies a matrix product classically in quadratic time~\cite{Freivalds1979}, the same argument does not extend to the Boolean semiring. The best known classical algorithm for Boolean matrix product verification works by reducing it to matrix multiplication, leading to time complexity $O(n^{\omega+o(1)})$, where $\omega<2.38$ denotes the exponent of matrix multiplication.

Even in the quantum query model,\footnote{Boolean matrix product verification is not interesting in the classical query model since its classical query complexity is trivially $\Theta(n^2)$.} in which the algorithm receives
oracle access to \(A,B,C\) and only queries to matrix entries are counted, the
complexity of verifying a Boolean matrix product remains unresolved.
Grover search immediately gives the upper bound
\(Q(\BMPV_n)=O(n^{3/2})\), as observed by Buhrman and \v{S}palek \cite{BuhrmanSpalek2006}, where~$Q$ denotes bounded-error quantum query complexity. 
Whether this upper bound is optimal has remained a longstanding open question in quantum query complexity (see, e.g., \cite{ChildsKimmelKothari2012,JefferyKothariMagniez2012,KothariNayak2016,LeGall2014Survey} and \cite[Open Problem 4.5]{Kothari2014}). The best known lower bound is $Q(\BMPV_n)=\widetilde\Omega(n^{19/18})=\widetilde\Omega(n^{1.0555\ldots})$ proved by Childs, Kimmel, and Kothari using the read-many formula method~\cite{ChildsKimmelKothari2012}, leaving a large gap.

The quantum query complexity of Boolean matrix product verification is particularly interesting because of its connection to triangle finding, a central problem that has driven research in quantum query complexity~\cite{MagniezSanthaSzegedy2007,LeeMagniezSantha2013,LeGall2014Triangle,CaretteLauriereMagniez2017}. Although triangle finding reduces to Boolean matrix product verification, successive improvements have led to quantum algorithms for triangle finding in \(n\)-vertex graphs using \(\widetilde O(n^{5/4})\) queries, well below the \(n^{3/2}\) bound obtained by Grover search. A significant difference between the two problems lies in their certificate complexities. Whereas triangle finding has constant-size \(1\)-certificates, the \(0\)- and \(1\)-certificate complexities of \(\BMPV_n\) are \(\Theta(n)\) and \(\Theta(n^2)\), respectively. This contrast suggests that Boolean matrix product verification may be harder than triangle finding and, prior to this work, could have suggested that its \(O(n^{3/2})\) upper bound was optimal. The previously known lower bounds also reflect this contrast: Boolean matrix product verification had a \(\widetilde\Omega(n^{19/18})\) lower bound, whereas the best known lower bound for triangle finding remains the trivial \(\Omega(n)\).




In this work, we show the first non-trivial upper bound for Boolean Matrix Product Verification:\footnote{In this paper, the notations \(\widetilde O(\cdot)\),  \(\widetilde \Omega(\cdot)\) and  \(\widetilde \Theta(\cdot)\) suppress $\polylog(n)$ factors.}
\begin{theorem}\label{th:main_ub}
  $Q(\BMPV_n)=\widetilde O(n^{17/12})$. (Note that $17/12=1.41666\ldots$.)
\end{theorem}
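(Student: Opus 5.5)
The plan is to route the problem through Orthogonal Vectors, as the abstract suggests. First I would reduce $\BMPV_n$ to $\OV$ on $O(n)$ vectors of dimension $O(n)$. If $C$ is given, the product is wrong iff some entry is misreported, and there are two kinds of errors. A \emph{false negative} is a pair with $C_{ij}=1$ but row $A_{i\cdot}$ and column $B_{\cdot j}$ having disjoint supports. A \emph{false positive} is a pair with $C_{ij}=0$ but some $k$ with $A_{ik}=B_{kj}=1$.

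The false positives form a triangle-type search: we look for $(i,k,j)$ with $A_{ik}=B_{kj}=1$ and $\overline{C}_{ij}=1$. This is a tripartite triangle-finding instance with three $n\times n$ adjacency matrices. It can be handled in $\widetilde O(n^{5/4})$ queries by the known triangle algorithms, which adapt to tripartite graphs given by three oracles. That is below $n^{17/12}$.

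The false negatives ask for $i,j$ with $C_{ij}=1$ and $\langle A_{i\cdot},B_{\cdot j}\rangle=0$. This is an $\OV$-type search between the rows of $A$ and the columns of $B$, restricted to pairs marked by $C$. I would encode the restriction by padding coordinates: append $\overline{C}_{ij}$-type gadgets so that pairs with $C_{ij}=0$ are never orthogonal. One concrete way is to give row $i$ an indicator block $e_i$ and column $j$ the block $\overline{C}_{\cdot j}$. This doubles the dimension and costs one query per simulated entry. The result is a bipartite $\OV$ instance of size $O(n)\times O(n)$ that can be queried locally.

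The core is then an $\widetilde O(n^{17/12})$ quantum algorithm for $\OV$ on $n$ vectors in $\bits^n$. I would use a heavy/light decomposition by support size, with threshold $d$.

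\textbf{Light vectors.} For a pair where both vectors have support at most $d$, orthogonality is checked by learning one vector's support in $\widetilde O(\sqrt{nd})$ queries and then Grover-searching over partners. The partner check needs only $O(\sqrt d)$ queries at the known support positions. Combined via nested amplitude amplification, this gives about $n\sqrt{d}+\sqrt{n}\cdot\sqrt{nd}$ queries.

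\textbf{Heavy vectors.} A vector with support at least $d$ can be orthogonal only to vectors that avoid a large set. I would randomly sample $\widetilde O(n/d)$ coordinates from a heavy vector's support; any orthogonal partner must be $0$ on all of them. This prunes the candidate partners, and a counting/birthday-type argument bounds how many candidates survive. Repeated sampling with a record-keeping procedure (the paper's macros $\Pivots$, $\Covered{}$ and $\Heavy{}$ hint at this) incrementally covers the heavy structure.

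Finally, I would balance $d$ against the heavy-phase cost of roughly $n^{3/2}/d^{1/4}$-type terms. Optimizing would then give the exponent $17/12$, for example with $d=n^{2/3}$.

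The main obstacle is the heavy case. A naive argument gives back $n^{3/2}$, because one must certify that \emph{no} pair is orthogonal, which has $\Theta(n^2)$-size certificates. The saving must come from showing that after sampling pivots, the remaining unresolved pairs are few or structured enough to search with a quantum walk. I would first try a Szegedy/MNRS-style walk over subsets of heavy vectors with their sampled coordinate sets as data, and bound the update cost by the sparsity guarantees from pivoting.
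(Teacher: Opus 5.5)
\textbf{There is a genuine gap: you give no algorithm for pairs of two heavy vectors, and the pruning you sketch cannot work.}

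Your reduction layer is correct and matches the paper's. Spurious-one errors become orthogonal pairs through the $e_i$ versus $\overline C_{\cdot j}$ block; you still need guard columns, as in \Cref{lem:ov-formulations}, to turn the two-sided instance into a single list. Missing-one errors become tripartite triangles, found in $O(n^{5/4})$ queries. Your light case does find any orthogonal pair with an endpoint of support at most $d$ in $\Ot(n\sqrt d)$ queries.

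But the theorem is essentially the $\Ot(n^{17/12})$ bound for $\OV_n$, and for heavy--heavy pairs you only gesture at pivots and a quantum walk, with no algorithm and no counting bound. The exponent is not derived either. At $d=n^{2/3}$, both $n\sqrt d$ and $n^{3/2}/d^{1/4}$ equal $n^{4/3}$, not $n^{17/12}$. The heavy cost $n^{3/2}/d^{1/4}$ is simply asserted; if it held, it would beat the paper's bound.

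Support size is the wrong notion of heaviness. Sampling coordinates of $\supp(x_i)$ eliminates only partners that share many coordinates with $x_i$, while the hard pairs share $O(1)$. For example, split the vectors into halves $P,Q$, and split the coordinates into blocks $X,Y$ of size $n/4$ and a grid $[m]^2$ with $m^2\approx n/2$. Give $p\in P$ support $X\cup(\{a_p\}\times[m])$ and $q\in Q$ support $Y\cup([m]\times\{b_q\})$. Then:
\begin{itemize}
\item all supports have size $\Theta(n)$;
\item each of the $n^2/4$ cross pairs shares exactly one coordinate;
\item $\Ot(n^{1/3})$ samples from $\supp(x_p)$ almost surely all fall in $X$, where no $x_q$ has a one, so no partner is removed.
\end{itemize}

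The paper's heaviness is instead per coordinate.
\begin{itemize}
\item With $s$ fully read random pivots, $j$ is covered at $k$ if some pivot with a one at $k$ has no common one with $x_j$ before $k$.
\item $i$ is heavy at $k$ if it has more than about $t$ uncovered candidate neighbours there.
\item A potential argument gives $\mathbb E\sum_i|\Left{i}|\le n^3/(st)$: a pivot with a one at $k$ and more than $t$ uncovered candidate neighbours covers all of them, and at most $n$ indices can ever be covered.
\item First common ones give $|\Right{j}|\le s$.
\item The threshold gives $\sum_{j\ne i}|\{k\in\Remaining{ij}:k\le\First{i}{j}\}|\le4nt$.
\item Every common one lies in one of these three sets.
\end{itemize}
Size classes then make a Johnson walk over $\Theta(\sqrt{n_a})$-subsets, storing left lists and records, with a variable-time checker over all partners, cost $\Ot(ns+n\sqrt t+n^2/\sqrt{st}+n^{5/4}\sqrt s+n^{9/4}/t)$. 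This is $\Ot(n^{17/12})$ at $s=n^{1/3}$, $t=n^{5/6}$. None of this structure, which is the actual content of the theorem, appears in your proposal.
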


We complement this upper bound by the following new lower bound: 

\begin{theorem}\label{th:main_lb}
  $Q(\BMPV_n)=\Omega(n^{5/4})$.
\end{theorem}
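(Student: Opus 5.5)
We prove \cref{th:main_lb} by going through Orthogonal Vectors, as the abstract indicates. The first step is a query-efficient reduction from $\OV_n$ to $\BMPV_n$, so that $Q(\BMPV_n)\ge Q(\OV_{\Theta(n)})$.

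Let the input be vectors $v_1,\dots,v_n\in\bits^n$, and let $V$ be the matrix whose rows are the $v_i$. Take $A=V$ and $B=V^\transpose$. Then $(A\boolprod B)_{ij}=1$ exactly when $v_i$ and $v_j$ intersect. The list therefore has no orthogonal pair if and only if $A\boolprod B=J$, the all-ones matrix. So we set $C=J$. Each query to $A$ or $B$ costs one query to the $\OV$ input, and queries to $C$ are free.

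There is a subtlety: a vector orthogonal to itself, i.e.\ the zero vector, also makes the product differ from $J$. To avoid this, pad each $v_i$ with a fresh coordinate that equals $1$ for every vector. This ensures $(A\boolprod B)_{ii}=1$, but it also makes all pairs intersect. The fix is to split the vectors into two index classes with separate padding coordinates, as in standard bipartite $\OV$: use $A$ with rows $u_i$, $B$ with columns $w_j$, and look for disjoint $u_i,w_j$. The bipartite and monochromatic versions are interreducible up to constant factors. The upshot is that it suffices to prove $Q(\OV_n)=\Omega(n^{5/4})$.

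For the $\OV$ lower bound I would use a composition or hard-promise argument. Consider the promise version in which either all pairs intersect, or exactly one pair $(i,j)$ is disjoint. Suppose additionally that every pair intersects in at most one coordinate, except that in a no-instance a designated coordinate set for one pair is removed. Then detecting disjointness of a given pair is an $\AND$ over the intersection witnesses. The outer problem is an $\OR$ over $\binom n2$ pairs, but the pairs share variables, so plain composition bounds ($\sqrt{n^2}\cdot$ inner) do not directly apply.

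Instead I would construct an explicit adversary or dual certificate via the negative-weight adversary bound $\Adv$. On yes-instances, each pair $i<j$ shares a single common coordinate $k(i,j)$, for instance arranged by a design with $k$ ranging over about $n$ coordinates with blocks of size $\sim n^{1/2}$. No-instances are obtained by deleting one common coordinate from one vector. Each entry $(i,k)$ appears in about $n^{1/2}$ pairs, so flipping it kills about $n^{1/2}$ pairs. Balancing the number of sensitive entries against this overlap, the standard adversary computation gives roughly $\sqrt{(\#\text{yes-neighbors})(\#\text{no-neighbors})}$. Tuning the parameters, for example with each vector having support $\sim n^{1/2}$ and the design controlling pairwise intersections, should yield $n^{5/4}$: about $n^{3/2}$ relevant entries, each shared by about $n^{1/2}$ pairs, giving $\sqrt{n^{2}\cdot n^{1/2}}=n^{5/4}$.

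The main obstacle is this last step. One must choose the instance family, namely a combinatorial design in which each pair intersects in exactly one point and the supports are of size $\sim\sqrt n$ (a projective plane of order $\sim\sqrt n$ is the first thing I would try). Then one must verify the adversary matrix's spectral norm against its $\Delta$-restrictions. I expect the norm bound for $\Gamma\circ\Delta_{(i,k)}$ to use the regularity of the projective plane's incidence structure. Finally, the projective plane has $\Theta(n)$ points and lines of size $\Theta(\sqrt n)$, which fits $n\times n$ matrices after padding.
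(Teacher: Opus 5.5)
Your first step is correct and matches the paper: restricting $\BMPV_n$ to inputs $(X,X^\transpose,\J_n)$ gives $Q(\OV_n)\le Q(\BMPV_n)$. The zero-vector worry is moot. For $n\ge2$ a zero row is orthogonal to every other row, so $\OV(X)=0\iff X\boolprod X^\transpose=\J_n$ already holds and the bipartite detour is unnecessary.

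The gap is the second step, which is the entire content of the theorem: $Q(\OV_n)=\Omega(n^{5/4})$ is never proved. The count $\sqrt{n^2\cdot n^{1/2}}$ (number of pairs times pairs killed per flip) is not a quantity that any adversary bound computes. The setup is also inconsistent. You promise exactly one disjoint pair, but deleting one incidence from a projective plane of order $q$ creates $q\approx\sqrt n$ disjoint pairs.

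More seriously, the family you describe provably cannot give the bound with the relation you indicate. Take yes-instances to be (relabeled) incidence matrices of a projective plane, and no-instances to be such a matrix with one $1$ deleted.
\begin{itemize}
\item The deleted entry lies in the unique row and the unique column of weight $q$, so every no-instance has exactly one yes-neighbour.
\item Each instance has at most one partner differing at a given position $p$.
\item Hence, for any adversary matrix $\Gamma$ supported on these pairs, even with negative weights, $\Gamma\circ D_p$ has at most one nonzero entry per row and column. Here $D_p$ marks the pairs of inputs differing at $p$. So $\|\Gamma\circ D_p\|$ is the largest absolute entry of $\Gamma\circ D_p$.
\item Row sums of $|\Gamma|$ are at most $O(n^{3/2})$ times the largest entry, and column sums are at most that entry. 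The Schur test then gives $\|\Gamma\|\le O(n^{3/4})\max_p\|\Gamma\circ D_p\|$.
\end{itemize}
For a single fixed design, Grover search over its $\Theta(n^{3/2})$ one-entries actually solves the problem in $O(n^{3/4})$ queries. To go beyond this, you would need weights on pairs of differently relabeled designs at large Hamming distance, together with a spectral bound for them, and nothing in the proposal indicates how to do this. There is also no slack: counting each row's weight in $O(n^{3/4})$ queries and searching over rows solves this promise problem in $\widetilde O(n^{5/4})$ queries, so such a construction would have to be exactly tight.

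The paper avoids building an adversary from scratch and instead reduces from a known hard problem. It embeds $\AND_s\circ\ONTO_{s^2}^{\,s}$ into an $\OV$ instance with $\Theta(s^2)$ vectors and coordinates. For $f_a\colon[2s^2-2]\to[s]^2$, the rows marking $\{t:\pi_1(f_a(t))=u\}$ and $\{t:\pi_2(f_a(t))=v\}$ are disjoint exactly when $(u,v)\notin\im(f_a)$. Guard and index columns make every other pair intersect. Beame and Machmouchi's $\Omega(q)$ bound for surjectivity, together with multiplicativity of $\Adv$ under composition, then gives $\Omega(s^2\sqrt s)=\Omega(n^{5/4})$.
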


Theorem \ref{th:main_lb} significantly improves the $\widetilde\Omega(n^{19/18})$ lower bound from \cite{ChildsKimmelKothari2012}. Our lower bound holds even when \(C\) is the all-ones matrix
and \(B=A^\transpose\). 

\paragraph{Diameter and radius.}
We also study two graph parameters: the diameter (the maximum distance between any pair of vertices) and the radius (the minimum, over all vertices, of the maximum distance from that vertex to any other vertex). Throughout, graphs are unweighted and undirected, and $n$ denotes the number of vertices.


Computing and approximating graph diameter and radius have long been studied in classical algorithms and fine-grained complexity~\cite{AingworthChekuriIndykMotwani1999,BaschKhannaMotwani1995,RodittyVassilevskaWilliams2013,ChechikEtAl2014,CairoGrossiRizzi2016,AbboudVassilevskaWilliamsWang2016,BackursEtAl2018}. The special case of deciding whether the diameter is at most two, which we denote $\DIAM_{\le2,n}$,  is closely connected to Boolean matrix product verification: this property holds precisely when $M\boolprod M$ is the all-ones matrix, where $M$ is the adjacency matrix with ones added on the diagonal. Deciding whether the radius is at most two, which we denote $\RAD_{\le2,n}$, also has a particularly simple matrix formulation: it asks whether $M\boolprod M$ contains an all-ones row. The best known classical algorithms for both problems have time complexity $O(n^{\omega+o(1)})$.

We investigate these decision problems in the quantum query model, with oracle access to the graph’s adjacency matrix.
We apply the techniques developed in this paper and show that the bounds of Theorems~\ref{th:main_ub} and~\ref{th:main_lb} also hold for deciding whether the diameter is at most two:
\begin{theorem}\label{th:main_diam}
  $Q(\DIAM_{\le2,n})=\widetilde O(n^{17/12})$ and 
  $Q(\DIAM_{\le2,n})=\Omega(n^{5/4})$.
  \label{eq:intro-bmpv-bounds}
\end{theorem}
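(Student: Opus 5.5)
Both bounds come from query-preserving reductions between $\DIAM_{\le2,n}$, $\BMPV_n$, and Orthogonal Vectors ($\OV$). The abstract says the paper proves equivalences between $\OV$ and $\BMPV$ and establishes the $\widetilde O(n^{17/12})$ and $\Omega(n^{5/4})$ bounds for $\OV$. I will assume these $\OV$ bounds and only build the graph reductions in each direction.

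\emph{Upper bound.} Let $M$ be the adjacency matrix of $G$ with ones added on the diagonal. Then $\mathrm{diam}(G)\le 2$ iff $M\boolprod M=\J$, where $\J$ is the all-ones matrix. This is an instance of $\BMPV_n$ with $A=B=M$ and $C=\J$. Each query to $A$ or $B$ costs one query to the adjacency matrix, and $C$ is known and costs nothing. Applying \cref{th:main_ub} therefore gives $Q(\DIAM_{\le2,n})=\widetilde O(n^{17/12})$.

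\emph{Lower bound.} I would reduce $\OV$ to $\DIAM_{\le 2}$. Take vectors $v_1,\dots,v_n\in\bits^n$. Build a graph with a vertex $a_i$ for each vector and a vertex $c_k$ for each coordinate, with the following edges:
\begin{itemize}
\item $a_i\sim c_k$ iff $(v_i)_k=1$;
\item the coordinate vertices $c_k$ form a clique;
\item there is an extra hub vertex $h$ adjacent to every $c_k$ and to nothing else.
\end{itemize}
The $a_i$ are pairwise non-adjacent. The vectors $a_i,a_j$ are at distance $\le 2$ iff they share a coordinate, i.e.\ iff $v_i$ and $v_j$ are not orthogonal. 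The other pairs must be at distance $\le 2$ regardless of the input:
\begin{itemize}
\item $c$–$c$ pairs are adjacent through the clique;
\item $h$–$c$ pairs are adjacent directly;
\item for $h$–$a_i$ we need $v_i\neq\zeros$;
\item for $a_i$–$c_k$ we need $a_i$ to have some neighbor $c_\ell$, which is adjacent to $c_k$.
\end{itemize}
To handle the zero vector, append to every vector a fixed coordinate that is $1$ for all of them. Since that coordinate is shared by everyone, it would kill orthogonality, which is unacceptable. Instead I would rely on the lower-bound hard instances of $\OV$ (presumably obtained from the $\BMPV$ hard instances with $B=A^\transpose$, $C=\J$) having no zero vectors. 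Alternatively, a zero vector is found by Grover in $O(n^{3/2})$ queries, which is too expensive. So I would check the paper's hard distribution, or run the reduction directly from $\BMPV$ with $C=\J$, $B=A^\transpose$.

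That last version is cleanest. We have $A\boolprod A^\transpose=\J$ iff every pair of rows of $A$ intersects, including each row with itself, so every row is nonzero. The graph above with $v_i$ the rows of $A$ then has diameter $\le 2$ iff $A\boolprod A^\transpose=\J$. Each adjacency query is answered with at most one query to $A$, since all other edges are fixed. With $O(n)$ vertices, \cref{th:main_lb} (which holds for $C=\J$, $B=A^\transpose$) gives $Q(\DIAM_{\le2,n})=\Omega(n^{5/4})$.

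\emph{Main obstacle.} The delicate step is ensuring that the auxiliary pairs ($h$–$a_i$ and $a_i$–$c_k$) are always within distance $2$ without altering intersection structure. This is exactly where I use that the hard instances come from the $C=\J$, $B=A^\transpose$ restriction, whose yes-instances force nonzero rows. In no-instances the diameter only needs to exceed $2$, which any violating pair already ensures.
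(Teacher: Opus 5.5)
Your proposal is correct and follows the paper's route: the upper bound rests on the identity $\widehat A_G\boolprod\widehat A_G=\J_n$, and the lower bound uses the same incidence graph with independent row vertices, a clique of column vertices, and input-dependent incidence edges. The differences are small: the paper feeds the rows of $\widehat A_G$ straight to the $\OV$ algorithm instead of going through $\BMPV$; your hub is superfluous but plays exactly the role of the extra clique vertex the paper adds for odd $n$; and you should state the padding to exactly $n$ vertices explicitly, which further hubs provide. Your zero-vector detour is unnecessary, since for $n\ge2$ a zero row is already orthogonal to every other row and isolates its vertex, which is precisely why the $(A,A^\transpose,\J_n)$ formulation you settle on is just the complement of $\OV_n$ via $\OV(X)=0\Leftrightarrow X\boolprod X^\transpose=\J_n$.
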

We then give bounds, tight up to polylogarithmic factors, for the quantum query complexity of radius at most two:
\begin{theorem}\label{th:main_rad}
   $Q(\RAD_{\le2,n})=\widetilde \Theta (n^{3/2})$.
\end{theorem}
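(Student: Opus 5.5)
Both directions of \cref{th:main_rad} will be handled separately: the upper bound by nested search, the lower bound by composition of query complexities.

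\textbf{Upper bound.} Let $M$ be the adjacency matrix with ones on the diagonal. The radius is at most two iff
\[
\exists u\ \forall v\ \exists w:\ M_{uw}\booland M_{wv}=1 .
\]
This is an $\OR_n\circ\AND_n\circ\OR_n$ formula whose $n^3$ leaves are each computed with $O(1)$ queries. Every leaf is an AND of two input bits. Even so, the formula is read-many, since each edge variable appears $\Theta(n)$ times. I will therefore not use the read-once formula theorem. Instead I will use nested Grover search with error reduction, or better, the composition result for bounded-error search with noisy subroutines (Høyer--Mosca--de Wolf). This gives $O(\sqrt n\cdot\sqrt n\cdot\sqrt n)=O(n^{3/2})$ queries, with at most a logarithmic overhead from error amplification. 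This direction is routine.

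\textbf{Lower bound.} The aim is $\Omega(n^{3/2})$, possibly up to polylog factors, via a reduction from a problem with known $\Theta(n^{3/2})$ complexity. The natural candidate is $\OR_{m}\circ\AND_{m}\circ\OR_{m}$ on independent bits, whose complexity is $\Theta(m^{3/2})$ because the adversary bound composes multiplicatively. The difficulty is to embed $m^3$ \emph{independent} bits into a graph on $n=O(m)$ vertices, whereas a graph has only $O(n^2)$ edges. So I will not embed the bits directly. Instead I will use the parallel structure of the row-version of $\BMPV$ mentioned in the abstract: for each candidate centre $u$ (a block of vertices), the inner $\AND\circ\OR$ asks whether every $v$ is reached via some $w$.

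I plan to realise this as follows.
\begin{itemize}
\item Take $m=\Theta(n)$ candidate centres $u_i$, each with private neighbourhood $W_i$ of size $\Theta(1)$ or $\Theta(\sqrt n)$.
\item Take a target set $V$ of size $m$.
\item Make the unknown input bits the edges between $W_i$ and $V$, so that $u_i$ has eccentricity $\le2$ iff every $v\in V$ has a neighbour in $W_i$.
\item Add auxiliary hub vertices so that all other vertices automatically have eccentricity greater than two, and all remaining distance constraints are trivially satisfied. Concretely, make $u_i$ adjacent to all of $W_i$, and force non-centres to have a far vertex via pendant vertices attached to each $u_i$.
\end{itemize}
With $|W_i|=k$, the function becomes $\OR_m\circ\AND_{m}\circ\OR_k$ with $mk\cdot m$ distinct edges. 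I need $m^2k\le n^2$ and $m+mk\le n$. Choosing $m=n/k$ gives complexity $\sqrt{m}\sqrt{m}\sqrt{k}=m\sqrt k=n/\sqrt k$, which is worse than $n^{3/2}$.

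A plain disjoint-block embedding is thus insufficient. The fix I would try is to \emph{share} the middle layer: let $W$ be a common set of $n$ vertices, and encode each $u$'s neighbourhood in $W$ as unknown edges too. For fixed $u$, the inner function is then $\AND_{v}\OR_{w}(M_{uw}\booland M_{wv})$. I would apply the general adversary bound directly to this structured function. The aim is a negative-weight adversary matrix that is the tensor product of optimal adversaries for $\OR_n$, for $\AND_n$, and for the $\OR_n$ over $w$. The construction would be restricted to inputs where $u$'s neighbourhood is a fixed large set and only the $w$--$v$ edges vary in a structured (e.g.\ design-like) way, so that distinct $(u,v)$ pairs query nearly disjoint edge sets.

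\textbf{Main obstacle.} The obstacle is exactly this shared-variable issue: proving that reuse of edges across different $(u,v)$ does not let the algorithm beat $n^{3/2}$. My first attempt would be a reduction from the known tight $\Omega(n^{3/2})$ bound for the row-containment variant of $\BMPV$, taking it as established elsewhere in the paper and deriving the radius bound by a graph gadget. Failing that, I would use a direct adversary argument with random bipartite structure, losing only polylog factors, which matches the $\widetilde\Theta$ in the statement.
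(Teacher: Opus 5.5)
\textbf{Upper bound.} Your upper bound is fine and is exactly the paper's argument: nested search, with the H{\o}yer--Mosca--de Wolf search on bounded-error inputs, so there is not even a logarithmic loss.

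\textbf{Lower bound: the gap.} Your lower bound is not a proof. You correctly diagnose two things: \(\OR\circ\AND\circ\OR\) on independent bits needs \(n^3\) variables while a graph offers only \(O(n^2)\), and disjoint blocks give at most about \(n\). What follows, though, are plans rather than arguments.
\begin{itemize}
\item The ``tensor product of optimal adversaries'' does not apply once the middle layer is shared. Multiplicativity of \(\Adv\) is a statement about compositions on disjoint variables, and \(\bigvee_u\bigwedge_v\bigvee_w(M_{uw}\booland M_{wv})\) is not such a composition.
\item The ``design-like'' restriction in which distinct \((u,v)\) query nearly disjoint edge sets cannot exist at full strength. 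Inner ORs of length \(\Theta(n)\) over \(n^2\) pairs \((u,v)\) would need \(\Theta(n^3)\) edges, so each edge must serve about \(n\) pairs.
\item Your fallback, reducing from the row-containment variant ``taking it as established elsewhere,'' assumes the conclusion in matrix form.
\item The graph gadget for the shared version is never specified.
\end{itemize}

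\textbf{The missing idea.} Let a known lower bound absorb the heavy sharing. Non-surjectivity is \(\exists x\,\forall j: f(j)\ne x\), which matches the two outer quantifiers. With \(\gamma(x)=(c(x),\overline{c(x)})\) of length \(\ell=O(\log n)\), one has \(f(j)\ne x\iff\bigvee_{b:\gamma(x)_b=0}\gamma(f(j))_b\).

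So take a \emph{fixed} \(A\) whose row \(x\) marks the zeros of \(\gamma(x)\), and let column \(j\) of \(B\) hold \(\gamma(f(j))\). This gives a function \(F_q\) on only \(O(n\log n)\) bits with \(Q(F_q)=\Omega(q)\) by Beame--Machmouchi. Here column \(j\) is shared by all rows \(x\), and that sharing is handled entirely by the surjectivity bound.

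Only then compose. Replace each bit of \(B\) by an OR of \(\kappa=\Theta(n/\log n)\) fresh variables, repeating the corresponding column of \(A\) over a block of middle indices. This is a genuine composition \(F_q\circ\OR_\kappa\) on disjoint variables and fits into \(O(n^2)\) entries of \(B\). Multiplicativity of \(\Adv\) then gives \(\Omega(q\sqrt\kappa)=\Omega(n^{3/2}/\sqrt{\log n})\) for \(\BMPRS_n^{\ones}\). The \(\sqrt{\log n}\) loss comes from the encoding length.

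\textbf{The gadget.} Finally, use a layered graph on \(3n+3\) vertices:
\begin{itemize}
\item row vertices, middle vertices, and column vertices, with input edges given by \(A\) and \(B\);
\item a hub \(\alpha\) adjacent to all row vertices and to a pendant vertex \(\alpha'\);
\item a hub \(\beta\) adjacent to all row and middle vertices.
\end{itemize}
In this graph the vertices of eccentricity at most two are exactly the row vertices whose row of \(A\boolprod B\) is all ones. This transfers the bound to \(\RAD_{\le2,n}\).
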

These results give a polynomial separation between the quantum query complexities of deciding whether the radius is at most two and whether the diameter is at most two.
Interestingly, such a separation is not known in the classical time complexity setting. Note that the main difference between the two problems is a change of quantifier. While $\DIAM_{\le2,n}$ corresponds to $\BMPV_n$ with the all-ones target matrix, $\RAD_{\le2,n}$ corresponds to the all-ones-target case of Boolean Matrix Product Row Search (\(\BMPRS_n\)), which asks whether \emph{some} row of \(A\boolprod B\) equals a given vector \(v\). For \(v=\ones_n\), this becomes
\begin{equation}
  \BMPRS_n(A,B,\ones_n)=1
  \quad\Longleftrightarrow\quad
  \exists i\in[n]\,\forall j\in[n]\,\exists t\in[n]:
  A_{it}=B_{tj}=1.
  \label{eq:intro-exists-forall-exists}
\end{equation}
Our results also show that $\BMPRS$ 
has quantum query complexity $\widetilde \Theta (n^{3/2})$, yielding the same polynomial separation from $\BMPV$.


\subsection{Proof overview}
\label{sec:proof-overview}

\Cref{fig:problem-landscape} summarizes our results.

\begin{figure}[!htbp]
  \centering
  \begin{tikzpicture}[
      problem/.style={
        draw,
        rounded corners=2pt,
        minimum height=8mm,
        text width=28mm,
        align=center,
        inner xsep=3pt,
        font=\small
      },
      family/.style={
        draw=black!30,
        fill=black!2,
        rounded corners=3pt,
        inner xsep=8pt,
        inner ysep=7pt
      },
      family label/.style={font=\footnotesize\bfseries},
      bounds/.style={font=\footnotesize,anchor=north},
      edge label/.style={
        fill=white,
        inner sep=1.5pt,
        align=center,
        font=\footnotesize
      },
      equivalence/.style={<->,>=Stealth,semithick},
      conceptual/.style={dashed,black!55}
    ]
    \node[problem] (ov) at (0,0) {\(\OV\)};
    \node[problem,left=28mm of ov] (bmpv) {\(\BMPV\)};
    \node[problem,right=28mm of ov] (diameter) {
      Diameter at Most Two
    };
    \node[bounds] (pairbounds) at ([yshift=-4mm]ov.south) {
      Lower bound: \(\Omega(n^{5/4})\)
      \qquad Upper bound: \(\widetilde O(n^{17/12})\)
    };

    \coordinate (existcenter) at ([yshift=-44mm]ov.center);
    \node[problem] (bpr) at (bmpv |- existcenter) {Boolean\\Matrix Product\\Row Search};
    \node[problem] (radius) at (diameter |- existcenter) {Radius at Most Two};
    \coordinate (existbase) at (ov |- bpr.south);
    \node[bounds] (existbounds) at ([yshift=-4mm]existbase) {
      Lower bound: \(\Omega(n^{3/2}/\sqrt{\log n})\)
      \qquad Upper bound: \(O(n^{3/2})\)
    };

    \begin{scope}[on background layer]
      \node[family,fit=(bmpv)(ov)(diameter)(pairbounds),
        label={[family label]north:Pairwise intersection problems}] {};
      \node[family,fit=(bpr)(radius)(existbounds),
        label={[family label]north:Existential family}] {};
    \end{scope}

    \draw[equivalence] (bmpv) -- (ov);
    \draw[equivalence] (ov) -- (diameter);
    \draw[equivalence] (bpr) -- node[edge label,above] {\(v=\ones_n\)} (radius);

    \coordinate (lefttop) at ([xshift=-6mm]bmpv.west);
    \coordinate (leftbottom) at ([xshift=-6mm]bpr.west);
    \draw[conceptual]
      (bmpv.west) -- (lefttop) --
      node[edge label,rotate=90] {every row versus some row}
      (leftbottom) -- (bpr.west);
    \coordinate (righttop) at ([xshift=6mm]diameter.east);
    \coordinate (rightbottom) at ([xshift=6mm]radius.east);
    \draw[conceptual]
      (diameter.east) -- (righttop) --
      node[edge label,rotate=-90] {every vertex versus some vertex}
      (rightbottom) -- (radius.east);
  \end{tikzpicture}
  \caption{Query bounds and relationships among the problems.  Solid arrows
  indicate equivalence in quantum query complexity up to constant-factor
  changes in problem dimension.
  The lower arrow restricts \(\BMPRS\) to the all-ones target.
  Dashed connectors change ``every''
  to ``some,'' fixing \(C=\J_n\) and \(v=\ones_n\) on the matrix side.
  Graphs use adjacency matrix queries.}
  \label{fig:problem-landscape}
\end{figure}

\subsubsection{Reductions and the lower bound}

\paragraph{Reductions to Orthogonal Vectors.}
Our central connection is with Orthogonal Vectors.  The problem
\(\OV_n\) asks whether an indexed list of \(n\) vectors in
\(\bits^n\) contains two vectors at distinct positions with disjoint
supports.  

The inequality 
\begin{equation}
  Q(\OV_n)\le Q(\BMPV_n)
  \label{eq:intro-bmpv-1}
\end{equation}
is straightforward. We show
\begin{equation}
  Q(\OV_n)\le Q(\BMPV_n)
  =O\!\left(n^{5/4}+Q(\OV_{2n+2})\right).
  \label{eq:intro-bmpv-2}
\end{equation}
by observing (see also \cite[Section~4.3]{Kothari2014}) that
\[
  A\boolprod B=C
  \quad\Longleftrightarrow\quad
  A\boolprod B\le C\ \text{ and }\ A\boolprod B\ge C,
\]
where \(X\le Y\) means that 
\(X_{ij}\le Y_{ij}\) for every \(i,j\), and \(X\ge Y\) is defined analogously.
A failure of the first inequality supplies a triple \((i,t,j)\) with
\(A_{it}=B_{tj}=1\) and \(C_{ij}=0\), and is equivalent to a triangle
in a standard tripartite graph, which can be tested in
\(O(n^{5/4})\) queries~\cite{CaretteLauriereMagniez2017}.
A failure of the second inequality can be encoded as an
\(\OV_{2n+2}\) instance.  

Our main technical contribution is showing that $Q(\OV_n)=\Omega(n^{5/4})$ and $Q(\OV_n)=\widetilde O(n^{17/12})$.
The lower bound 
comes from surjectivity, and the upper bound combines variable-time
search with a quantum walk for detecting an orthogonal pair.
Theorems \ref{th:main_ub} and \ref{th:main_lb} follow by combining these bounds with \cref{eq:intro-bmpv-1} and \cref{eq:intro-bmpv-2}.

We also show
\begin{equation}
  Q(\DIAM_{\le2,n})\le Q(\OV_n)
  \le Q(\DIAM_{\le2,2n}),
  \label{eq:intro-diameter}
\end{equation} 
which proves Theorem \ref{th:main_diam}.

Closed neighborhoods give the reduction from diameter to \(\OV\).
For the reverse diameter reduction, represent rows by independent
vertices and columns by a clique, with incidence edges supplied
by the input bits.  Two row vertices are within two steps precisely
when their supports intersect. This construction has \(2n\) vertices.


\paragraph{A lower bound from surjectivity.}
The key idea is to encode missing outputs of a function as orthogonal
pairs of rows.  Consider a surjectivity instance
\[
  f:[2n-2]\longrightarrow[\sqrt n]^2,
\]
where, for simplicity, assume \(n\) is a perfect square.  Write
\(f(t)=(f(t)_1,f(t)_2)\).  We form a Boolean matrix with
\(2\sqrt n\) rows and \(2n-2\) columns: for each output component and
each possible value, create a row whose entry in column \(t\) is one
exactly when that component of \(f(t)\) equals that value.

For example, take \(n=4\) and the function
\[
  \begin{array}{c|cccccc}
    t & 1 & 2 & 3 & 4 & 5 & 6\\ \hline
    f(t) & (1,1) & (1,2) & (2,1) & (1,1) & (1,2) & (2,1)
  \end{array},
\]
which misses \((2,2)\).  The corresponding matrix is
\[
  \begin{array}{c|cccccc}
    & 1 & 2 & 3 & 4 & 5 & 6\\ \hline
    f(t)_1=1 & 1 & 1 & 0 & 1 & 1 & 0\\
    f(t)_1=2 & 0 & 0 & 1 & 0 & 0 & 1\\
    f(t)_2=1 & 1 & 0 & 1 & 1 & 0 & 1\\
    f(t)_2=2 & 0 & 1 & 0 & 0 & 1 & 0
  \end{array}.
\]
A row for \(f(t)_1=u\) and a row for \(f(t)_2=v\) share a one precisely
in the columns \(t\) for which \(f(t)=(u,v)\).  These two rows are
therefore orthogonal exactly when \((u,v)\) is missing from the image of
\(f\).  In the example, the second and fourth rows form such a pair.

The full reduction in \Cref{thm:ov-lower-bound} uses this encoding to
combine \(\sqrt n\) independent surjectivity instances into a single
\(\OV\) instance with \(\Theta(n)\) rows and columns.  By the surjectivity lower bound of Beame and
Machmouchi~\cite{BeameMachmouchi2012} and standard adversary
composition~\cite{LeeEtAl2011}, deciding whether any of these functions
is not surjective requires \(\Omega(n\cdot \sqrt{\sqrt{n}})=\Omega(n^{5/4})\)
queries.  This gives \(Q(\OV_n)=\Omega(n^{5/4})\).

Combining \cref{eq:intro-bmpv-2} with this lower bound gives
\[
  Q(\OV_n)\le Q(\BMPV_n)
  =O\!\left(Q(\OV_{2n+2})\right),
\]
i.e., the two problems have the same quantum query complexity up to a possible
constant-factor change in problem dimension.

\subsubsection{An upper bound for Orthogonal Vectors}

We first explain the algorithm under an intersection promise that 
every pair among the input vectors \(x_1,\ldots,x_n\in\{0,1\}^n\) 
has either \(O(\log n)\) or \(\Omega(n/\log n)\) common ones. We 
describe our quantum walk algorithm for this promise problem under 
the temporary assumption that every left list is small. We then 
describe how to remove this assumption and finally describe how to
remove the promise.

\paragraph{Preprocessing and combinatorial guarantees.}
Sampling \(O(\log^2 n)\) coordinates
and querying them in every vector reveals a common one for every pair
with a large intersection, with high probability. We discard these
pairs and let \(G\) be the known graph of surviving candidates, with
\(N_G(i)\) denoting the neighbors of \(i\). Every orthogonal pair survives,
and every surviving pair has only \(O(\log n)\) common ones.

Choose independent parameters \(s,t\in[n]\). We sample \(s\) vector
indices independently and uniformly, call the corresponding vectors
pivots, and read them completely. We call \(j\) covered at coordinate
\(k\) if it is a candidate neighbor of some pivot having a one there.
Thus, if \(x_j(k)=1\) at a covered coordinate, this one is shared with
a sampled pivot. We call \(i\) heavy at \(k\) if more than \(t\) of
its candidate neighbors remain uncovered.

The threshold \(t\) separates two situations. If \(i\) is not heavy,
only a few uncovered partners need attention at this coordinate. If
\(i\) is heavy and \(x_i(k)=1\), we store this coordinate so that it can
be reused across many partner tests. Accordingly, define
\(\Left{i}=\{k:x_i(k)=1,\ i\text{ is heavy at }k\}\).

We now argue that sampling makes the total size of the left lists
small. Fix a coordinate and consider sampling the pivots one at a time. If many
heavy indices with a one remain, the next random pivot is likely to
be one of them. Whenever this happens, more than \(t\) new indices
become covered. But only \(n\) indices can become covered altogether.

Quantitatively, the number of heavy indices having a one can only
decrease. Each draw therefore covers, in expectation, at least \(t/n\)
times the expected number of such indices remaining after all \(s\)
draws. This bounds the final expected number by \(n^2/(st)\). Summing
over coordinates gives \(\mathbb E\sum_i|\Left{i}|\le n^3/(st)\), and
hence, with constant probability, \(\sum_i|\Left{i}|=O(n^3/(st))\).

\paragraph{The quantum walk.}
We will perform a quantum walk on a Johnson graph to find a subset
containing an index of a vector with an orthogonal partner.
To explain the walk, temporarily assume the stronger statement that
every left list has size \(O(n^2/(st))\). We will remove this additional
assumption after describing the walk.

The walk maintains a subset \(I\subseteq[n]\) of size
\(r=\Theta(\sqrt n)\), together with the left lists of its indices. A
walk state is marked if some vector indexed by \(I\) has an orthogonal
partner anywhere in the input. Only one endpoint of an orthogonal
pair needs to be in the maintained subset.

An update replaces one index and loads its left list, keeping the
information about all other indices. Since heaviness is already known,
membership in a left list requires just one input query. Quantum
enumeration therefore loads a list in \(\Ot(\sqrt{n^3/(st)})\)
queries. The setup cost, for loading all \(r\) lists, is
\(\Ot(n^2/\sqrt{st})\), and the update cost, for loading one new
list, is \(\Ot(\sqrt{n^3/(st)})\).

If an orthogonal pair exists, a random walk state contains a fixed
endpoint with probability \(r/n\). By the MNRS bound, with our choice
of \(r\), the query cost of the quantum walk is
\(\Ot(n^2/\sqrt{st}+n^{1/4}C)\), where \(C\) is the cost of checking
a walk state.

We now describe the checking procedure. Fix the maintained subset \(I\) and a
possible partner \(j\). We first load the right list \(\Right{j}\),
consisting of the coordinates where \(x_j\) has a one and \(j\) is
covered. This list is small: every coordinate in it is a common one
of \(x_j\) and some pivot adjacent to \(j\). Each such pair has only
\(O(\log n)\) common ones, so \(|\Right{j}|=O(s\log n)\). Quantum
enumeration therefore loads it in \(\Ot(\sqrt{ns})\) queries. The right 
list can now be reused for every candidate \(i\in I\cap N_G(j)\).

For a particular pair, let \(\Remaining{ij}\) contain the coordinates
where \(i\) is not heavy and \(j\) is uncovered. Every common one lies
in \(\Left{i}\), \(\Right{j}\), or \(\Remaining{ij}\). Indeed,
heaviness puts it in the left list, coverage puts it in the right
list, and the remaining set handles the case where neither occurs.
Grover search over this union therefore tests orthogonality using
\(\Ot(\sqrt{n^2/(st)+s+|\Remaining{ij}|})\) queries.

For each \(i\), we have
\(\sum_{j\in N_G(i)}|\Remaining{ij}|\le nt\). Indeed, at each
coordinate where \(i\) is not heavy, at most \(t\) candidate neighbors
are uncovered; summing over the \(n\) coordinates gives the bound.
Thus the total number of remaining coordinates across candidate pairs
involving \(i\) is at most \(nt\).

Applying variable-time quantum search over indices \(i\in I\) and then
over partners \(j\), we obtain
\[
 C=\Ot\!\left(n\sqrt s+
       \sqrt{nr\left(\frac{n^2}{st}+t\right)}\right).
\]
The first term accounts for loading right lists; the second accounts
for searching the sets of coordinates associated with the pairs.
Substituting this into the walk bound gives
\(\Ot(n^2/\sqrt{st}+n\sqrt t+n^{5/4}\sqrt s)\), in addition to
reading the pivots.

\paragraph{Removing the list size assumption.}
We now remove the assumption that every left list is small. The
bound on the total size already tells us that large lists cannot occur too
often. We make this useful by grouping indices according to
approximate list sizes.

Using \(\Ot(n^2/t)\) additional queries, we assign each index \(i\)
a size label \(\mathsf{size}_i\). The size labels are powers of two
and satisfy
\(\max\{2t,|\Left{i}|\}\le\mathsf{size}_i\le8(|\Left{i}|+t)\).
Let the size class \(V_a\) contain the indices with size label \(a\),
and let \(n_a=|V_a|\). The quantity \(a n_a\) is the sum of the size
labels in \(V_a\), so
\[
 a n_a\le\sum_i\mathsf{size}_i
 =O\!\left(nt+\sum_i|\Left{i}|\right)
 =O\!\left(nt+\frac{n^3}{st}\right).
\]
Consequently, a size class with more expensive lists has fewer indices
to search.

We run the same walk on each size class, now maintaining
\(r=\Theta(\sqrt{n_a})\) indices. Partners are still searched for
throughout the input. The cost of searching a size class is
\(\Ot(\sqrt{n\,a\,n_a}+n\sqrt s\,n_a^{1/4})\).

There are only \(O(\log n)\) size classes. Including preprocessing, the
total cost is
\[
 \Ot\!\left(ns+\frac{n^2}{t}+\frac{n^2}{\sqrt{st}}
       +n\sqrt t+n^{5/4}\sqrt s\right).
\]
Choosing \(s=\lceil n^{1/3}\rceil\) and
\(t=\lceil n^{5/6}\rceil\) balances the three walk terms at
\(n^{17/12}\).

\paragraph{Removing the intersection promise.}
Without the promise, sampling a few coordinates no longer guarantees
that every surviving pair has a small intersection. Consequently, a
pivot may contribute many coordinates to a right list, and the bound
on the cost of loading these lists fails. We fix this by using only
the first common one of each pair: one common one already suffices
to rule out orthogonality.

At coordinate \(k\), we regard \(j\) as a candidate neighbor of \(i\)
if \(j\ne i\) and their vectors have no common one before \(k\).
Coverage is defined using these candidate neighborhoods, which
depend on the coordinate. Thus, if \(x_j(k)=1\) and \(j\) is covered at \(k\),
then \(k\) must be the first common one of \(x_j\) and some pivot.
Each pivot contributes at most one such coordinate, restoring the
bound \(|\Right{j}|\le s\).

The candidate neighborhoods are now implicit, so we cannot determine
heaviness from a known graph. Instead, we fully read an independent
sample of \(\Ot(n/t)\) vectors and use it to estimate the number of
uncovered candidate neighbors. We choose the threshold so that, with
high probability, every declared heavy index has more than \(t\)
such neighbors, while every index not declared heavy has at most
\(4t\). We use these notions of coverage and heaviness in the same
definitions of the left lists, right lists, and remaining sets.
The counting argument for pivots still gives
\(\sum_i|\Left{i}|=O(n^3/(st))\), so the same size classes and
Johnson walks remain applicable.

To evaluate coverage and heaviness, we store each vector's first
common ones with the sampled vectors. These records can be loaded
in \(\Ot(\sqrt{n(s+n/t)})\) queries: we search for successive
coordinates that determine a previously unknown first intersection,
and each successful search resolves at least one sampled vector.
There are \(\Ot(s+n/t)\) such searches, and their coordinate
intervals have total length at most \(n\).

One further change is needed in the checking procedure. The counting
argument now bounds only remaining coordinates up to the first
common one. Writing \(\First{i}{j}\) for this coordinate, or
\(n+1\) if the pair is orthogonal, we have
\[
 \sum_{j\ne i}
 \bigl|\{k\in\Remaining{ij}:k\le\First{i}{j}\}\bigr|\le4nt.
\]
We search \(\Left{i}\cup\Right{j}\cup\Remaining{ij}\) in increasing
coordinate order, using successively larger prefixes. Up to
logarithmic factors, the squared query cost is bounded by the number
of positions through the first common one, or the entire set size
plus one if the pair is orthogonal. For each fixed partner \(j\),
variable-time quantum search over \(i\in I\setminus\{j\}\) costs
the square root of the sum of these squared costs. The displayed
inequality bounds the contribution from remaining coordinates,
summed over all partners. We estimate the cost of checking each
partner \(j\), group partners with comparable estimated costs, and
apply quantum search separately within each group.

Accounting for these estimates and the additional record loading
gives the total query bound
\[
 \Ot\!\left(ns+n\sqrt t+\frac{n^2}{\sqrt{st}}
       +n^{5/4}\sqrt s+\frac{n^{9/4}}{t}\right).
\]
The same choices \(s=\lceil n^{1/3}\rceil\) and
\(t=\lceil n^{5/6}\rceil\) give \(\Ot(n^{17/12})\) for the
general problem.

\subsubsection{Boolean Matrix Product Row Search and radius}
For \(\BMPRS\) with \(v=\ones_n\), encode the predicate \(f(j)\ne x\) by an
intersection of two complemented binary encodings.  An all-ones product row then
represents a value \(x\) missing from \(f\).  Replacing each encoded
bit by an OR of \(\Theta(n/\log n)\) variables and applying adversary
composition yields \(\Omega(n^{3/2}/\sqrt{\log n})\) queries.
For every target \(v\), the matching upper bound up to a logarithmic
factor follows by robustly evaluating \Cref{eq:bmprs-formula}, using
quantum search to compute each product entry.

Closed neighborhoods give the reduction from radius to
\(\BMPRS\) with \(v=\ones_n\). For the reverse reduction, a layered
incidence construction~\cite{AbboudVassilevskaWilliamsWang2016} makes the
possible centers correspond to all-ones rows of \(A\boolprod B\).
This construction has \(3n+3\) vertices.

\subsection{Other related work}
\label{sec:related-work}

\paragraph{Boolean matrix products.}
Quantum algorithms for matrix products and verification are surveyed by
Kothari and Nayak~\cite{KothariNayak2016} and by Le
Gall~\cite{LeGall2014Survey}.  
Bun, Kothari, and Thaler related stronger lower bounds for linear-size
depth-two circuits to improved lower bounds for
\(\BMPV\)~\cite{BunKothariThaler2021}, and Ding explored polynomial
and span program approaches to Boolean matrix
problems~\cite{Ding2016}.  In classical fine-grained complexity,
Vassilevska Williams and Williams relate Boolean matrix multiplication,
triangle detection, and product verification through truly subcubic
combinatorial reductions~\cite{VassilevskaWilliamsWilliams2018}.

Quantum time--space tradeoffs for computing Boolean matrix products
were studied by Klauck, \v{S}palek, and de Wolf~\cite{KlauckSpalekDeWolf2007}
and strengthened by Beame, Kornerup, and
Whitmeyer~\cite{BeameKornerupWhitmeyer2026}.  These results concern
computation with a space constraint; we study decision problems and
charge only input queries.

\paragraph{Orthogonal Vectors.}


Orthogonal Vectors is a central source of classical fine-grained lower
bounds, with a broad equivalence class studied by Chen and
Williams~\cite{ChenWilliams2019}. Much of this classical literature focuses on
vectors of dimension $d=O(\log n)$. The regime $d=n$ has also been
studied: Dalirrooyfard and Kaufmann discuss high-dimensional
Orthogonal Vectors, allowing dimension up to $O(n)$, and its
connection to distinguishing diameter two from three in dense
graphs~\cite{DalirrooyfardKaufmann2021}.

In the quantum setting, 
Aaronson, Chia, Lin, Wang, and Zhang studied quantum Orthogonal Vectors
with entire-vector queries in polylogarithmic dimension and derived
conditional quantum time lower bounds~\cite{AaronsonEtAl2020}.
Our model instead charges for individual entries and uses dimension~$n$, so these quantum bounds are not directly comparable.
For Orthogonal Vectors in dimension $d=\omega(\log n)$,
Buhrman, Patro, and Speelman developed a QSETH framework connecting
black-box query lower bounds to conditional quantum time lower
bounds, and discussed its application to Orthogonal Vectors with
$d=\omega(\log n)$~\cite{BuhrmanPatroSpeelman2021}.
Van Renterghem obtained an unconditional $\Omega(n^{2/3})$
quantum query lower bound for Orthogonal Vectors with
$d=\omega(\log n)$ via a reduction from the two-to-one collision
problem~\cite{VanRenterghem2019}.
This lower bound applies, in particular, to the regime $d=n$
considered here.


\paragraph{Graph problems.}

Weso{\l}owski and Bao study quantum query algorithms and lower bounds
for eccentricity, radius, and diameter in the adjacency list
model~\cite{WesolowskiBao2025}.  
They give $\widetilde O(n\sqrt m)$-time quantum algorithms for
computing the exact diameter and radius, together with corresponding
paths, and prove $\Omega(\sqrt{nm})$ quantum query lower bounds,
where $m$ denotes the number of edges.
Our results concern unweighted graphs accessed through adjacency
matrix queries and the decision problems of testing whether diameter
or radius is at most two, so the bounds are not directly comparable.

\subsection{Open problems}

The main open problem is to close the gap between our
\(\Omega(n^{5/4})\) lower bound and \(\widetilde O(n^{17/12})\)
upper bound for Boolean Matrix Product Verification. Our reductions
show that this problem, Orthogonal Vectors, and deciding whether a
graph has diameter at most two have equivalent quantum query
complexity, up to constant factor changes in instance size. Thus,
improvements for any one of these problems will imply an improvement 
for all of them.
In particular, a lower bound of \(\Omega(n^{5/4+\varepsilon})\) for
some constant \(\varepsilon>0\) would establish a polynomial
separation from triangle finding, which admits an \(O(n^{5/4})\)
query algorithm \cite{LeGall2014Triangle,CaretteLauriereMagniez2017}.

Another direction is to understand the dependence on the vector
dimension. What is the quantum query complexity of Orthogonal
Vectors on \(n\) vectors of dimension \(d\), as a function of both
parameters? Our results address \(d=n\); it would be interesting
to obtain tight bounds when these parameters differ substantially,
as well as for verifying rectangular Boolean matrix products.

Our upper bound concerns query complexity and allows unrestricted
computation on acquired data. Can a quantum algorithm achieve a
running time comparable to our query bound?

Finally, does the polynomial separation between deciding whether
a graph has diameter at most two and whether it has radius at most
two persist when two is replaced by any larger fixed integer?

\paragraph{Organization.}
\Cref{sec:preliminaries} fixes the model and notation.
\Cref{sec:bmpv-and-ov} shows equivalences of Boolean Matrix Product
Verification, Orthogonal Vectors and the Diameter at Most Two 
problems and proves lower bounds for them.
\Cref{sec:ov-upper-bound} develops the upper
bound, first under an intersection promise and then for general inputs.
\Cref{sec:existential-row} gives lower bounds for Boolean Matrix 
Product Row Search and Radius at Most Two problems.
\section{Preliminaries}
\label{sec:preliminaries}

\subsection{Model and problem definitions}

For a positive integer \(n\), write \([n]\coloneqq\{1,\ldots,n\}\).
We write \(\J_{a,b}\) and \(\zeros_{a,b}\)
for the \(a\times b\) all-ones and all-zeros matrices, respectively.  We
abbreviate
\(\J_n\coloneqq\J_{n,n}\), write \(\ones_n\in\bits^n\) for the
all-ones vector, and write \(\I_n\) for the \(n\times n\) identity
matrix.  For \(X\in\bits^{a\times b}\), write
\(\overline X\coloneqq\J_{a,b}-X\) for its entrywise complement.
For any matrix \(X\), write \(X_{i,*}\) for its \(i\)-th row.
The Boolean product of \(A\in\bits^{a\times b}\) and
\(B\in\bits^{b\times c}\) is
\[
  (A\boolprod B)_{ij}
  \coloneqq\bigvee_{t\in[b]}(A_{it}\booland B_{tj}).
\]
For a Boolean vector \(x\), let
\(\supp(x)\coloneqq\{t:x(t)=1\}\).
For a proposition \(P\), \(\ind{P}\) denotes its indicator.

\paragraph{Query model and oracle reductions.}

A Boolean matrix \(X\in\bits^{a\times b}\) is accessed through the
standard entry oracle
\[
  O_X\ket{i,j,z}\coloneqq\ket{i,j,z\oplus X_{ij}}.
\]
For several input matrices or vector families, an additional register
selects the input object.  For a (possibly partial) Boolean function
\(f\from\mathcal D\to\bits\), we write \(Q(f)\) for the minimum number of
queries to the input oracle made by a quantum algorithm that, for every
\(x\in\mathcal D\), outputs \(f(x)\) with probability at least \(2/3\).
This is its bounded-error quantum query complexity; in particular,
\(Q(1-f)=Q(f)\).

An oracle reduction has \emph{constant query overhead} if its target
oracle can be simulated coherently using \(O(1)\) input queries per target
query.  Unless stated otherwise, every reduction in this paper has this
property.

All computation on acquired data and all
input-independent unitaries are free in this model; our upper bounds
concern query complexity.

\Needspace{6\baselineskip}
\paragraph{Orthogonal Vectors.}

\begin{definition}[Orthogonal Vectors]
\label{def:ov}
Given an indexed list \((x_1,\ldots,x_m)\) of vectors in \(\bits^d\),
\(\OV_{m,d}\) accepts if there are distinct indices \(i,j\in[m]\)
with
\[
  \supp(x_i)\cap\supp(x_j)=\varnothing.
\]
We call such a pair orthogonal and abbreviate
\(\OV_n=\OV_{n,n}\).  Equal bit strings at different
positions are distinct choices.
\end{definition}

Orthogonal Vectors is also commonly formulated with two lists, asking
for one vector from each list with disjoint supports~\cite{ChenWilliams2019}.
The following elementary reductions relate this convention to
\Cref{def:ov}.

\begin{lemma}[Equivalent formulations of Orthogonal Vectors]
\label{lem:ov-formulations}
Two lists of \(m\) vectors in \(\bits^d\) reduce to
\(\OV_{2m,d+2}\), and \(\OV_{m,d}\) reduces to two
lists of \(m\) vectors in dimension \(d+m\).  Each queried target bit
is fixed or copies one input bit.
\end{lemma}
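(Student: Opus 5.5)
We give both constructions explicitly and check that orthogonal pairs correspond exactly. In each one, every target bit is either a constant or a copy of a single input bit, so target queries are simulated with at most one input query.

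\emph{Two lists to one list.} Let the lists be \(y_1,\dots,y_m\) and \(z_1,\dots,z_m\) in \(\bits^d\). Form \(2m\) vectors in \(\bits^{d+2}\):
\[
  x_i=(y_i,1,0)\quad\text{and}\quad x_{m+i}=(z_i,0,1)\qquad (i\in[m]).
\]
Any two vectors from the first group share coordinate \(d+1\), and any two from the second group share coordinate \(d+2\). So no orthogonal pair lies inside one group. A pair \((x_i,x_{m+j})\) is disjoint on the last two coordinates. Hence it is orthogonal iff \(\supp(y_i)\cap\supp(z_j)=\varnothing\). Thus the one-list instance accepts iff the two-list instance does. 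Each target bit is either a copy of \(y_i(t)\) or \(z_i(t)\), or a constant.

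\emph{One list to two lists.} Given \(x_1,\dots,x_m\in\bits^d\), we must exclude the trivial pairing of \(x_i\) with itself, which matters when \(x_i=\zeros\). Use \(m\) extra coordinates as an identity tag. Set
\[
  y_i=(x_i,e_i),\qquad z_j=(x_j,\ones_m-e_j),
\]
where \(e_i\) is the \(i\)-th standard basis vector. On the tag coordinates, \(e_i\) and \(\ones_m-e_j\) intersect iff \(i\ne j\).

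This tag has the wrong polarity. For \(i\ne j\) the tags always share a one, so every cross pair is blocked and only \(i=j\) survives. We need the reverse. Take instead \(y_i=(x_i,e_i)\) and \(z_j=(x_j,e_j)\). Their tags intersect exactly when \(i=j\). Therefore \((y_i,z_j)\) is orthogonal iff \(i\ne j\) and \(\supp(x_i)\cap\supp(x_j)=\varnothing\), which is precisely \Cref{def:ov}. The dimension is \(d+m\), and each bit is a copy of an input bit or a constant.

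The only delicate point is choosing the tag so that it blocks exactly the diagonal pairs \(i=j\) and no others. With \(e_i\) in both lists this holds, and the rest is direct verification.
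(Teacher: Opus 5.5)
Your proposal is correct and follows the paper's proof: the paper also appends $(1,0)$ and $(0,1)$ to the two lists for the first reduction, and for the second appends the unit vector $e_i$ to position $i$ in both copies of the list. In a write-up, drop the discarded attempt with the complemented tag, since only the final choice $y_i=(x_i,e_i)$, $z_j=(x_j,e_j)$ is needed.
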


\begin{proof}
View the vectors as rows of a matrix.  For the first reduction, combine
the two lists and append \((1,0)\)
to every vector in the first list and \((0,1)\) to every vector in the
second.  The new columns exclude pairs from the same list and
preserve every cross-list intersection test.
For the reverse reduction, make two copies of the input list and append
the \(i\)-th unit vector of \(\bits^m\) to position \(i\) in both copies.
The added columns make the two copies of position \(i\) intersect,
and create no intersection between distinct positions.
\end{proof}

Hence bounds for list lengths and dimensions within constant factors of
\(n\) transfer between the formulations.  We also use the following
padding facts.

\begin{remark}[Padding]
\label{rem:ov-padding}
Appending all-zero columns to the matrix of input vectors preserves the
answer to \(\OV\).  So does appending copies of vectors already in the list,
provided the list has at least two positions: two copies of the same
vector are orthogonal only if that vector is zero, and a zero vector is
already orthogonal to every other vector in the list.
\end{remark}

For \(X\in\bits^{n\times d}\) with \(n\ge2\),
\begin{equation}
  \OV(X)=0
  \quad\Longleftrightarrow\quad
  X\boolprod X^\transpose=\J_n.
  \label{eq:no-ov-product}
\end{equation}
The off-diagonal product entries express pairwise intersection.
If no two distinct rows are orthogonal, every row is nonzero, so the
diagonal entries are also one.

\paragraph{Boolean product and graph predicates.}

\begin{definition}[Boolean Matrix Product Verification]
\label{def:bmpv}
For \(A,B,C\in\bits^{n\times n}\),
\[
  \BMPV_n(A,B,C)\coloneqq\ind{A\boolprod B=C}.
\]
We also use
\[
  \BMPV_n^{\le}(A,B,C)\coloneqq
  \ind{A\boolprod B\le C},
  \qquad
  \BMPV_n^{\ge}(A,B,C)\coloneqq
  \ind{A\boolprod B\ge C}.
\]
\end{definition}

For a simple undirected \(n\)-vertex graph \(G\), let \(A_G\) be its
zero-diagonal adjacency matrix.  Define its closed neighborhood matrix by
\[
  \widehat A_G\coloneqq\I_n\boolor A_G,
\]
where the OR is entrywise.  The input oracle supplies the off-diagonal
entries of \(A_G\); symmetry is promised.

For \(u,v\in V(G)\), let \(\operatorname{dist}_G(u,v)\) be the length of a
shortest path between \(u\) and \(v\), or \(\infty\) if they lie in different
components.
For \(n\)-vertex graphs, define
\[
  \DIAM_{\le2,n}(G)\coloneqq\ind{\operatorname{diam}(G)\le2},
  \qquad
  \RAD_{\le2,n}(G)\coloneqq\ind{\operatorname{rad}(G)\le2}.
\]
The basic identity
\begin{equation}
  (\widehat A_G\boolprod\widehat A_G)_{uv}=1
  \quad\Longleftrightarrow\quad
  \operatorname{dist}_G(u,v)\le2
  \label{eq:closed-neighborhood-square}
\end{equation}
will be used in both graph applications.  In particular,
\begin{equation}
  \DIAM_{\le2,n}(G)=1
  \quad\Longleftrightarrow\quad
  \widehat A_G\boolprod\widehat A_G=\J_n.
  \label{eq:diameter-closed-square}
\end{equation}

Finally, \(\Triangle_n\) denotes triangle detection in an \(n\)-vertex
graph under adjacency matrix access.

\subsection{Quantum algorithmic tools}
\label{sec:quantum-tools}

\paragraph{Search and access to implicit sets.}
We collect the search, enumeration, and walk procedures used in the
upper bounds. All index sets and numerical bounds below have size
polynomial in the input parameter \(n\). Within \(\widetilde O\)
bounds, subroutines are amplified to inverse polynomial error,
implemented coherently, and stopped at their stated query budgets.
Grover search and amplitude amplification find a marked element of
\([M]\), or report that none exists, using \(O(\sqrt M)\) predicate
evaluations~\cite{BrassardHoyerMoscaTapp2002}. We use the following
standard consequences of amplitude estimation and quantum
enumeration~\cite{BrassardHoyerMoscaTapp2002,JefferyKothariMagniez2012}.

\begin{lemma}[Access to an implicit set]
\label{lem:implicit-set-access}
Let \(A\subseteq[M]\), with \(m=|A|\), have a membership test costing
\(q\) input queries. The following tasks can be performed with
inverse polynomial error:
\begin{enumerate}
\item Distinguish \(m=0\) from \(m>0\), and in the latter case obtain
      \(\widehat m\in[m/2,2m]\), in \(\widetilde O(q\sqrt M)\) queries.
\item Given such an estimate for \(m>0\), prepare the uniform
      superposition over \(A\), to inverse polynomial accuracy, in
      \(\widetilde O(q\sqrt{M/m})\) queries.
\item Given \(m\le K\), enumerate \(A\) in
      \(\widetilde O(q\sqrt{M(K+1)})\) queries.
\end{enumerate}
\end{lemma}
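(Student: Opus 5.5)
The three parts of \Cref{lem:implicit-set-access} follow from amplitude amplification and amplitude estimation~\cite{BrassardHoyerMoscaTapp2002}. The basic primitive is the state preparation \(\frac{1}{\sqrt M}\sum_{a\in[M]}\ket{a}\ket{\ind{a\in A}}\). It is built by applying the membership test coherently (with uncomputation of garbage) to the uniform superposition. Each use costs \(O(q)\) queries, and the success amplitude is \(\sqrt{m/M}\). Throughout, the membership test is assumed to be exact, or amplified to inverse polynomial error at an \(O(\log n)\) overhead, which is absorbed into \(\widetilde O\).

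For part~1, I will use exponential search with amplitude estimation. For \(\ell=0,1,\dots,\ceil{\log_2 M}\), run amplitude estimation with \(2^\ell\) iterations, repeated \(O(\log n)\) times with the median taken. The standard error bound gives \(|\widehat m-m|\le O(\sqrt{mM}/2^\ell+M/4^\ell)\). Once \(2^\ell\ge c\sqrt{M/m}\), this is at most \(m/2\), so \(\widehat m\in[m/2,2m]\). The stopping rule is to halt at the first \(\ell\) whose estimate is large enough relative to \(M/4^\ell\) to certify this accuracy. If \(m=0\), the estimate is exactly \(0\) at every stage, so after the last stage (\(2^\ell\approx\sqrt M\)) we report \(m=0\). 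The total cost is geometric, \(\widetilde O(q\sqrt{M/m})\le\widetilde O(q\sqrt M)\).

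For part~2, I will use fixed-point or exact amplitude amplification. Given \(\widehat m\) with \(\widehat m\in[m/2,2m]\), the success probability \(m/M\) is known within a factor of \(4\). Fixed-point amplitude amplification (or a known-angle exact variant after padding) therefore maps the state to the marked component \(\frac1{\sqrt m}\sum_{a\in A}\ket a\) to inverse polynomial accuracy. This uses \(\widetilde O(\sqrt{M/m})\) rounds, hence \(\widetilde O(q\sqrt{M/m})\) queries.

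For part~3, I will enumerate elements one at a time by Grover search with a shrinking target set. Having found \(a_1,\dots,a_{k-1}\), search for an element of \(A\setminus\{a_1,\dots,a_{k-1}\}\), whose membership test also costs \(q\). Its size is \(m-k+1\), so with randomized-iteration Grover search the expected cost is \(O(q\sqrt{M/(m-k+1)})\), and \(O(q\sqrt M)\) certifies emptiness at the end. Summing gives \(q\sqrt M\sum_{r=1}^m r^{-1/2}+q\sqrt M=O(q\sqrt{M(m+1)})\le O(q\sqrt{M(K+1)})\). The main subtlety is converting expected cost into a worst-case budget and keeping the procedure coherent. I will handle this with Markov's inequality and truncation at a constant multiple of the budget \(q\sqrt{M(K+1)}\), with the knowledge \(m\le K\) bounding the total. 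This is followed by \(O(\log n)\) repetitions, and error is reduced by taking the union of the outputs after verifying membership of each reported element. Coherent implementation is standard: all intermediate measurements are deferred, and the bounded, fixed number of queries makes the routine a unitary with inverse polynomial error, as stated in the conventions of \Cref{sec:quantum-tools}.
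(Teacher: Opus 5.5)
Your proposal is correct and follows essentially the paper's route: the paper gives no argument of its own and cites these three facts as standard consequences of amplitude estimation, amplitude amplification, and quantum enumeration by repeated Grover search, which is what you sketch. Two small corrections: in part~1 the stages must stop at \(2^\ell=\Theta(\sqrt M)\), as your parenthetical says, since running up to \(\ell=\lceil\log_2 M\rceil\) would cost \(\Theta(qM)\) when \(m=0\); and in part~2 you should rely on fixed-point amplification, because a factor-\(4\) estimate of \(m\) does not determine the rotation angle that the exact known-angle variant requires.
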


\paragraph{Searching in coordinate order.}
If the first marked position in an ordered domain of length \(M\) is
at position \(g\), it can be located in \(\widetilde O(\sqrt g)\)
predicate evaluations. Search consecutive intervals of doubling
length until one contains a mark, then locate its first mark by binary
subdivision. If no position is marked, return an end marker at
\(M+1\), with \(g=M+1\).

\begin{lemma}[Successive marked positions]
\label{lem:successive-discoveries}
Consider a procedure that searches for the next marked position after
a cursor in \([M]\), advances the cursor to that position, and stops
after at most \(K\) marked positions or upon reaching the end marker.
The predicate may change after each marked position, provided previously
passed positions need not be revisited. If a predicate evaluation
uses \(O(1)\) queries, the entire procedure costs
\(\widetilde O(\sqrt{(M+1)(K+1)})\) queries.
\end{lemma}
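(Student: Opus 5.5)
The plan is to split the run into phases, one per discovered marked position, and then bound the total cost using Cauchy--Schwarz on the phase lengths. Let the cursor positions be \(0=c_0<c_1<\dots<c_{K'}\le M+1\), where \(K'\le K+1\). Here \(c_\ell\) for \(\ell<K'\) are the successive marked positions found, and \(c_{K'}\) is either the \(K\)-th mark or the end marker \(M+1\). Phase \(\ell\) starts from cursor \(c_{\ell-1}\). It searches only positions after \(c_{\ell-1}\) under the current predicate. Previously passed positions never need revisiting, so every such search is well defined. Set the gap to \(g_\ell=c_\ell-c_{\ell-1}\).

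In each phase we use the ordered search described just before the statement. We query consecutive intervals of doubling length \(1,2,4,\dots\) after the cursor with Grover search, until one interval contains a mark. We then locate the first mark in that interval by binary subdivision, again with Grover search at each level. If we reach \(M\) without finding a mark, we return the end marker. The phase costs \(O(\sqrt{g_\ell}\log M)\) predicate evaluations, because the doubling intervals have geometrically increasing Grover costs dominated by the last interval of length \(O(g_\ell)\), and the subdivision adds \(O(\log M)\) Grover calls of cost \(O(\sqrt{g_\ell})\) each. With \(O(1)\) queries per predicate evaluation, phase \(\ell\) costs \(\widetilde O(\sqrt{g_\ell})\) queries.

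For the total, the gaps satisfy \(\sum_\ell g_\ell\le M+1\). Cauchy--Schwarz then gives
\[
\sum_{\ell=1}^{K'}\sqrt{g_\ell}\le\sqrt{K'\textstyle\sum_\ell g_\ell}\le\sqrt{(K+1)(M+1)} .
\]
Multiplying by the polylogarithmic per-phase overhead gives the claimed \(\widetilde O(\sqrt{(M+1)(K+1)})\) bound.

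The main obstacle is error control. Each Grover call is bounded-error, and there are \(O(K\log M)\) calls overall, all polynomial in \(n\). We amplify each call to inverse-polynomial error with an \(O(\log n)\) repetition factor, which is absorbed into \(\widetilde O\), and take a union bound over the calls. Grover's expected-cost version with unknown number of marks would give \(\sqrt{g}\) only in expectation. To avoid this, each doubling step uses a fixed-budget search on an interval of known length, which has a worst-case cost.

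The procedure must also be runnable coherently with a hard budget. For this we stop the entire procedure once it exceeds \(c\sqrt{(M+1)(K+1)}\log^2 n\) queries. By the bound above, with high probability the procedure finishes before this cutoff, so stopping changes the output only with inverse-polynomial probability.
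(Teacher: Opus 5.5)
Your proof is correct and follows the paper's argument. It bounds each phase by \(\widetilde O(\sqrt{g_\ell})\) using the doubling-then-subdivision search, uses at most \(K+1\) phases with \(\sum_\ell g_\ell\le M+1\), and finishes with Cauchy--Schwarz; your extra treatment of error amplification and the hard query cutoff is what the paper covers through its global convention that subroutines are amplified to inverse polynomial error and stopped at their stated budgets.
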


\begin{proof}
If the successive gaps are \(g_1,\ldots,g_h\), then
\(h\le K+1\) and \(\sum_{v=1}^h g_v\le M+1\).
The searches cost \(\widetilde O(\sum_v\sqrt{g_v})\) queries.
Since \(\sum_v\sqrt{g_v}\le\sqrt{h\sum_vg_v}\), this gives
the claimed bound.
\end{proof}

\paragraph{Variable-time search.}
The cost of testing a candidate may depend on the candidate and on
the input. We use the following result of Ambainis, Kokainis, and
Vihrovs~\cite[Sections~2--3]{AmbainisKokainisVihrovs2023}, with its
bounded-error extension. Only a bound on the sum of squared completion
costs is required.

\begin{lemma}[Search with unknown completion costs]
\label{lem:unknown-cost-search}
For each candidate \(j\), suppose a test with integer allowance
\(u\ge1\) costs \(\widetilde O(u)\) queries and returns
\textnormal{\textsc{Unfinished}} for \(u<\tau_j\), and the correct
Boolean answer for \(u\ge\tau_j\), with inverse polynomial error.
Given \(T\ge\sum_j\tau_j^2\), variable-time quantum search finds an
accepting candidate, or reports that none exists, using
\(\widetilde O(\sqrt T)\) queries.
The individual completion allowances \(\tau_j\) need not be known.
\end{lemma}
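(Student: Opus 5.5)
This lemma is quoted from Ambainis, Kokainis, and Vihrovs, so the plan is to reduce to their statement rather than re-derive variable-time search. Their result assumes known completion times, or at least a procedure that halts at time \(\tau_j\) with the answer. It gives complexity \(\widetilde O(\sqrt{\sum_j\tau_j^2})\) without the \(\tau_j\) being known in advance. The work is therefore to check that the interface assumed in the statement (a test with allowance \(u\), returning \textsc{Unfinished} below \(\tau_j\) and the correct answer above it) matches theirs, and to handle the bounded-error setting.

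First, normalize the allowances to powers of two. For \(\ell=0,1,\ldots,\lceil\log_2\sqrt T\rceil\), run the test with allowance \(2^\ell\). Each candidate then has an effective completion level \(2^{\ell_j}<2\tau_j\), so the sum of squares grows by at most a factor \(4\). Any candidate with \(\tau_j>\sqrt T\) cannot occur, since \(T\ge\sum_j\tau_j^2\). Hence the cap \(2^\ell\le 2\sqrt T\) loses nothing, and the given \(T\) supplies the stopping budget.

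Second, invoke the multi-level amplification scheme of \cite{AmbainisKokainisVihrovs2023}. At level \(\ell\) it runs every surviving candidate for \(2^\ell\) steps. It amplifies the amplitude on branches that either accepted or remain \textsc{Unfinished}, discarding branches that finished and rejected. It estimates the surviving weight and proceeds to level \(\ell+1\). Their analysis bounds the total cost by \(\widetilde O(\sqrt{\sum_j 2^{2\ell_j}})=\widetilde O(\sqrt T)\) when a marked candidate exists. When none exists, the final level certifies this within the same budget, since we stop at the stated budget and report ``none'' if nothing was found.

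The main obstacle is the bounded-error extension. Each test errs with inverse polynomial probability, and these errors enter coherently inside nested amplitude amplification. I would first amplify each test to error \(1/\mathrm{poly}(n)\) by majority voting over \(O(\log n)\) repetitions, which costs a logarithmic factor absorbed in \(\widetilde O\). Then I would use the standard hybrid argument: replacing each faulty unitary by its ideal version changes the final state by at most (number of invocations)\(\times\sqrt{\text{error}}\). The number of invocations is polynomial, so taking the per-test error to be a small enough inverse polynomial keeps the total deviation \(o(1)\). Together with the amplified search's own error, this gives the claimed success probability. This is the bounded-error extension noted in the cited Sections~2--3.
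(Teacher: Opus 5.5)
Your proposal is correct and matches the paper's treatment: the paper gives no independent proof, but imports this lemma directly from Ambainis, Kokainis, and Vihrovs (Sections~2--3) together with its bounded-error extension, which is exactly the reduction you describe. Your extra details are consistent with that citation and with the paper's global convention of inverse-polynomial error: rounding allowances to powers of two at a factor-\(4\) loss, capping at \(\sqrt T\) since every \(\tau_j\le\sqrt T\), and handling coherent test errors by error reduction plus a hybrid argument.
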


In particular, suppose test \(j\) costs
\(\widetilde O(\sqrt{w_j})\) queries, where \(w_j\ge1\) is known
without further input queries. Taking
\(\tau_j=\lceil\sqrt{w_j}\rceil\), and returning
\textsc{Unfinished} below this allowance, gives a search cost of
\(\widetilde O(\sqrt{\sum_jw_j})\) queries.

\paragraph{MNRS quantum walk search.}
The framework of Magniez, Nayak, Roland, and Santha expresses the cost
of a quantum walk through setup, update, and checking
costs~\cite{MagniezNayakRolandSantha2011}. We state the form used
below for a Johnson walk, each of whose steps replaces one
uniform member of the maintained subset by a uniform element outside it.

\begin{theorem}[MNRS search on a Johnson graph]
\label{thm:mnrs}
Consider the Johnson walk on the \(r\)-subsets of a set of size
\(n\ge3\), where \(1\le r\le n/2\). Suppose that, whenever marked
walk states exist, every subset containing a particular element is
marked. If preparing a uniform walk state with its data costs \(S\)
queries, a coherent walk update costs \(U\), and checking whether a
walk state is marked costs \(C\), then detecting a marked walk state uses
\begin{equation}
 \widetilde O\!\left(S+\sqrt n\,U+\sqrt{n/r}\,C\right)
 \label{eq:johnson-search}
\end{equation}
queries.
\end{theorem}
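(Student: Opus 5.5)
The plan is to obtain \cref{thm:mnrs} as a direct specialization of the general MNRS theorem~\cite{MagniezNayakRolandSantha2011}. That theorem says: for a reversible ergodic Markov chain \(P\) with spectral gap \(\delta\), if the marked states have stationary mass at least \(\varepsilon\) whenever any exist, then a marked state can be detected with
\[
 O\!\left(S+\tfrac{1}{\sqrt\varepsilon}\Bigl(\tfrac{1}{\sqrt\delta}\,U+C\Bigr)\right)
\]
queries. Here \(S\), \(U\), \(C\) are the costs of the setup, the coherent update, and the check. So the work is to bound \(\delta\) and \(\varepsilon\) for the Johnson walk and substitute.

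\textbf{Marked mass.} The stationary distribution of the Johnson walk on \(r\)-subsets of an \(n\)-set is uniform. By hypothesis, whenever marked states exist there is an element \(e\) such that every subset containing \(e\) is marked. A uniform \(r\)-subset contains \(e\) with probability \(\binom{n-1}{r-1}/\binom nr=r/n\), so \(\varepsilon\ge r/n\). If no marked state exists, the detection algorithm answers ``no'' with bounded error. Only this lower bound on \(\varepsilon\) enters the analysis; the exact value of \(\varepsilon\) need not be known.

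\textbf{Spectral gap.} The eigenvalues of the Johnson scheme are standard. The second largest eigenvalue of the walk that swaps one uniform inside element for one uniform outside element is \(1-\frac{n}{r(n-r)}\), so \(\delta=\frac{n}{r(n-r)}\). Since \(r\le n/2\) gives \(n-r\le n\), we get \(\delta\ge 1/r\). This is exactly where the hypothesis \(r\le n/2\) is used; \(n\ge3\) only ensures the graph is nontrivial and connected.

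\textbf{Substitution.} With these bounds, \(1/\sqrt\varepsilon\le\sqrt{n/r}\) and \(1/\sqrt\delta\le\sqrt r\). Hence
\[
 \tfrac{1}{\sqrt\varepsilon}\cdot\tfrac{1}{\sqrt\delta}\,U\le\sqrt n\,U
 \qquad\text{and}\qquad
 \tfrac{1}{\sqrt\varepsilon}\,C\le\sqrt{n/r}\,C,
\]
which gives \cref{eq:johnson-search}.

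\textbf{Main obstacle.} The step needing care is not the spectral computation but the robustness issue hidden in the \(\widetilde O\). In our applications the setup, update, and checking subroutines are themselves bounded-error quantum procedures (enumeration and variable-time search, \cref{lem:implicit-set-access,lem:unknown-cost-search}), not exact unitaries. I would handle this as the preliminaries stipulate. Amplify each subroutine to inverse-polynomial error at an \(O(\log n)\) overhead, run it coherently, and truncate it at its stated budget. The walk invokes these subroutines only polynomially many times, so by the triangle inequality on the final state the accumulated error stays below a small constant. This yields the claimed bound up to polylogarithmic factors.
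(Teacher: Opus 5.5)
Your proposal is correct and follows the paper's argument: uniform stationary distribution, marked mass at least \(r/n\), and spectral gap \(\Theta(1/r)\), substituted into the MNRS bound, with subroutine errors handled by the amplification-and-truncation convention of the preliminaries. One small correction: \(n/(r(n-r))\ge1/r\) holds for every \(r<n\), so that step does not use \(r\le n/2\); separately, MNRS needs the gap in absolute value, so you should also check that the most negative eigenvalue \(-1/(n-r)\) has magnitude at most \(1/2\), which holds because \(n\ge3\) and \(r\le n/2\) force \(n-r\ge2\) (this, rather than connectivity, is the role of \(n\ge3\), since for \(n=2\), \(r=1\) the walk is periodic).
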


The walk has a uniform stationary distribution and spectral gap
\(\Theta(1/r)\). When marked walk states exist, their stationary
probability is at least \(r/n\), giving the stated bound.

\section{From verification to Orthogonal Vectors}
\label{sec:bmpv-and-ov}

A claimed product \(C\) can differ from \(A\boolprod B\) in two ways.
Entry \((i,j)\) has a \emph{missing-one error} if
\((A\boolprod B)_{ij}=1\) and \(C_{ij}=0\), and a \emph{spurious-one
error} if \((A\boolprod B)_{ij}=0\) and \(C_{ij}=1\).  Thus
\(\BMPV_n^{\le}\) rejects exactly when some entry has a missing-one error,
\(\BMPV_n^{\ge}\) rejects exactly when some entry has a spurious-one error,
and \(\BMPV_n=\BMPV_n^{\le}\booland\BMPV_n^{\ge}\).

The two kinds of error behave very differently.  Missing-one errors are
triangles in disguise~\cite[Section~4.3]{Kothari2014}, so they can be
detected in \(O(n^{5/4})\) queries.  Spurious-one errors are orthogonal
pairs in disguise, and we show in \Cref{sec:bmpv-reductions} that
detecting them is equivalent to Orthogonal Vectors.  In
\Cref{sec:ov-lower-bound}, we prove an \(\Omega(n^{5/4})\) lower bound for
Orthogonal Vectors.  This lower bound absorbs the cost of triangle
detection, so \(\BMPV\) and \(\OV\) have the same query complexity up to
a constant-factor change in problem dimension (\Cref{sec:bmpv-equivalence}).
\Cref{sec:diameter} proves the same equivalence for Diameter at Most Two.
Through these reductions, the algorithm for Orthogonal Vectors in
\Cref{sec:ov-upper-bound} also applies to verification and to Diameter at
Most Two.

\subsection{Reductions between verification and Orthogonal Vectors}
\label{sec:bmpv-reductions}

A missing-one error at \((i,j)\), together with an index \(t\) satisfying
\(A_{it}=B_{tj}=1\), forms a triangle in a tripartite graph built from
\(A\), \(B\), and \(\overline C\).  Kothari observed that this
correspondence works in both directions.

\begin{proposition}[Missing-one errors and Triangles~\cite{Kothari2014}]
\label{prop:triangle-half}
The complement of \(\BMPV_n^{\le}\) reduces to triangle detection on
\(3n\) vertices, and triangle detection on \(n\) vertices reduces to the
complement of \(\BMPV_n^{\le}\).  Consequently,
\[
  Q(\Triangle_n)
  \le Q(\BMPV_n^{\le})
  \le Q(\Triangle_{3n})
  =O(n^{5/4}).
\]
\end{proposition}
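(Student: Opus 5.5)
We prove the two reductions separately and then combine them with the known triangle bound of~\cite{CaretteLauriereMagniez2017}. Each reduction is a tripartite encoding in which every queried edge is either fixed or copies (or complements) one input bit, so each target query costs \(O(1)\) input queries.

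\emph{From \(\BMPV_n^{\le}\) to triangles.} Given \(A,B,C\), build a tripartite graph on vertex classes \(U=\{u_1,\dots,u_n\}\), \(W=\{w_1,\dots,w_n\}\), \(V=\{v_1,\dots,v_n\}\), for \(3n\) vertices in total. The edges are:
\begin{itemize}
\item \(u_i w_t\) whenever \(A_{it}=1\);
\item \(w_t v_j\) whenever \(B_{tj}=1\);
\item \(u_i v_j\) whenever \(C_{ij}=0\), i.e.\ \(\overline C_{ij}=1\);
\item no edges inside a class.
\end{itemize}
Every triangle must use one vertex from each class, since each class is independent and edges only join distinct classes. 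A triangle \(u_i w_t v_j\) therefore exists exactly when \(A_{it}=B_{tj}=1\) and \(C_{ij}=0\), which is a missing-one error at \((i,j)\). An adjacency query is answered by computing the classes of its endpoints (free), then either returning \(0\) or making one query to the appropriate matrix entry, negated in the \(C\) case. Hence \(Q(\BMPV_n^{\le})\le Q(\Triangle_{3n})\), using \(Q(1-f)=Q(f)\).

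\emph{From triangles to \(\BMPV_n^{\le}\).} Given a graph with adjacency matrix \(M\) (zero diagonal), set \(A=B=M\) and \(C=\overline M\), with the diagonal of \(C\) forced to \(1\). A missing-one error is a triple \((i,t,j)\) with \(M_{it}=M_{tj}=1\) and \(C_{ij}=0\). Forcing \(C_{ii}=1\) rules out \(i=j\), so \(C_{ij}=0\) means \(M_{ij}=1\). Also \(t\neq i,j\) because the diagonal of \(M\) is zero. Thus a missing-one error is exactly a triangle \(\{i,t,j\}\). Each entry of \(A\), \(B\), \(C\) is fixed or equals one input bit or its negation, so one input query suffices per target query. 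Hence \(Q(\Triangle_n)\le Q(\BMPV_n^{\le})\).

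Finally, \(Q(\Triangle_{3n})=O((3n)^{5/4})=O(n^{5/4})\) by~\cite{CaretteLauriereMagniez2017}. The argument has no real obstacle. The only points needing care are the diagonal conventions: forcing the diagonal of \(C\) to one, and using the zero diagonal of \(M\). These ensure the three indices of an error triple are distinct, so that it really is a triangle.
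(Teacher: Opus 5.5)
Your proof is correct and follows the paper's argument: you use the same tripartite graph built from \(A\), \(B\), and \(\overline C\) for the first reduction, and the same instance \((M,M,\overline M)\) for the second. The one cosmetic difference is that you explicitly force the diagonal of \(C\) to one. This is already the case, since \(M\) has zero diagonal and so \(\overline M\) has ones on its diagonal, which is exactly the fact the paper uses to rule out diagonal violations.
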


\begin{proof}
Given \(A,B,C\in\bits^{n\times n}\), consider the graph on three disjoint
vertex sets \(U=\{u_i\}\), \(K=\{k_t\}\), and \(V=\{v_j\}\), each of
size \(n\), with adjacency matrix
\begin{equation}
  \begin{bmatrix}
    0 & A & \overline C\\
    A^\transpose & 0 & B\\
    \overline C^\transpose & B^\transpose & 0
  \end{bmatrix}.
  \label{eq:triangle-block}
\end{equation}
Each adjacency query costs at most one query to \(A\), \(B\), or \(C\).
The graph is tripartite, so every triangle has the form
\((u_i,k_t,v_j)\), and such a triangle exists exactly when
\(A_{it}=B_{tj}=1\) and \(C_{ij}=0\).  Thus the graph contains a triangle
if and only if \(C\) has a missing-one error.

Conversely, let \(D\) be the adjacency matrix of an \(n\)-vertex graph.
An entry with \((D\boolprod D)_{ij}=1\) and \(\overline D_{ij}=0\) is an
edge \(\{i,j\}\) whose endpoints have a common neighbor, which is exactly
a triangle.  Diagonal entries cause no violations, because
\(\overline D_{ii}=1\).  Hence the graph contains a triangle if and only
if \((D,D,\overline D)\) is a no-instance of \(\BMPV_n^{\le}\), and each
entry of this triple costs one query to \(D\).

The final bound is the triangle-finding algorithm of Carette,
Lauri\`ere, and Magniez~\cite{CaretteLauriereMagniez2017}, which removes
the logarithmic factors from Le Gall's
\(\widetilde O(n^{5/4})\) algorithm~\cite{LeGall2014Triangle}.
\end{proof}

Spurious-one errors are governed by orthogonality instead: the entry
\((A\boolprod B)_{ij}\) is zero exactly when row \(i\) of \(A\) and
column \(j\) of \(B\) have disjoint supports.  A spurious-one error is
such an orthogonal pair at a position with \(C_{ij}=1\).  The next lemma
encodes this restriction into a single \(\OV\) instance and shows,
conversely, that \(\OV\) is the special case
\((A,B,C)=(X,X^\transpose,\J_n)\).

\begin{lemma}[Spurious-one errors and Orthogonal Vectors]
\label{lem:bmpv-ge-ov}
For every \(n\ge2\),
\begin{equation}
  Q(\OV_n)
  \le Q(\BMPV_n^{\ge})
  \le Q(\OV_{2n+2}).
  \label{eq:bmpv-ge-ov}
\end{equation}
\end{lemma}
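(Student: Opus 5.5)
The plan is to prove the two inequalities in \cref{eq:bmpv-ge-ov} separately, each by an oracle reduction with constant query overhead. Together with \(Q(1-f)=Q(f)\), this lets us ignore whether a reduction maps yes-instances to yes- or to no-instances.

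\emph{Left inequality.} Given \(X\in\bits^{n\times n}\) with \(n\ge2\), I would feed the triple \((A,B,C)=(X,X^\transpose,\J_n)\) to a \(\BMPV_n^{\ge}\) algorithm. Each entry of this triple is either fixed or a single query to \(X\). Since \(X\boolprod X^\transpose\le\J_n\) always holds, \(\BMPV_n^{\ge}(X,X^\transpose,\J_n)=1\) exactly when \(X\boolprod X^\transpose=\J_n\). By \cref{eq:no-ov-product} this happens exactly when \(\OV(X)=0\). Hence \(\OV_n\) is the complement of a restriction of \(\BMPV_n^{\ge}\), which gives \(Q(\OV_n)\le Q(\BMPV_n^{\ge})\).

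\emph{Right inequality.} A spurious-one error at \((i,j)\) means \(C_{ij}=1\) and row \(A_{i,*}\) is orthogonal to column \(B_{*,j}\). I would encode the condition ``\(C_{ij}=1\)'' as an absence of intersection in extra coordinates, using two lists of \(n\) vectors in \(\bits^{2n}\):
\begin{itemize}
\item \(a_i=(A_{i,*},\,e_i)\), where \(e_i\in\bits^n\) is the \(i\)-th unit vector;
\item \(b_j=(B_{*,j}^\transpose,\,\overline C_{*,j}^\transpose)\).
\end{itemize}
Then the supports of \(a_i\) and \(b_j\) meet in the last \(n\) coordinates iff \(\overline C_{ij}=1\), that is, iff \(C_{ij}=0\). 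In the first \(n\) coordinates they meet iff \((A\boolprod B)_{ij}=1\). So \(a_i\perp b_j\) exactly when \((i,j)\) is a spurious-one error. Every bit of these vectors is fixed, a single bit of \(A\) or \(B\), or the complement of a single bit of \(C\), so the overhead is one query per bit.

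Next I would apply the first reduction of \cref{lem:ov-formulations}. This turns the two-list instance into an \(\OV_{2n,2n+2}\) instance; its added tag columns forbid same-list pairs and preserve all cross-list tests. To reach the square size \(\OV_{2n+2}\), I would append two copies of existing vectors, which by \cref{rem:ov-padding} does not change the answer because the list has at least two positions. The resulting instance is a yes-instance of \(\OV_{2n+2}\) iff \(\BMPV_n^{\ge}(A,B,C)=0\). Complementation then gives \(Q(\BMPV_n^{\ge})\le Q(\OV_{2n+2})\).

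I expect no step to be a genuine obstacle. The only point needing care is the bookkeeping in the padding and formulation steps: neither the duplicated vectors nor the same-list pairs may introduce orthogonal pairs other than the spurious-one errors. This is exactly what the tag columns of \cref{lem:ov-formulations} and the duplicate argument of \cref{rem:ov-padding} guarantee. One must also check that the final dimension is exactly \(2n+2\) and the list length exactly \(2n+2\).
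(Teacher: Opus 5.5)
Your proposal is correct and follows the paper's proof: the left inequality uses the same restriction to \((X,X^\transpose,\J_n)\), and your two lists combined via \Cref{lem:ov-formulations} give exactly the rows \(a_i,b_j\) of the paper's matrix \(Z\) in \eqref{eq:bmpv-ov-blocks}. The only difference is cosmetic: you pad to \(2n+2\) vectors by duplicating existing rows, which \Cref{rem:ov-padding} justifies (and which is harmless here anyway, since every row is nonzero), whereas the paper appends two all-ones rows.
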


\begin{proof}
For the second inequality, let \(Z\) be the matrix
\begin{equation}
  Z\coloneqq
  \begin{bmatrix}
    A & \I_n & \J_{n,1} & \zeros_{n,1}\\
    B^\transpose & \overline C^\transpose
      & \zeros_{n,1} & \J_{n,1}\\
    \J_{2,n} & \J_{2,n} & \J_{2,1} & \J_{2,1}
  \end{bmatrix}
  \in\bits^{(2n+2)\times(2n+2)}.
  \label{eq:bmpv-ov-blocks}
\end{equation}
Let \(e_i\in\bits^n\) denote the \(i\)-th standard basis vector.
The first \(n\) rows of \(Z\) are \(a_i\coloneqq(A_{i,*},e_i,1,0)\) for
\(i\in[n]\), and its next \(n\) rows are
\(b_j\coloneqq\bigl((B^\transpose)_{j,*},(\overline C^\transpose)_{j,*},0,1\bigr)\)
for \(j\in[n]\).  In the first \(n\) columns, \(a_i\) and \(b_j\)
share a one exactly when \((A\boolprod B)_{ij}=1\).  In the next \(n\)
columns, \(a_i\) has a single one, in column \(n+i\), where \(b_j\)
has the entry \(\overline C_{ij}\); so they share a one there exactly when
\(C_{ij}=0\).  They share no one in the last two columns.  Therefore
\(a_i\) and \(b_j\) are orthogonal exactly when entry \((i,j)\) has a
spurious-one error.

No other pair of rows of \(Z\) is orthogonal: any two rows \(a_i\) share
column \(2n+1\), any two rows \(b_j\) share
column \(2n+2\), and the last two rows are all ones, while every row
of \(Z\) is nonzero.  These two rows serve only to make the instance
square.  Hence the rows of \(Z\) form a yes-instance of \(\OV_{2n+2}\)
if and only if \(A\boolprod B\not\ge C\).  Every entry of \(Z\) is either
fixed or determined by one entry of \(A\), \(B\), or \(C\).

For the first inequality, let \(X\in\bits^{n\times n}\) be an \(\OV_n\)
instance whose rows are the input vectors.  Since \(A\boolprod B\le\J_n\)
always holds, \Cref{eq:no-ov-product} gives
\[
  \BMPV_n^{\ge}(X,X^\transpose,\J_n)=1
  \quad\Longleftrightarrow\quad
  X\boolprod X^\transpose=\J_n
  \quad\Longleftrightarrow\quad
  \OV_n(X)=0.
\]
Queries to \(X^\transpose\) are queries to \(X\), and \(\J_n\) is fixed.
\end{proof}

\subsection{A lower bound from surjectivity}
\label{sec:ov-lower-bound}

We now prove that \(\OV_n\) requires \(\Omega(n^{5/4})\) queries.  The
starting point is surjectivity.  For even \(q\), let \(\ONTO_q\) be the
problem of deciding whether a function
\[
  f\from[2q-2]\longrightarrow[q],
\]
given by an oracle that returns \(f(t)\) on query \(t\), is surjective.
Beame and Machmouchi proved that \(Q(\ONTO_q)=\Omega(q)\) for even
\(q\)~\cite[Corollary~6]{BeameMachmouchi2012}.  We need the corresponding
bound for many independent instances.

\begin{lemma}[An AND of \(\ONTO\) instances]
\label{lem:and-of-onto}
For every even \(q\) and every \(r\ge1\), deciding whether \(r\)
independently given functions \([2q-2]\to[q]\) are all surjective requires
\(\Omega(q\sqrt r)\) quantum queries.
\end{lemma}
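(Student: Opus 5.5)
The plan is to pass through the general adversary bound \(\Adv\), which characterizes bounded-error quantum query complexity up to constant factors for partial Boolean functions with non-Boolean input alphabets~\cite{LeeEtAl2011}. The composed function is \(\AND_r\circ\ONTO_q\): the outer function is \(\AND\) on \(r\) bits, and each inner function reads a disjoint oracle for a function \([2q-2]\to[q]\). The composition theorem of Lee, Mittal, Reichardt, \v{S}palek, and Szegedy gives \(\Adv(\AND_r\circ g)\ge\Adv(\AND_r)\,\Adv(g)\) when \(g\) has Boolean output. This composition lower bound holds for partial inner functions and arbitrary inner input alphabets, since each inner instance is queried through its own oracle of alphabet \([q]\). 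Combining the two facts, we get
\[
  Q(\AND_r\circ\ONTO_q)=\Omega\bigl(\Adv(\AND_r)\,\Adv(\ONTO_q)\bigr).
\]

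The two factors are handled separately. First, \(\Adv(\AND_r)=\Theta(\sqrt r)\); this follows from Grover's lower bound, or directly from the adversary bound for \(\OR_r\) applied after negating inputs and output. Second, the Beame--Machmouchi bound \(Q(\ONTO_q)=\Omega(q)\) for even \(q\) must be converted into a lower bound on \(\Adv(\ONTO_q)\). Because \(\Adv\) is a tight characterization of \(Q\) for all (partial) functions over any finite alphabet, we have \(\Adv(\ONTO_q)=\Theta(Q(\ONTO_q))=\Omega(q)\). Multiplying the two factors yields \(\Omega(q\sqrt r)\).

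I expect the main obstacle to be bookkeeping about which versions of the theorems apply. The tightness of \(\Adv\) for non-Boolean input alphabets is established in~\cite{LeeEtAl2011}. The composition inequality \(\Adv(f\circ g^r)\ge\Adv(f)\Adv(g)\) also needs care: for Boolean-output \(g\) the lower-bound direction holds in that generality, with the inner oracles simply concatenated into a single oracle with one extra index register.

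If I do not want to rely on tightness, an alternative is to use directly whatever method Beame and Machmouchi employed. If their bound comes from an explicit adversary matrix, that matrix already certifies \(\Adv(\ONTO_q)=\Omega(q)\), and the conclusion follows the same way.
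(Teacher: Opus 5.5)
Your proposal is correct and is essentially the paper's argument. The paper likewise uses the tightness of the negative-weight adversary bound (Lee et al.) to convert the Beame--Machmouchi bound into \(\Adv(\ONTO_q)=\Omega(q)\), and then multiplicativity under composition with \(\Adv(\AND_r)=\sqrt r\) to obtain \(\Omega(q\sqrt r)\); your remark about partial inner functions is not needed, since \(\ONTO_q\) is total.
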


\begin{proof}
Let \(\Adv\) denote the negative-weight adversary bound.  It
characterizes bounded-error quantum query complexity up to a constant
factor and is multiplicative under composition, including for functions
over finite input alphabets~\cite[Theorem~1.1 and
Lemma~5.2]{LeeEtAl2011}.  Hence \(\Adv(\ONTO_q)=\Omega(q)\), and since
\(\Adv(\AND_r)=\sqrt r\),
\[
  Q\!\left(\AND_r\circ\ONTO_q^{\,r}\right)
  =\Omega\!\left(\Adv(\AND_r)\,\Adv(\ONTO_q)\right)
  =\Omega(q\sqrt r).
  \qedhere
\]
\end{proof}

For intuition, take \(r\approx\sqrt n\) functions whose values are pairs,
with about \(\sqrt n\) possibilities for each component.  The alphabet
then has size \(q\approx n\).  The key observation is that
a function \(f\) misses a value \((u,v)\) exactly when the set of
positions at which the first component of \(f\) equals \(u\) is disjoint
from the set of positions at which the second component equals \(v\).
Each such pair of sets becomes a pair of rows, and a few additional
columns ensure that no other pair of rows is orthogonal.

\begin{theorem}[Lower bound for Orthogonal Vectors]
\label{thm:ov-lower-bound}
For every \(n\ge2\),
\[
  Q(\OV_n)=\Omega(n^{5/4}).
\]
\end{theorem}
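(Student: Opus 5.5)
We reduce from the AND of \(r\) surjectivity instances (\Cref{lem:and-of-onto}) to a single \(\OV\) instance of dimension \(\Theta(n)\).

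\textbf{Parameters.} Choose an even integer \(q=p^2\) and \(r=p\), with \(p=\Theta(n^{1/4})\) even. Identify \([q]\) with \([p]^2\) and write each value as \(f(t)=(f(t)_1,f(t)_2)\). Each instance \(f_\ell\from[2q-2]\to[p]^2\), for \(\ell\in[r]\), has \(2q-2=\Theta(\sqrt n)\) positions, so the \(r\) instances together have \(\Theta(n^{3/4})\) positions. That is far fewer than \(n\) columns, so I would rebalance. Take \(q=\Theta(n)\) with \(p=\sqrt q\), and \(r=\Theta(\sqrt n)\) instances. The lower bound is then \(\Omega(q\sqrt r)=\Omega(n\cdot n^{1/4})\), which is the target.

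\textbf{Rows and columns.} Instance \(\ell\) contributes \(2p=\Theta(\sqrt n)\) rows: \(R^{\ell}_{1,u}\) and \(R^{\ell}_{2,v}\) for \(u,v\in[p]\). This gives \(\Theta(n)\) rows in total. Give instance \(\ell\) its own block of \(2q-2=\Theta(n)\) columns. Row \(R^\ell_{c,u}\) has a one in column \(t\) of block \(\ell\) iff \(f_\ell(t)_c=u\), and is zero on the other instances' blocks. Putting all blocks side by side would give \(\Theta(n^{3/2})\) columns, which is too many. Instead, let the \(r\) instances share the same \(2q-2\) columns: since their row sets are disjoint, row \(R^\ell_{c,u}\) simply uses column \(t\) to encode \(f_\ell(t)\). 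Thus the dimension is \(\Theta(n)\).

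\textbf{Excluding spurious pairs.} Orthogonality must occur only between a pair \(R^\ell_{1,u},R^\ell_{2,v}\) from the same instance. Other pairs have to be forced to intersect using \(O(n)\) fixed extra columns.
\begin{itemize}
\item Two rows with the same \((\ell,c)\) and \(u\ne u'\) are always disjoint on the data columns. Fix this with one column per component index \(c\in\{1,2\}\), in which all rows with that \(c\) have a one.
\item For pairs across instances \(\ell\ne\ell'\), add \(O(r)\) extra columns, forming an incidence structure in which any two rows with different \(\ell\) share a one. One way: a column for each pair \(\{\ell,\ell'\}\) would need \(r^2=\Theta(n)\) columns, which is acceptable. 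A cheaper alternative is a column for each \(\ell\), with rows of instance \(\ell'\) set to one in all columns \(\ell\ne\ell'\). For \(r\ge3\), any two rows from different instances then share a column \(\ell''\) different from both.
\item Same-instance pairs \((1,u),(2,v)\) must not be given a common extra column. In the per-\(\ell\) scheme, both rows have zeros at column \(\ell\) but ones at all other instance columns, so they would intersect. To prevent this, make the extra columns depend on \(c\) as well: rows with \(c=1\) of instance \(\ell'\) get ones in \(\mathrm{col}_1(\ell)\) for \(\ell\ne\ell'\), and rows with \(c=2\) get ones in \(\mathrm{col}_2(\ell)\) for \(\ell\ne\ell'\). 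Cross-instance pairs with equal \(c\) intersect. Cross-instance pairs with different \(c\) still do not, so use the pair columns instead: for each unordered pair \(\{\ell,\ell'\}\) with \(\ell\ne\ell'\), add one column in which all rows of instances \(\ell\) and \(\ell'\) have a one. This uses \(\binom r2=O(n)\) columns and separates all cross-instance pairs while never joining rows of the same instance.
\end{itemize}

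\textbf{Correctness.} Pairs within one instance with the same \(c\) share the \(c\)-column, and cross-instance pairs share their pair column. The remaining pairs \(R^\ell_{1,u},R^\ell_{2,v}\) are orthogonal iff \((u,v)\notin\im f_\ell\). Hence the instance is a yes-instance of \(\OV\) iff some \(f_\ell\) is not surjective, which is the negation of the AND.

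\textbf{Query cost and padding.} Each entry of the matrix is fixed or computed from one query \(f_\ell(t)\), so the reduction has constant overhead. Pad to square size \(n\) using \Cref{rem:ov-padding}, appending zero columns and duplicate rows. Small \(n\) is absorbed into the constant.

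\textbf{Main obstacle.} The delicate step is the gadget of extra columns: it must stay within \(O(n)\) columns while intersecting every unwanted pair and no intended pair. The pair-column construction above is the choice I would verify carefully.
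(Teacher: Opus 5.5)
Your reduction fails at the final gadget for cross-instance pairs. The column for an unordered pair $\{\ell,\ell'\}$ has a one in \emph{every} row of instance $\ell$, so in both $R^\ell_{1,u}$ and $R^\ell_{2,v}$. Once $r\ge 2$, every intended pair therefore shares the columns $\{\ell,\ell'\}$ for all $\ell'\ne\ell$. The constructed instance then contains no orthogonal pair whatever the $f_\ell$ are, and the reduction outputs a constant. In particular, your correctness claim that $R^\ell_{1,u},R^\ell_{2,v}$ are orthogonal iff $(u,v)\notin\im f_\ell$ is false. This is exactly the defect you spotted in the per-$\ell$ scheme: any column that is one on all rows of some instance joins that instance's intended pairs.

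The missing idea is that the separating columns must treat the two components asymmetrically. Suppose an extra column is one on the $c=1$ rows of a set $S$ of instances and on the $c=2$ rows of a set $T$. It joins no intended pair iff $S\cap T=\varnothing$. The cross-instance pairs with different components are all joined iff these rectangles $S\times T$ cover every ordered pair $(\ell,\ell')$ with $\ell\ne\ell'$.

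One column per ordered pair, with $S=\{\ell\}$ and $T=\{\ell'\}$, works and uses $r(r-1)=O(n)$ columns. The paper needs only $2\lceil\log_2 r\rceil$ index columns $h_{b,\sigma}$, $\sigma\in\{(0,1),(1,0)\}$. There, a component-$c$ row of instance $\ell$ has a one at $h_{b,\sigma}$ iff bit $b$ of $\ell-1$ equals $\sigma_c$. Distinct instances differ in some bit and so share a column, while a shared column within one instance would need that bit to equal both $0$ and $1$.

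With this replacement, the rest of your plan coincides with the paper's proof. That includes the value columns shared by all instances, and the two guard columns, which already handle every same-component pair, across instances too. It also includes the parameters $q=p^2=\Theta(n)$ with $r=p$, and padding via \Cref{rem:ov-padding}.
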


\begin{proof}
Let \(s\) be even, and consider \(s\) functions
\[
  f_a\from[2s^2-2]\longrightarrow[s]^2,
  \qquad a\in[s],
\]
where a query \((a,t)\) returns \(f_a(t)\).  By \Cref{lem:and-of-onto}
with \(q=s^2\) and \(r=s\), deciding whether every \(f_a\) is surjective
requires \(\Omega(s^{5/2})\) queries.  We reduce the complementary
problem, deciding whether some \(f_a\) misses a value, to \(\OV\).

\paragraph{The vectors.}
For \(i\in\{1,2\}\), let \(\pi_i\) map a pair to its \(i\)-th component.
The instance consists of the \(2s^2\) vectors
\[
  x^{(i)}_{a,u},
  \qquad i\in\{1,2\},\ a,u\in[s],
\]
arranged as the rows of a matrix with three types of columns.
\begin{itemize}
\item \emph{Value columns.}  For each \(t\in[2s^2-2]\),
  \(x^{(i)}_{a,u}(t)\coloneqq\ind{\pi_i(f_a(t))=u}\).
\item \emph{Guard columns.}  For \(j\in\{1,2\}\),
  \(x^{(i)}_{a,u}(g_j)\coloneqq\ind{i\ne j}\).
\item \emph{Index columns.}  Let \(\ell\coloneqq\ceil{\log_2s}\), and
  let \(c(a)\in\bits^\ell\) be the binary representation of \(a-1\).  For
  each \(b\in[\ell]\) and \(\sigma\in\{(0,1),(1,0)\}\),
  \(x^{(i)}_{a,u}(h_{b,\sigma})\coloneqq\ind{c(a)_b=\sigma_i}\).
\end{itemize}
\Cref{fig:ov-explicit-construction} shows the complete instance for
\(s=2\).

\begin{figure}[ht!]
\centering
\small
{\setlength{\tabcolsep}{5pt}
\begin{tabular}{c cccccc}
\toprule
\(t\) & \(1\) & \(2\) & \(3\) & \(4\) & \(5\) & \(6\) \\
\midrule
\(f_1(t)\) & \((1,1)\) & \((1,2)\) & \((2,1)\) & \((1,1)\)
  & \((1,2)\) & \((2,1)\) \\
\(f_2(t)\) & \((1,1)\) & \((1,2)\) & \((2,1)\) & \((2,2)\)
  & \((1,1)\) & \((2,2)\) \\
\bottomrule
\end{tabular}\par}

\medskip

{\setlength{\tabcolsep}{3.5pt}
\begin{tabular}{c cccccc cc cc}
\toprule
& \multicolumn{6}{c}{\(t\)}
& \multicolumn{2}{c}{\(g_j\)}
& \multicolumn{2}{c}{\(h_{1,\sigma}\)} \\
\cmidrule(lr){2-7}\cmidrule(lr){8-9}\cmidrule(lr){10-11}
& \(1\) & \(2\) & \(3\) & \(4\) & \(5\) & \(6\)
& \(g_1\) & \(g_2\) & \(h_{1,(0,1)}\) & \(h_{1,(1,0)}\) \\
\midrule
\(x^{(1)}_{1,1}\) & 1&1&0&1&1&0 & 0&1 & 1&0 \\
\(\boldsymbol{x}^{(1)}_{1,2}\) & 0&0&1&0&0&1 & 0&1 & 1&0 \\
\(x^{(2)}_{1,1}\) & 1&0&1&1&0&1 & 1&0 & 0&1 \\
\(\boldsymbol{x}^{(2)}_{1,2}\) & 0&1&0&0&1&0 & 1&0 & 0&1 \\
\addlinespace
\(x^{(1)}_{2,1}\) & 1&1&0&0&1&0 & 0&1 & 0&1 \\
\(x^{(1)}_{2,2}\) & 0&0&1&1&0&1 & 0&1 & 0&1 \\
\(x^{(2)}_{2,1}\) & 1&0&1&0&1&0 & 1&0 & 1&0 \\
\(x^{(2)}_{2,2}\) & 0&1&0&1&0&1 & 1&0 & 1&0 \\
\bottomrule
\end{tabular}\par}
\caption{The construction in the proof of \Cref{thm:ov-lower-bound} for
\(s=2\).  The function \(f_1\) misses \((2,2)\), whereas \(f_2\) is surjective.
The two vectors with boldface labels have disjoint supports and form the
only orthogonal pair.  Two vectors with the same \(i\) intersect in a
guard column, and two vectors with different \(i\) and different \(a\)
intersect in an index column.}
\label{fig:ov-explicit-construction}
\end{figure}

\paragraph{Correctness.}
Vectors with \(i=1\) all have a one at \(g_2\), and vectors with \(i=2\)
all have a one at \(g_1\), so any two vectors with the same \(i\)
intersect.  Consider therefore \(x^{(1)}_{a,u}\) and \(x^{(2)}_{a',v}\).
These vectors share no one in a guard column.  If \(a\ne a'\), choose \(b\)
with \(c(a)_b\ne c(a')_b\), and let \(\sigma\coloneqq(c(a)_b,c(a')_b)\).
Then \(\sigma\in\{(0,1),(1,0)\}\), and both vectors have a one at
\(h_{b,\sigma}\).  If \(a=a'\), a common one at \(h_{b,\sigma}\) would
require \(c(a)_b=\sigma_1\) and \(c(a)_b=\sigma_2\), which is impossible,
so the two vectors share no one in an index column.  In the value columns,
they share a one at \(t\) exactly when \(f_a(t)=(u,v)\).  Consequently,
the only orthogonal pairs are the pairs \(x^{(1)}_{a,u},x^{(2)}_{a,v}\)
with \((u,v)\notin\im(f_a)\), and the instance contains an orthogonal pair
if and only if some \(f_a\) is not surjective.  Every entry is either fixed or
determined by a single value \(f_a(t)\), so each query to the instance can be simulated using \(O(1)\) queries to the functions.

\paragraph{Size.}
The instance consists of \(2s^2\) vectors of dimension
\[
  d(s)\coloneqq2s^2+2\ceil{\log_2s}=\Theta(s^2).
\]
Given a sufficiently large \(n\), let \(s\) be the largest even integer
with \(d(s)\le n\); then \(s^2=\Theta(n)\).  Pad the list to \(n\) vectors
by duplicating vectors, and pad the dimension to \(n\) with all-zero
columns.  By \Cref{rem:ov-padding}, padding preserves the
answer.  Hence
\[
  Q(\OV_n)=\Omega(s^{5/2})=\Omega(n^{5/4}).
  \qedhere
\]
\end{proof}

\subsection{Equivalence of verification and Orthogonal Vectors}
\label{sec:bmpv-equivalence}

We now combine the preceding results.

\begin{theorem}[Verification and Orthogonal Vectors]
\label{thm:bmpv-final}
For every \(n\ge2\),
\begin{equation}
  Q(\OV_n)
  \le Q(\BMPV_n)
  =O\!\left(Q(\OV_{2n+2})\right).
  \label{eq:bmpv-structural}
\end{equation}
Consequently, \(Q(\BMPV_n)=\Omega(n^{5/4})\).
\end{theorem}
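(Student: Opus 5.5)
The plan is to combine the three results just established. For the left inequality, I will show that \(\OV_n\) is a special case of \(\BMPV_n\). Given an \(\OV_n\) instance \(X\), consider the triple \((X,X^\transpose,\J_n)\). Since \(X\boolprod X^\transpose\le\J_n\) always holds, \(\BMPV_n(X,X^\transpose,\J_n)=\BMPV_n^{\ge}(X,X^\transpose,\J_n)\). By \Cref{eq:no-ov-product}, this equals \(1-\OV_n(X)\). Each query to the triple costs at most one query to \(X\), and \(Q(1-f)=Q(f)\). Hence \(Q(\OV_n)\le Q(\BMPV_n)\); this is exactly the argument in the first inequality of \Cref{lem:bmpv-ge-ov}, now applied to the full predicate.

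For the upper bound, I use \(\BMPV_n=\BMPV_n^{\le}\booland\BMPV_n^{\ge}\). The algorithm runs a bounded-error algorithm for each conjunct, amplified to error at most \(1/6\) by \(O(1)\) repetitions, and outputs the AND of the two answers. This costs \(O(Q(\BMPV_n^{\le})+Q(\BMPV_n^{\ge}))\). By \Cref{prop:triangle-half}, the first term is \(O(n^{5/4})\). By \Cref{lem:bmpv-ge-ov}, the second is at most \(Q(\OV_{2n+2})\). This gives \(Q(\BMPV_n)=O(n^{5/4}+Q(\OV_{2n+2}))\).

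The only real step is absorbing the \(n^{5/4}\) term. \Cref{thm:ov-lower-bound} gives \(Q(\OV_{2n+2})=\Omega((2n+2)^{5/4})=\Omega(n^{5/4})\), so \(n^{5/4}=O(Q(\OV_{2n+2}))\). This yields \Cref{eq:bmpv-structural}. The constant in \Cref{thm:ov-lower-bound} is uniform for all \(n\ge2\), since small \(n\) can be absorbed into constants.

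Finally, the consequence \(Q(\BMPV_n)=\Omega(n^{5/4})\) follows from the left inequality together with \Cref{thm:ov-lower-bound}. I do not expect any genuine obstacle. The only care needed is to make the error reduction for the AND explicit and to check that the constant-factor dimension change \(n\mapsto2n+2\) is harmless in the absorption step.
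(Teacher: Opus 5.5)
Your proposal is correct and follows the paper's proof essentially step for step. It restricts to $(X,X^\transpose,\J_n)$ for the left inequality and takes the AND of the triangle-based test for $\BMPV_n^{\le}$ and the $\OV_{2n+2}$-based test for $\BMPV_n^{\ge}$ (each at error $1/6$). It then absorbs the $n^{5/4}$ term using \Cref{thm:ov-lower-bound}.
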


\begin{proof}
For the first inequality, restrict \(\BMPV_n\) to inputs of the form
\((X,X^\transpose,\J_n)\).  By \Cref{eq:no-ov-product}, this restriction
is equivalent to the complement of \(\OV_n\).  Therefore
\(Q(\OV_n)\le Q(\BMPV_n)\).

For the second relation, decide \(\BMPV_n^{\le}\) using
\Cref{prop:triangle-half} and \(\BMPV_n^{\ge}\) using
\Cref{lem:bmpv-ge-ov}, each with error probability at most \(1/6\), and
accept if both accept.  The triangle algorithm detects missing-one
errors, and the \(\OV\) algorithm detects spurious-one errors.  This
gives
\[
  Q(\BMPV_n)
  =O\!\left(Q(\Triangle_{3n})+Q(\OV_{2n+2})\right)
  =O\!\left(n^{5/4}+Q(\OV_{2n+2})\right).
\]
By \Cref{thm:ov-lower-bound}, \(Q(\OV_{2n+2})=\Omega(n^{5/4})\), so the
second term dominates, proving \Cref{eq:bmpv-structural}.

Finally, the lower bound \(Q(\BMPV_n)=\Omega(n^{5/4})\) follows from
the first inequality in \Cref{eq:bmpv-structural} and
\Cref{thm:ov-lower-bound}.
\end{proof}

By \Cref{eq:no-ov-product}, testing whether
\(X\boolprod X^\transpose=\J_n\) is the complement of \(\OV_n\).
\Cref{thm:bmpv-final} therefore shows that this self-product test, a
special case of verification with \(B=A^\transpose\) and the all-ones
target, is as hard as verifying an arbitrary claimed product, up to
replacing \(n\) by \(2n+2\).

\subsection{Diameter at Most Two}
\label{sec:diameter}

By \Cref{eq:closed-neighborhood-square}, two vertices are at distance at
most two exactly when their closed neighborhoods intersect.  A graph
therefore has diameter at most two exactly when the rows of
\(\widehat A_G\), viewed as an \(\OV\) instance, contain no orthogonal
pair.  The next proposition turns this observation into reductions in both
directions.

\begin{proposition}[Diameter and Orthogonal Vectors]
\label{prop:diameter-ov}
For every \(n\ge2\),
\[
  Q(\DIAM_{\le2,n})
  \le Q(\OV_n)
  \le Q(\DIAM_{\le2,2n}).
\]
In particular, $Q(\DIAM_{\le2,n})=\Omega(n^{5/4}).$
\end{proposition}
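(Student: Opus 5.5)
The plan is to prove the two inequalities separately and then combine the second one with \Cref{thm:ov-lower-bound}.

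\textbf{First inequality, $Q(\DIAM_{\le2,n})\le Q(\OV_n)$.} Given $G$, take the $\OV_n$ instance whose vectors are the rows of $\widehat A_G$. Each entry is either fixed to $1$ on the diagonal or equals one off-diagonal adjacency bit, so the reduction has constant query overhead. By \Cref{eq:closed-neighborhood-square}, rows $u\ne v$ are orthogonal exactly when $\operatorname{dist}_G(u,v)>2$. Every row is nonzero, since it has a one on the diagonal, so no degenerate orthogonality occurs. Hence $\OV_n(\widehat A_G)=1-\DIAM_{\le2,n}(G)$, and since $Q(1-f)=Q(f)$ the inequality follows.

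\textbf{Second inequality, $Q(\OV_n)\le Q(\DIAM_{\le2,2n})$.} Given $X\in\bits^{n\times n}$ with rows $x_1,\ldots,x_n$, build a graph on $2n$ vertices:
\begin{itemize}
\item \emph{row vertices} $r_1,\ldots,r_n$, forming an independent set;
\item \emph{column vertices} $c_1,\ldots,c_n$, forming a clique;
\item an edge $r_ic_k$ exactly when $X_{ik}=1$.
\end{itemize}
Each adjacency query is fixed or copies one input bit.

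We then check all vertex pairs:
\begin{itemize}
\item Two column vertices are adjacent.
\item For a row vertex $r_i$ and a column vertex $c_k$: if $r_i$ has any neighbor $c_{k'}$, then $r_i$--$c_{k'}$--$c_k$ is a path of length at most two.
\item For two distinct row vertices $r_i,r_j$: they are nonadjacent, and every common neighbor is a column vertex. So $\operatorname{dist}(r_i,r_j)\le2$ iff $\supp(x_i)\cap\supp(x_j)\ne\varnothing$.
\end{itemize}

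\textbf{The zero-row subtlety.} The one delicate case is an isolated row vertex $r_i$ with $x_i=0$. Then the diameter is infinite, so $\DIAM_{\le2,2n}=0$. This agrees with the $\OV$ answer, because a zero vector is orthogonal to every other vector, using $n\ge2$. In that case the row–column pairs may also fail, but this is harmless because the $\OV$ answer is already $1$.

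Consequently $\DIAM_{\le2,2n}(G)=1-\OV_n(X)$ in all cases, which gives the second inequality.

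\textbf{The lower bound.} Combining $Q(\OV_n)\le Q(\DIAM_{\le2,2n})$ with \Cref{thm:ov-lower-bound} gives $Q(\DIAM_{\le2,2n})=\Omega(n^{5/4})$. To cover odd sizes, note that $\DIAM_{\le2,m}$ reduces to $\DIAM_{\le2,m+1}$ by adding a universal vertex only when the original graph is complete enough, which is delicate. Instead, I would rerun the second reduction with $\lfloor m/2\rfloor$ rows and pad the clique side with one extra column vertex. An extra clique vertex adjacent to nothing else still lies within distance two of everything, provided every row vertex has some column neighbor, and the zero-row case is handled as above. This yields $Q(\DIAM_{\le2,n})=\Omega(n^{5/4})$ for all $n$. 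This padding detail is the only mildly fiddly step; everything else is direct verification.
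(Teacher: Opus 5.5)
Your proposal is correct and follows the paper's proof essentially step for step: you use the rows of $\widehat A_G$ for the first reduction, and for the second you join independent row vertices to a clique of column vertices, with the isolated zero-row case absorbed by $n\ge2$. For the lower bound you add one extra clique vertex with no row neighbors to handle odd $n$, applied to $\OV_{\lfloor n/2\rfloor}$; you should state that this needs $n\ge4$, so that $\lfloor n/2\rfloor\ge2$, as the paper does. Your discarded aside about a universal vertex would not work at all, since a universal vertex makes every graph have diameter at most two; but you did not use it, so the proof is unaffected.
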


\begin{proof}
For the first inequality, let \(G\) be an \(n\)-vertex graph, and use the
rows of \(\widehat A_G\) as the input list.  The vector for vertex \(v\)
has a one in column \(w\) if \(v=w\), and otherwise has the adjacency
bit \((A_G)_{vw}\).  Thus each entry query requires at most one adjacency
query.  By \Cref{eq:diameter-closed-square} and \Cref{eq:no-ov-product},
\(G\) has diameter at most two exactly when the list has no orthogonal pair.

For the second inequality, let \(X\in\bits^{n\times n}\) be an \(\OV_n\)
instance.  Define a graph \(H_X\) on \(2n\) vertices with two disjoint
vertex sets
\[
  R\coloneqq\{r_1,\ldots,r_n\}
  \qquad\text{and}\qquad
  W\coloneqq\{w_1,\ldots,w_n\},
\]
where \(r_i\) represents row \(i\) of \(X\) and \(w_t\) represents
column \(t\).  The vertices in \(R\) are independent, and the vertices
in \(W\) form a clique.  A vertex \(r_i\) is adjacent to \(w_t\) exactly
when \(X_{it}=1\).  Each adjacency query can therefore be answered using
at most one query to \(X\).

Since the vertices in \(W\) form a clique, every vertex in \(R\) that has
a neighbor is within distance two of every vertex in \(W\).  Two distinct
vertices \(r_i,r_j\in R\) are within distance two exactly when they have
a common neighbor in \(W\), so
\[
  \operatorname{dist}_{H_X}(r_i,r_j)\le2
  \quad\Longleftrightarrow\quad
  \text{\(X_{it}=X_{jt}=1\) for some \(t\in[n]\)}.
\]
If row \(i\) of \(X\) is zero, then \(r_i\) is isolated in \(H_X\).
Since \(n\ge2\), this row forms an orthogonal pair with another row.
Hence \(H_X\) has diameter at most two exactly when \(X\) has no
orthogonal pair.

For the lower bound, small values of \(n\) affect only the constant,
so assume \(n\ge4\).  Let \(m\coloneqq\floor{n/2}\), and apply the second
reduction above to an \(\OV_m\) instance.  When \(n\) is odd, add
one more vertex to the clique \(W\) with no neighbors in \(R\).  This
vertex shortens no distance between row vertices, and it is within
distance two of every row vertex that has a neighbor, so the answer is
unchanged.  \Cref{thm:ov-lower-bound} now gives
\(Q(\DIAM_{\le2,n})\ge Q(\OV_m)=\Omega(n^{5/4})\).
\end{proof}

\section{An upper bound for Orthogonal Vectors}
\label{sec:ov-upper-bound}

Quantum search over pairs, with a further search over coordinates to 
test each pair, gives an $O(n^{3/2})$ query upper bound for $\OV_n$. 
We improve this bound using combinatorial ideas and standard quantum 
subroutines such as variable-time quantum search and MNRS quantum walk. For ease of exposition, we consider a promise version of 
$\OV_n$ before considering the general version.   


We note that the error probabilities of the quantum subroutines from
\Cref{sec:quantum-tools} can be made inverse polynomial with an 
arbitrarily large fixed exponent with a polylogarithmic overhead so we 
will suppose that in our following discussion.
All samples below are independent uniform draws with replacement,
unless their reuse is stated explicitly.  

\subsection{The promise case}
\label{sec:ov-promise}

In this section, we formally define the promise version of the orthogonal vectors problem and give an algorithm for it.

\paragraph{Problem (\(\promOV_n\)).}
The input consists of query access to \(n\) Boolean vectors
\(x_1,\ldots,x_n\in\bits^n\). For fixed positive constants
\(c_{\mathrm{low}}\) and \(c_{\mathrm{high}}\), every pair of distinct indices $i,j \in [n]$ is
promised to have either at most \(c_{\mathrm{low}}\log n\) common ones
or at least \(c_{\mathrm{high}}n/\log n\) common ones. The task is to
decide whether there are distinct indices \(i,j\in[n]\) such that
$\sum_{k=1}^n x_i(k)x_j(k)=0$,
that is, whether two distinct input vectors are orthogonal.

\begin{theorem}[Promise upper bound]\label{thm:ov-promise-main}
There is a bounded-error quantum query algorithm for \(\promOV_n\) using \(\widetilde O(n^{17/12})\) queries.
\end{theorem}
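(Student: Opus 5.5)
We follow the outline of \Cref{sec:proof-overview} with parameters \(s=\lceil n^{1/3}\rceil\) and \(t=\lceil n^{5/6}\rceil\).

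\textbf{Preprocessing.} Sample \(O(\log^2 n)\) coordinates and query them in every vector, at cost \(\widetilde O(n)\). With high probability every pair having at least \(c_{\mathrm{high}}n/\log n\) common ones reveals one. Call the remaining pairs candidates; they form a known graph \(G\). By the promise, every candidate pair shares at most \(c_{\mathrm{low}}\log n\) ones, and every orthogonal pair is a candidate.

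\textbf{Pivots and heaviness.} Sample \(s\) pivot vectors and read them fully, at cost \(ns\). Coverage of \((j,k)\) and heaviness of \((i,k)\) are then computable from \(G\) and the pivots with no further queries. The definitions of \(\Left{i}\), \(\Right{j}\) and \(\Remaining{ij}\) are as in the overview.

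\textbf{Three combinatorial facts.} These are the core of the proof.
\begin{enumerate}
\item \(\mathbb E\sum_i|\Left{i}|\le n^3/(st)\). Fix a coordinate \(k\) and let \(H_\ell\) be the number of heavy indices with a one at \(k\) after \(\ell\) draws; this is nonincreasing. Each draw hits such an index with probability at least \(\mathbb E H_s/n\). A hit newly covers more than \(t\) indices, and at most \(n\) indices can ever be covered. Hence \(s\,t\,\mathbb E H_s/n\le n\), and summing over \(k\) gives the bound. Markov then makes it hold up to a constant with constant probability.
\item \(|\Right{j}|\le s\,c_{\mathrm{low}}\log n\). Each pivot adjacent to \(j\) in \(G\) shares at most \(O(\log n)\) ones with \(x_j\).
\item \(\sum_{j\in N_G(i)}|\Remaining{ij}|\le nt\). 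At each coordinate where \(i\) is not heavy, at most \(t\) of its candidate neighbors are uncovered.
\end{enumerate}
Every common one of a candidate pair lies in \(\Left{i}\cup\Right{j}\cup\Remaining{ij}\).

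\textbf{Size classes.} Estimate each \(|\Left{i}|\) by \Cref{lem:implicit-set-access}; membership needs one query since heaviness is known. This gives labels \(\Size{i}\) with \(\max\{2t,|\Left{i}|\}\le\Size{i}\le 8(|\Left{i}|+t)\). The cost is \(\sum_i\widetilde O(\sqrt{n^2/t})=\widetilde O(n^2/\sqrt t)\), which is within budget; alternatively, cap the estimation at threshold \(t\) to reach \(\widetilde O(n^2/t)\). Within each class \(V_a\) we have \(a\,n_a=O(nt+n^3/(st))\).

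\textbf{Walk on each class.} For each class, run the Johnson walk of \Cref{thm:mnrs} on \(r=\Theta(\sqrt{n_a})\)-subsets of \(V_a\), storing the left lists. A subset is marked if some member has an orthogonal partner in \([n]\).
\begin{itemize}
\item Setup: \(S=\widetilde O(r\sqrt{na})\).
\item Update: \(U=\widetilde O(\sqrt{na})\), by enumeration via \Cref{lem:implicit-set-access}.
\item Checking: use variable-time search (\Cref{lem:unknown-cost-search}) over partners \(j\in[n]\). For each \(j\), first enumerate \(\Right{j}\) at cost \(\widetilde O(\sqrt{ns})\). Then run an inner variable-time search over \(i\in I\cap N_G(j)\), Grover-searching the union of the three sets; each test's cost \(\sqrt{a+s+|\Remaining{ij}|}\) is known without queries. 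Summing squares over \(j\) and \(i\) and using fact 3 gives
\[
C=\widetilde O\!\left(n\sqrt s+\sqrt{n r a+ r n t}\right).
\]
\end{itemize}
The walk cost is then
\[
\widetilde O\!\left(\sqrt{n\,a\,n_a}+\sqrt{n}\,\sqrt{na}+n_a^{1/4}\bigl(n\sqrt s+\sqrt{n a}\,n_a^{1/4}+\sqrt{nt}\,n_a^{1/4}\bigr)\right).
\]

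\textbf{Main obstacle.} The update term \(\sqrt n\,U=n\sqrt a\) is not controlled by \(a\,n_a\) alone when \(n_a\) is small. We handle this by Grover-searching \(V_a\) directly when \(n_a\) is small, and by capping \(a\le n\). With the class-size bound, every term then reduces to \(n^2/\sqrt{st}+n\sqrt t+n^{5/4}\sqrt s\), plus preprocessing \(ns+n^2/t\). Substituting the chosen \(s,t\) gives \(\widetilde O(n^{17/12})\).

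Two further points must be checked. First, the marking condition of \Cref{thm:mnrs} holds because a fixed endpoint of an orthogonal pair lies in some class. Second, the constant-probability event in fact 1 must be amplified; we do this by repetition, since one-sided checking means false positives never occur.
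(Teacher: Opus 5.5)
Your plan follows the paper's proof step for step: the same candidate graph, pivots, test sets, three counting facts, size classes, two-level variable-time checker, and Johnson walk per class. However, the final cost accounting contains an error, and the ``main obstacle'' you then try to patch is an artifact of that error. The walk runs on \(\binom{V_a}{r}\), so the ground set in \Cref{thm:mnrs} has size \(n_a\). The update term is therefore \(\sqrt{n_a}\,U=\Ot(\sqrt{n\,a\,n_a})\), not \(\sqrt n\,U=\Ot(n\sqrt a)\). You already used \(n_a\) correctly in the checking term \(\sqrt{n_a/r}\,C=n_a^{1/4}C\). With the correct factor, the update term matches the setup term and is \(O(n\sqrt t+n^2/\sqrt{st})\) by \(a\,n_a=O(nt+n^3/(st))\). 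So no obstacle arises, and the bound follows exactly as in the paper.

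Your proposed remedy would not rescue the bound under your own accounting. A class with \(a=n_a=n^{11/12}\) is compatible with \(a\,n_a=O(n^{11/6})\). For it, your costed walk pays \(n\sqrt a=n^{35/24}\). Direct search over \(V_a\) costs roughly \(\sqrt{n_a}\cdot n=n^{35/24}\), since each index must be tested against all partners. Both exceed \(n^{17/12}=n^{34/24}\), and capping \(a\le n\) does not help here.

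Second, your first costing of the size labels is not within budget. For \(t=\lceil n^{5/6}\rceil\), \(n^2/\sqrt t=n^{19/12}\). Part~1 of \Cref{lem:implicit-set-access} costs \(\Ot(\sqrt n)\) per index, i.e.\ \(\Ot(n^{3/2})\) in total, which is also too much. Only your alternative works: estimate at scale \(t\), for instance by reading \(\Theta((n/t)\log n)\) random coordinates of every vector, as the paper does, at cost \(\Ot(n^2/t)\).

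Finally, no amplification of the Markov event in fact~1 is needed, since it already holds with probability \(0.99\).
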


We now describe the algorithm for $\promOV_n$. Let $s,t\in[n]$ be independent integer parameters that we will eventually choose to be $s=\lceil n^{1/3}\rceil$ and $t=\lceil n^{5/6}\rceil$. We begin by describing a useful preprocessing step.  

\paragraph{Preprocessing and test sets.}

Sample a multiset of \(\Theta(\log^2 n)\) coordinates from $[n]$ independently and uniformly at random, and query them in every vector. Define the candidate graph \(G\) on $[n]$ to have an edge $(i,j)$ between distinct vertices $i, j$ iff none of the sampled coordinates $k$ satisfy that $x_i(k) = x_j(k) = 1$, and write \(N_G(i)\) for the neighbors of the vertex \(i\).

\begin{lemma}[Candidate pairs]\label{lem:ov-promise-candidates}
Any orthogonal pair of distinct vectors $x_i$ and $x_j$ satisfies $\{i,j\}\in E(G)$. Moreover, with probability at
least \(0.99\), every distinct pair of vectors $x_i$ and $x_j$ with $\{i,j\}\in E(G)$ has at
most \(c_{\mathrm{low}}\log n\) common ones.
\end{lemma}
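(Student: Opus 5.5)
The first claim is deterministic, and the second is a union bound over pairs that the promise puts in the large-intersection regime.

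For the first claim, fix an orthogonal pair \(i\ne j\). Since \(\supp(x_i)\cap\supp(x_j)=\varnothing\), no coordinate \(k\) satisfies \(x_i(k)=x_j(k)=1\). In particular no sampled coordinate does, so \(\{i,j\}\in E(G)\) by the definition of \(G\), whatever the sample.

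For the second claim, let \(L\) be the number of sampled coordinates, so \(L=\lceil c\log^2 n\rceil\) for a constant \(c\) to be fixed. By the promise, any distinct pair with more than \(c_{\mathrm{low}}\log n\) common ones has at least \(c_{\mathrm{high}}n/\log n\) of them. Fix such a pair. Each independent uniform coordinate is a common one with probability at least \(c_{\mathrm{high}}/\log n\). The probability that none of the \(L\) samples is a common one is therefore at most
\[
\Bigl(1-\frac{c_{\mathrm{high}}}{\log n}\Bigr)^{L}\le \exp\!\Bigl(-\frac{c_{\mathrm{high}}L}{\log n}\Bigr)\le \exp(-c\,c_{\mathrm{high}}\log n)=n^{-c\,c_{\mathrm{high}}}.
\]
Only in this event can the pair be an edge of \(G\).

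There are fewer than \(n^2\) pairs, so a union bound shows that some large-intersection pair survives in \(G\) with probability at most \(n^{2-c\,c_{\mathrm{high}}}\). Choose \(c\) so that \(c\,c_{\mathrm{high}}\ge 3\); this probability is then at most \(1/n\le 0.01\) for \(n\ge100\). Smaller \(n\) are handled by enlarging the constant \(c\) so that the bound holds there too, or they affect only constants, since those instances can be solved by reading the whole input.

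Outside this failure event, every edge \(\{i,j\}\) of \(G\) is a pair that the promise does not place in the large regime. It therefore has at most \(c_{\mathrm{low}}\log n\) common ones.

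No step is a real obstacle. The only care needed is to choose the constant hidden in \(\Theta(\log^2 n)\) in terms of \(c_{\mathrm{high}}\), so that the per-pair failure probability \(n^{-\Omega(1)}\) beats the \(n^2\) union bound. The \(\log^2 n\) sample size is exactly what compensates for the \(1/\log n\) density of common ones.
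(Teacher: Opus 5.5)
Your proposal is correct and follows the paper's proof: orthogonal pairs survive deterministically, each large-intersection pair survives with probability at most $\bigl(1-c_{\mathrm{high}}/\log n\bigr)^{\Theta(\log^2 n)}=n^{-\Omega(1)}$, and a union bound over the $O(n^2)$ pairs together with the promise gives the claim. The only difference is cosmetic: the paper takes the per-pair bound to be $n^{-4}$, while you tune the sampling constant so that $c\,c_{\mathrm{high}}\ge 3$.
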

\begin{proof}
An orthogonal pair has no common one at any coordinate, so it always
survives the sample. A pair with at least
\(c_{\mathrm{high}}n/\log n\) common ones is missed by a uniform
coordinate with probability at most \(1-c_{\mathrm{high}}/\log n\).
Independence of the draws therefore bounds its survival probability by
\[
 \left(1-\frac{c_{\mathrm{high}}}{\log n}\right)^{\Theta(\log^2 n)}
 \le \exp\bigl(-\Omega(\log n)\bigr)\le n^{-4},
\]
where the sampling constant is chosen sufficiently large. A union
bound over \(O(n^2)\) pairs eliminates every pair with many
common ones with probability at least \(0.99\). By the promise, each surviving pair then has at most
\(c_{\mathrm{low}}\log n\) common ones.
\end{proof}

We next sample a multiset \(\Pivots\) of \(s\) indices from \([n]\), independently and uniformly at random, and query all coordinates of the corresponding vectors. For each coordinate \(k\), define
\begin{equation}\label{eq:ov-promise-coverage}
 \begin{aligned}
 \Covered{k}&=\bigcup_{p\in\Pivots:\,x_p(k)=1}N_G(p),\\
 \Heavy{k}&=\{i\in[n]:|N_G(i)\setminus\Covered{k}|>t\}.
 \end{aligned}
\end{equation}
Thus $j$ is \textit{covered} (i.e. \(j\in\Covered{k}\)) if some sampled vector \(x_p\) has a one at coordinate \(k\) and \(\{p,j\}\in E(G)\), and \textit{uncovered} otherwise. We call \(x_i\) heavy (i.e. $i \in \Heavy{k}$) at coordinate \(k\) if more than \(t\) of its candidate neighbors lie outside \(\Covered{k}\). These two sets are known after preprocessing, since \(G\) is known and all sampled vectors have been queried.

We use these sets to restrict the coordinates that must be searched when testing whether \(x_i\) and \(x_j\) are orthogonal. Define
\begin{equation}\label{eq:ov-test-sets}
 \begin{aligned}
 \Left{i}&=\{k \in [n]:x_i(k)=1,\ i\in\Heavy{k}\},\\
 \Right{j}&=\{k \in [n]:x_j(k)=1,\ j\in\Covered{k}\},\\
 \Remaining{ij}&=\{k \in [n]:i\notin\Heavy{k},\ j\notin\Covered{k}\},\\
 \Domain{ij}&=\Left{i}\cup\Right{j}\cup\Remaining{ij}.
 \end{aligned}
\end{equation}
The set \(\Remaining{ij}\) is known from preprocessing, whereas the lists \(\Left{i}\) and \(\Right{j}\) will be enumerated only when needed. The following lemma shows that searching \(\Domain{ij}\) suffices and bounds the sizes of its three parts.

\begin{lemma}[Coverage and list bounds]\label{lem:ov-promise-structure}
For any choice of the sets \(\Covered{k}\) and \(\Heavy{k}\), every common one of \(x_i\) and \(x_j\) belongs to \(\Domain{ij}\). With probability at least \(0.98\), the conclusion of \Cref{lem:ov-promise-candidates} and the following bounds hold simultaneously for all \(i,j\in[n]\):
\begin{equation}\label{eq:ov-promise-bounds}
 \sum_i|\Left{i}|\le\frac{100n^3}{st},\qquad
 |\Right{j}|\le c_{\mathrm{low}}s\log n,\qquad
 \sum_{j\in N_G(i)}|\Remaining{ij}|\le nt.
\end{equation}
\end{lemma}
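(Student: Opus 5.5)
The plan is to verify the deterministic coverage claim first, then prove the three bounds in \eqref{eq:ov-promise-bounds} separately, and finish with a union bound.

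\emph{Coverage.} Let \(k\) be a common one of \(x_i\) and \(x_j\). If \(i\in\Heavy{k}\), then \(k\in\Left{i}\). If \(j\in\Covered{k}\), then \(k\in\Right{j}\), since \(x_j(k)=1\). In the remaining case, \(i\notin\Heavy{k}\) and \(j\notin\Covered{k}\), so \(k\in\Remaining{ij}\). This argument uses nothing about how the two sets were chosen.

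\emph{Right and remaining bounds.} These hold on the event of \Cref{lem:ov-promise-candidates}.
\begin{itemize}
\item For \(\Right{j}\): if \(k\in\Right{j}\), some pivot \(p\) with \(\{p,j\}\in E(G)\) has \(x_p(k)=1\). Thus \(k\) is a common one of the candidate pair \(\{p,j\}\). Each of the at most \(s\) pivots contributes at most \(c_{\mathrm{low}}\log n\) such coordinates, which gives \(|\Right{j}|\le c_{\mathrm{low}}s\log n\).
\item For \(\Remaining{ij}\): this bound is deterministic. Swap the order of summation to get \(\sum_{j\in N_G(i)}|\Remaining{ij}|=\sum_{k:\,i\notin\Heavy{k}}|N_G(i)\setminus\Covered{k}|\). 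Each summand is at most \(t\) by the definition of heaviness, and there are at most \(n\) coordinates, so the total is at most \(nt\).
\end{itemize}

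\emph{Left bound (the main step).} Here I would show \(\mathbb E\sum_i|\Left{i}|\le n^3/(st)\) conditioned on any fixed \(G\). Markov's inequality then bounds the probability of exceeding \(100n^3/(st)\) by \(0.01\). A union bound with the \(0.01\) failure probability of \Cref{lem:ov-promise-candidates} gives overall success probability \(0.98\). Since \(G\) depends only on the coordinate sample, it is independent of the pivots, so conditioning on \(G\) is harmless.

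The expectation bound works one coordinate at a time. Fix \(k\) and draw the pivots \(p_1,\dots,p_s\) sequentially. Let \(C_\ell\) be the covered set after \(\ell\) draws, and let \(H_\ell\) be the set of indices \(i\) with \(x_i(k)=1\) and \(|N_G(i)\setminus C_\ell|>t\).
\begin{itemize}
\item \(C_\ell\) only grows, so \(H_\ell\) is nonincreasing and \(H_s\subseteq H_\ell\) for every \(\ell\).
\item If \(p_{\ell+1}\in H_\ell\), then \(x_{p_{\ell+1}}(k)=1\). The whole of \(N_G(p_{\ell+1})\) becomes covered, so more than \(t\) new indices are covered.
\item Therefore \(\mathbb E|C_{\ell+1}\setminus C_\ell|\ge t\,\mathbb E|H_\ell|/n\ge t\,\mathbb E|H_s|/n\).
\item Summing over \(\ell<s\) and using \(|C_s|\le n\) gives \(st\,\mathbb E|H_s|/n\le n\), that is, \(\mathbb E|H_s|\le n^2/(st)\).
\end{itemize}

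Finally, \(\sum_i|\Left{i}|=\sum_k|H_s^{(k)}|\), where \(H_s^{(k)}\) denotes \(H_s\) for coordinate \(k\). Summing the per-coordinate bound over the \(n\) coordinates yields \(n^3/(st)\).

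The step that needs care is the conditional expectation. The key fact is that the draw \(p_{\ell+1}\) is uniform and independent of \(H_\ell\), so it lands in \(H_\ell\) with probability \(|H_\ell|/n\). Combined with the monotonicity \(H_s\subseteq H_\ell\), this justifies the inequality displayed in the third item above.
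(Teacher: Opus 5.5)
Your proposal is correct and follows the paper's proof essentially step for step. It uses the same three-case coverage argument, the same sequential-pivot counting argument followed by Markov for the left lists, the per-pivot bound on the event of \Cref{lem:ov-promise-candidates} for the right lists, double counting for the remaining sets, and a union bound to reach \(0.98\).
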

\begin{proof}
Suppose \(x_i(k)=x_j(k)=1\). If \(i\in\Heavy{k}\), then \(k\in\Left{i}\). If \(j\in\Covered{k}\), then \(k\in\Right{j}\). If neither condition holds, then \(k\in\Remaining{ij}\). In every case, \(k\in\Domain{ij}\).

To prove the first bound, condition on the coordinate sample, so that \(G\) is fixed. Fix a coordinate \(k\), and pick the \(s\) pivot draws one at a time. Let \(\mathrm{count}_\ell\) be the number of indices \(i\) for which \(x_i(k)=1\) and more than \(t\) neighbors of \(i\) remain uncovered after the first \(\ell\) draws. The covered set only grows as more pivots are drawn, so \(\mathrm{count}_\ell\) never increases.

Conditioned on the first \(\ell\) draws, the next pivot is one of these \(\mathrm{count}_\ell\) indices with probability \(\mathrm{count}_\ell/n\). If this happens, more than \(t\) previously uncovered indices become covered. The conditional expected number of newly covered indices is therefore at least \(t\,\mathrm{count}_\ell/n\). Since at most \(n\) indices can become covered altogether, taking expectations and summing over the draws gives
\[
 n\ge\frac{t}{n}\sum_{\ell=0}^{s-1}\mathbb E\mathrm{count}_\ell
   \ge\frac{st}{n}\mathbb E\mathrm{count}_s,
\]
implying that $\mathbb E\mathrm{count}_s\le\frac{n^2}{st}$.
The final count is exactly the number of lists \(\Left{i}\) containing \(k\). Summing over the \(n\) coordinates, we obtain
\[
 \mathbb E\sum_i|\Left{i}|\le\frac{n^3}{st}.
\]
Markov's inequality now gives \(\sum_i|\Left{i}|\le100n^3/(st)\) with failure probability at most \(1/100\). 

For the second bound, every \(k\in\Right{j}\) is a common one of \(x_j\) and some sampled vector \(x_p\) with \(\{p,j\}\in E(G)\). On the event in \Cref{lem:ov-promise-candidates}, each sampled vector contributes at most \(c_{\mathrm{low}}\log n\) such coordinates. There are \(s\) pivot draws, so \(|\Right{j}|\le c_{\mathrm{low}}s\log n\).

For the third bound, count the pairs \((j,k)\) with \(j\in N_G(i)\) and \(k\in\Remaining{ij}\) using two different ways:
\[
 \sum_{j\in N_G(i)}|\Remaining{ij}|
 =\sum_{k:\,i\notin\Heavy{k}}|N_G(i)\setminus\Covered{k}|\le nt,
\]
where the inequality follows by the definition of \(\Heavy{k}\). Combining the sampling guarantee from \Cref{lem:ov-promise-candidates} with the first bound's failure probability proves that all the conclusions hold with probability at least \(0.98\).
\end{proof}

We next group the vectors according to estimates of their left list sizes. Independently sample a multiset \(K\) of \(\Theta((n/t)\log n)\) coordinates and query them in every vector. For each \(i\in[n]\), define \(\Size{i}\) to be the least power of two no smaller than
\begin{equation}\label{eq:ov-size-definition}
 \max\left\{2t,\frac{2n}{|K|}
                \sum_{k\in K}\ind{k\in\Left{i}}\right\}.
\end{equation}
Membership in \(\Left{i}\) at a sampled coordinate is determined by the queried bit $x_i(k)$ and the known set \(\Heavy{k}\). Thus \(\Size{i}\) is known for all $i$ after these queries. For each power of two \(a\) with \(2t\le a\le4n\), define
\begin{equation}\label{eq:ov-classes}
 V_a=\{i:\Size{i}=a\},\qquad n_a=|V_a|.
\end{equation}
The set \(V_a\) contains the indices of vectors with size label \(a\).
We will use the same definitions in the general case. The next lemma shows that each size label bounds the corresponding list size, while the sum of the size labels remains small.

\begin{lemma}[Size labels]\label{lem:ov-size}
Conditioned on \(\sum_i|\Left{i}|\le100n^3/(st)\), with probability at least $1-2n^{-3}$ the labels in \eqref{eq:ov-size-definition} satisfy
\begin{equation}\label{eq:ov-size-bounds}
 |\Left{i}|\le\Size{i}\le8(|\Left{i}|+t),\qquad
 a n_a\le\sum_i\Size{i}\le8nt+\frac{800n^3}{st}.
\end{equation}
Moreover, there are \(O(\log n)\) nonempty size classes.
\end{lemma}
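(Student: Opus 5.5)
The proof is a Chernoff estimate for each index separately, followed by a union bound and a summation. Fix \(i\) and write \(L=|\Left{i}|\). The multiset \(K\) is drawn independently of everything that determines \(\Left{i}\), namely the candidate graph \(G\), the pivots, and hence \(\Heavy{k}\). So \(X_i=\sum_{k\in K}\ind{k\in\Left{i}}\) is a sum of \(|K|\) independent Bernoulli variables with mean \(L/n\), and \(\mathbb E X_i=|K|L/n\). The estimate inside the maximum in \eqref{eq:ov-size-definition} is \(\widehat L_i\coloneqq 2nX_i/|K|\), whose mean is \(2L\). Write \(|K|=c(n/t)\log n\) with a large constant \(c\). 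Then \(\mathbb E X_i=c\,(L/t)\log n\).

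The lower bound \(|\Left{i}|\le\Size{i}\) is immediate if \(L\le 2t\), because \(\Size{i}\ge 2t\). If \(L>2t\), then \(\mathbb E X_i\ge 2c\log n\). The multiplicative Chernoff bound then gives \(X_i\ge \mathbb E X_i/2\), that is \(\widehat L_i\ge L\), except with probability \(\exp(-\mathbb E X_i/8)\le n^{-4}\) once \(c\) is large enough. Since \(\Size{i}\ge\widehat L_i\), this gives \(\Size{i}\ge L\).

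For the upper bound, rounding up to a power of two loses at most a factor \(2\), so it suffices to show \(\max\{2t,\widehat L_i\}\le 4(L+t)\). This follows from \(\widehat L_i\le 4(L+t)\), which is the statement \(X_i\le 2\,\mathbb E X_i+2c\log n\). Here I use the Chernoff tail in the form \(\Pr[X\ge \mu+\delta]\le\exp(-\delta^2/(2\mu+\delta))\) with \(\mu=\mathbb E X_i\) and \(\delta=\mu+2c\log n\). The exponent is at least \(\delta/3\ge (2c/3)\log n\), so the failure probability is at most \(n^{-4}\) for large \(c\). This additive slack at scale \(t\) is exactly why the argument needs \(|K|=\Theta((n/t)\log n)\), and it is the only delicate point. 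A union bound over both tails and all \(n\) indices gives total failure probability at most \(2n\cdot n^{-4}\le 2n^{-3}\).

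On this event, \(a n_a=\sum_{i\in V_a}\Size{i}\le\sum_i\Size{i}\), since every index in \(V_a\) has label \(a\). Summing the per-index upper bound gives \(\sum_i\Size{i}\le 8nt+8\sum_i|\Left{i}|\). The conditioning \(\sum_i|\Left{i}|\le100n^3/(st)\) then gives \(8nt+800n^3/(st)\). Conditioning on that event does not affect the Chernoff estimates, because \(K\) is independent of it.

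For the count of classes, \(\widehat L_i\le 2n\) always, since \(X_i\le|K|\). Also \(2t\le 2n\), so every label is a power of two between \(2t\) and \(4n\). There are at most \(\log_2(4n)+1=O(\log n)\) such powers, hence \(O(\log n)\) nonempty size classes.
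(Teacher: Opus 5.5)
Your proof is correct and follows the paper's argument essentially step for step. It uses a per-index two-sided Chernoff estimate for the binomial count of sampled coordinates in \(\Left{i}\): an upper tail with additive slack at scale \(t\), and a multiplicative lower tail needed only when \(|\Left{i}|>2t\). It then takes a union bound over the \(2n\) failure events, sums against the conditioned bound on \(\sum_i|\Left{i}|\), and notes that all labels are powers of two in \([2t,4n]\). Your explicit remarks on why \(K\) is independent of the conditioning event, and why \(\widehat L_i\le 2n\), supply details the paper leaves implicit.
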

\begin{proof}
Let \(m=|K|\). For a fixed vector \(x_i\), let \(\ell\) count the sampled coordinates in \(\Left{i}\), with multiplicity. Then \(\ell\) is binomial with mean \(m|\Left{i}|/n\), so \(n\ell/m\) estimates \(|\Left{i}|\). Chernoff bounds give
\[
 \Pr\!\left[\frac{n\ell}{m}>2(|\Left{i}|+t)\right]\le e^{-mt/(3n)},
 \qquad
 \Pr\!\left[\frac{n\ell}{m}<\frac {|\Left{i}|} 2\right]
 \le e^{-m |\Left{i}|/(8n)},
\]
which is $\le e^{-mt/(4n)}$ if $|\Left{i}|>2t$.
For the first inequality, the threshold for \(\ell\) is at least twice its mean and exceeds that mean by at least \(mt/n\). The second inequality is the usual lower tail bound. Since \(m=\Theta((n/t)\log n)\), choosing the sampling constant sufficiently large makes each failure probability at most \(n^{-4}\). A union bound makes both estimates hold for all vectors with probability at least \(1-2n^{-3}\).
On this event, if \(|\Left{i}|\le2t\), the definition of \(\Size{i}\) implies \(\Size{i}\ge|\Left{i}|\). If \(|\Left{i}|>2t\), the lower estimate gives \(2n\ell/m\ge |\Left{i}|\), so the same conclusion holds. 
Rounding up to a power of two increases a positive number by at most a factor of two. 
The upper estimate therefore gives
\[
 \Size{i}\le2\max\{2t,2n\ell/m\}\le8(|\Left{i}|+t).
\]
Summing over all vectors and using the assumed list bound, we obtain
\[
 \sum_i\Size{i}\le8\left(\sum_i|\Left{i}|+nt\right)
 \le8nt+\frac{800n^3}{st}.
\]
Since every index in \(V_a\) has size label \(a\), we also have \(a n_a=\sum_{i\in V_a}\Size{i}\le8nt+800n^3/(st)\). Finally, all size labels are powers of two between \(2t\) and \(4n\) so there are \(O(\log n)\) nonempty size classes.
\end{proof}

\paragraph{The checker.}

Let $r \in [n]$ be an integer parameter that we will eventually choose to be \(r=\lfloor\sqrt{n_a}\rfloor\). We now describe a subroutine that, given a set of vectors of size $r$ and information about the left lists of elements in this set, checks whether any of them has an orthogonal partner anywhere in the input. Throughout the rest of this subsection, condition on the events in \Cref{lem:ov-promise-structure,lem:ov-size}.

Fix a nonempty set \(I\subseteq V_a\), let \(r=|I|\), and suppose that the lists \((\Left{i})_{i\in I}\) have been stored. For a fixed \(j\in[n]\), first enumerate \(\Right{j}\). We can then test any candidate \(i\in I\cap N_G(j)\) by searching \(\Domain{ij}\) for a common one of \(x_i\) and \(x_j\). By \Cref{lem:ov-promise-structure}, this test accepts exactly when the two vectors are orthogonal.

We apply the known-cost consequence of \Cref{lem:unknown-cost-search} first over \(i\in I\cap N_G(j)\), and then over \(j\in[n]\). To bound the query costs, we will define
\begin{equation}\label{eq:ov-promise-budget}
 \Budget{j}=r(a+s)+\sum_{i\in I\cap N_G(j)}|\Remaining{ij}|.
\end{equation}
These quantities are known from preprocessing and \(I\). 
Algorithm~\ref{alg:ov-promise-check} provides the checker. Each search returns \textsc{Yes} if it finds an accepting test and \textsc{No} otherwise.

\begin{algorithm}[H]
\caption{Promise checker}\label{alg:ov-promise-check}
\small
\begin{algorithmic}[1]
\Require \(I\subseteq V_a\), \(r=|I|\), and \((\Left{i})_{i\in I}\).
\Function{PromiseRight}{$j$}
  \State Enumerate \(\Right{j}\) using \Cref{lem:implicit-set-access}, part~3.
  \State For \(i\in I\cap N_G(j)\), search \(\Domain{ij}\) for \(x_i(k)x_j(k)=1\) and accept if no such \(k\) is found.
  \State \Return search over \(i\in I\cap N_G(j)\) using \Cref{lem:unknown-cost-search}.
\EndFunction
\Function{PromiseCheck}{$I,a$}
  \State \Return search over \(j\in[n]\) using \Call{PromiseRight}{$j$} and \Cref{lem:unknown-cost-search}.
\EndFunction
\end{algorithmic}
\end{algorithm}

\begin{lemma}[Promise checking cost]\label{lem:ov-promise-check}
Given \(I\subseteq V_a\) and the lists \((\Left{i})_{i\in I}\), Algorithm~\ref{alg:ov-promise-check} decides whether some \(x_i\), with \(i\in I\), has an orthogonal partner. Its query cost is
\begin{equation}\label{eq:ov-checking}
 C=\Ot\!\left(n\sqrt s+\sqrt{nra}\right).
\end{equation}
\end{lemma}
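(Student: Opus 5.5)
Correctness comes first and is short. For fixed \(j\) and \(i\in I\cap N_G(j)\), \Cref{lem:ov-promise-structure} says every common one of \(x_i,x_j\) lies in \(\Domain{ij}\). So a Grover search over \(\Domain{ij}\) for a coordinate \(k\) with \(x_i(k)x_j(k)=1\) finds nothing exactly when the pair is orthogonal. Indices \(i\in I\setminus N_G(j)\) may be skipped, since by \Cref{lem:ov-promise-candidates} every orthogonal pair is an edge of \(G\). Hence the outer search accepts iff some \(i\in I\) has an orthogonal partner \(j\ne i\). All subroutine errors are inverse polynomial, so a union bound over the polynomially many calls keeps the total error bounded.

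For the cost I would first bound the inner test. Once \(\Right{j}\) is stored and \(\Left{i}\) is given, \(\Domain{ij}\) is an explicitly known set. By \Cref{lem:ov-size} and since \(i\in V_a\), we have \(|\Left{i}|\le\Size{i}=a\). By \eqref{eq:ov-promise-bounds}, \(|\Right{j}|=O(s\log n)\). Grover search with two queries per coordinate therefore costs
\[
\widetilde O\bigl(\sqrt{a+s+|\Remaining{ij}|}\bigr),
\]
with a weight \(w_{ij}=a+s+|\Remaining{ij}|\) known without queries. The known-cost form of \Cref{lem:unknown-cost-search} then searches over \(i\in I\cap N_G(j)\) in
\[
\widetilde O\Bigl(\sqrt{\textstyle\sum_i w_{ij}}\Bigr)\le\widetilde O\bigl(\sqrt{\Budget{j}}\bigr),
\]
using \eqref{eq:ov-promise-budget}. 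Before this search we enumerate \(\Right{j}\) by \Cref{lem:implicit-set-access}, part~3. Its membership test is one query, since \(\Covered{k}\) is known, and it uses \(K=c_{\mathrm{low}}s\log n\), so enumeration costs \(\widetilde O(\sqrt{ns})\). The test for \(j\) thus costs \(\widetilde O(\sqrt{ns}+\sqrt{\Budget{j}})\), whose square is \(\widetilde O(ns+\Budget{j})\), again known in advance.

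Applying the known-cost search once more over \(j\in[n]\) gives
\[
\widetilde O\Bigl(\sqrt{\textstyle\sum_j\bigl(ns+\Budget{j}\bigr)}\Bigr).
\]
Here \(\sum_j ns=n^2s\) produces the term \(n\sqrt s\). The sum of budgets splits into \(nr(a+s)\) plus
\[
\sum_j\sum_{i\in I\cap N_G(j)}|\Remaining{ij}|=\sum_{i\in I}\sum_{j\in N_G(i)}|\Remaining{ij}|\le rnt,
\]
by swapping the order of summation and using the third bound in \eqref{eq:ov-promise-bounds}. Since labels satisfy \(a\ge 2t\), the term \(rnt\) is absorbed into \(nra\). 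The remaining term \(nrs\) is at most \(n^2s\), since \(r\le n\). Altogether this gives \(C=\widetilde O(n\sqrt s+\sqrt{nra})\).

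The step needing the most care is justifying that the nested variable-time searches compose. The inner test for \(j\) must be wrapped so that it returns \textsc{Unfinished} below its allowance \(\lceil\sqrt{ns+\Budget{j}}\rceil\) (up to polylog factors) and is correct at or above it. This holds because the inner search is a bounded-error algorithm with a fixed, known budget, which can be truncated and amplified with polylogarithmic overhead.
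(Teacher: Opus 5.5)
Your proposal is correct and follows essentially the same argument as the paper. You enumerate $\Right{j}$ in $\Ot(\sqrt{ns})$ queries, run the known-cost variable-time search over $i\in I\cap N_G(j)$ with total weight at most $\Budget{j}$, then search over $j$ with weights $ns+\Budget{j}$, bounding $\sum_j\Budget{j}=O(nr(a+s))$ by swapping the order of summation in the remaining-set bound and using $a\ge2t$ and $r\le n$; your correctness argument (every orthogonal pair is an edge of $G$, and $\Domain{ij}$ contains every common one) is also the paper's.
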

\begin{proof}
Fix \(j\in[n]\). Testing whether \(k\in\Right{j}\) requires at most one query to \(x_j(k)\), since \(\Covered{k}\) is known. By \Cref{lem:implicit-set-access} and the bound on \(|\Right{j}|\), enumerating this list costs \(\Ot(\sqrt{ns})\) queries.

Once \(\Right{j}\) is stored, the domain \(\Domain{ij}\) is known for every \(i\in I\cap N_G(j)\). Grover search over this domain uses
\[
 \Ot\!\left(\sqrt{a+s\log n+|\Remaining{ij}|+1}\right)
 =\Ot\!\left(\sqrt{a+s+|\Remaining{ij}|}\right)
\]
queries, where we use \(|\Left{i}|\le a\). 

By \Cref{lem:unknown-cost-search}, searching over the candidate indices \(i\) therefore costs \(\Ot(\sqrt{\Budget{j}})\). 

It remains to search over \(j\). The total budget satisfies
\[
 \sum_j\Budget{j}
 =nr(a+s)+\sum_{i\in I}\sum_{j\in N_G(i)}|\Remaining{ij}|
 \le nr(a+s)+rnt=O(nr(a+s)),
\]
where we use \Cref{lem:ov-promise-structure} and \(a\ge2t\). Since \(r\le n\), the term \(nrs\) is at most \(n^2s\). A second application of \Cref{lem:unknown-cost-search} gives
\[
 C=\Ot\!\left(\sqrt{\sum_j(ns+\Budget{j})}\right)
   =\Ot\!\left(n\sqrt s+\sqrt{nra}\right).
\]
Every orthogonal pair is an edge of \(G\), and the test domain contains every common one. Thus the checker accepts exactly when a vector indexed by \(I\) has an orthogonal partner, up to the error probability of the quantum searches.
\end{proof}

\paragraph{The quantum walk.}

We now use this checker in a quantum walk over each nonempty class \(V_a\). For each such class, let \(r=\lfloor\sqrt{n_a}\rfloor\). The states are the \(r\)-element subsets \(I\in\binom{V_a}{r}\), and each state stores the lists \((\Left{i})_{i\in I}\). A state is marked if some \(x_i\), with \(i\in I\), has an orthogonal partner anywhere in the input.

We use the Johnson walk from \Cref{sec:quantum-tools}. Each step replaces one index in \(I\) and updates its stored list. Let \(S\) be the query cost of preparing a state with its lists, and let \(U\) be the query cost of one replacement. The checker has cost \(C\) from \Cref{lem:ov-promise-check}. The full algorithm for $\promOV_n$ is given below.

\begin{algorithm}[H]
\caption{Promise algorithm}\label{alg:ov-promise}
\small
\begin{algorithmic}[1]
\Require \(x_1,\ldots,x_n\) satisfying the promise of \(\promOV\).
\State \(s\gets\lceil n^{1/3}\rceil\), \(t\gets\lceil n^{5/6}\rceil\).
\State Query \(\Theta(\log^2 n)\) sampled coordinates in all vectors; construct \(G\).
\State \(\Pivots\gets s\) samples from \([n]\); query all pivot entries.
\State \(\Covered{k},\Heavy{k}\gets\eqref{eq:ov-promise-coverage}\) for \(k\in[n]\).
\State \(K\gets\Theta((n/t)\log n)\) independent samples from \([n]\); query all vectors on \(K\).
\State \((\Size{i})_{i\in[n]},(V_a)_a\gets\eqref{eq:ov-size-definition},\eqref{eq:ov-classes}\).
\For{\(a\) with \(n_a>0\)}
  \State \(r\gets\lfloor\sqrt{n_a}\rfloor\).
  \State Run MNRS on \(I\in\binom{V_a}{r}\), storing \((\Left{i})_{i\in I}\),
  \Statex \hspace{3em} with checker \Call{PromiseCheck}{$I,a$}.
  \If{the search accepts} \State \Return \textsc{Yes}. \EndIf
\EndFor
\State \Return \textsc{No}.
\end{algorithmic}
\end{algorithm}

\begin{proof}[Proof of \Cref{thm:ov-promise-main}]
We analyze Algorithm~\ref{alg:ov-promise}, keeping \(s,t\) as independent parameters until the final bound. Membership in \(\Left{i}\) requires at most one query, since \(\Heavy{k}\) is known for every \(k\). By \Cref{lem:implicit-set-access}, enumerating a left list from class \(V_a\) costs \(\Ot(\sqrt{na})\) queries. Preparing \(r\) such lists and replacing one of them therefore have respective costs
\begin{equation}\label{eq:ov-promise-su}
 S=\Ot(r\sqrt{na}),\qquad U=\Ot(\sqrt{na}).
\end{equation}
Suppose a vector with an orthogonal partner has its index in \(V_a\). At least an \(r/n_a\) fraction of the walk states contain that index and are therefore marked. The Johnson walk has spectral gap \(\Theta(1/r)\), so \Cref{thm:mnrs} bounds the cost of searching this class by\footnote{For \(n_a\le2\), prepare the lists and run the checker for each walk state, at total cost \(O(S+C)\), within the same bound.}
\[
 \Ot\!\left(S+\sqrt{n_a}\,U+\sqrt{n_a/r}\,C\right).
\]
Substituting the setup, update, and checking costs, and using \(r=\lfloor\sqrt{n_a}\rfloor\), gives
\begin{equation}\label{eq:ov-class-cost}
 \Ot\!\left(\sqrt{na n_a}+n\sqrt s\,n_a^{1/4}\right)
 \le\Ot\!\left(n\sqrt t+\frac{n^2}{\sqrt{st}}+n^{5/4}\sqrt s\right).
\end{equation}
Here we use \(a n_a=O(nt+n^3/(st))\) from \Cref{lem:ov-size} and \(n_a\le n\).

The preprocessing uses \(\Ot(ns+n^2/t)\) queries. There are \(O(\log n)\) nonempty size classes, so the total query cost is
\begin{equation}\label{eq:ov-total}
 \Ot\!\left(ns+\frac{n^2}{t}+n\sqrt t
       +\frac{n^2}{\sqrt{st}}+n^{5/4}\sqrt s\right).
\end{equation}
Choosing \(s=\lceil n^{1/3}\rceil\) and \(t=\lceil n^{5/6}\rceil\) makes this \(\Ot(n^{17/12})\). 

Finally, every orthogonal pair survives preprocessing, and each of its indices belongs to a size class. A walk state containing either index is marked, and \Cref{lem:ov-promise-check} gives a correct checker. If no orthogonal pair exists, no walk state is marked. The preprocessing and size label events hold with probability at least \(0.97\); reducing the search errors makes the overall success probability at least \(2/3\).
\end{proof}

\subsection{The general case}
\label{sec:ov-general}

In this section, we give an algorithm for the orthogonal vectors problem without the promise on pairwise intersections.


\begin{theorem}[Orthogonal Vectors upper bound]\label{thm:ov-upper-bound}
There is a bounded-error quantum query algorithm for \(\OV_n\) using \(\widetilde O(n^{17/12})\) queries.
\end{theorem}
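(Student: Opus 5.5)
The plan is to follow the promise-case algorithm, replacing the one ingredient that used the promise: the bound $|\Right{j}|=O(s\log n)$, which relied on every candidate pair having $O(\log n)$ common ones. Write $\First{i}{j}$ for the first common one of $x_i$ and $x_j$, or $n+1$ if the pair is orthogonal. Coordinate-dependent candidate neighborhoods are then defined by
\[
 N_k(i)=\{j\ne i:\First{i}{j}\ge k\}.
\]
Coverage is $\Covered{k}=\bigcup_{p\in\Pivots:\,x_p(k)=1}N_k(p)$. With this definition, $k\in\Right{j}$ forces $k=\First{p}{j}$ for some pivot $p$, so each pivot contributes at most one coordinate and $|\Right{j}|\le s$ deterministically. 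The sets $\Left{i}$, $\Right{j}$, $\Remaining{ij}$ and $\Domain{ij}$ keep the form of \eqref{eq:ov-test-sets}. The first part of \Cref{lem:ov-promise-structure}, that every common one lies in $\Domain{ij}$, holds verbatim. The pivot-counting argument for $\mathbb E\sum_i|\Left{i}|\le n^3/(st)$ also goes through unchanged at each fixed $k$, since the neighborhoods at $k$ are fixed before the pivots are drawn and coverage is still monotone in the draws.

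The first new ingredient is computing coverage and heaviness. Since $N_k$ is implicit, I would read an independent sample $R$ of $\Theta((n/t)\log n)$ vectors completely. I would also store, for every $j$, the records $\First{j}{p}$ for all $p\in\Pivots\cup R$. These records are loaded by repeated ordered search (\Cref{lem:successive-discoveries}): advance a cursor over coordinates, and mark $k$ if $x_j(k)=1$ and some not-yet-resolved sampled vector has a one at $k$. Each marked position resolves at least one vector, so there are at most $s+|R|$ discoveries. The cost per vector is $\Ot(\sqrt{n(s+n/t)})$, giving $\Ot(n^{3/2}\sqrt{s+n/t})$ in total. That is too much, so a cheaper approach is needed.

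To fix this, I would load records lazily. Coverage of $j$ at $k$ only matters inside $\Right{j}$ and $\Remaining{ij}$, and those are handled in the checker for a single $j$, so $j$'s records are loaded there at cost $\Ot(\sqrt{n(s+n/t)})$. This adds $n\sqrt{s+n/t}$ to the checker cost $C$. After multiplication by $n^{1/4}$ in the walk, this produces the $n^{5/4}\sqrt s$ and $n^{9/4}/t$-type terms seen in the overview's final bound. Heaviness of $i$ at $k$ is estimated from $R$ as $(n/|R|)\cdot|\{q\in R\cap N_k(i):q\notin\Covered{k}\}|$. Evaluating it needs the records of $i$ with $R$ and pivots, and the records of the elements of $R$ with the pivots. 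The latter cost only $\Ot(n|R|)$ in total. The former are loaded when $i$ enters the walk state, adding $\Ot(\sqrt{n(s+n/t)})$ to $S/r$ and to $U$. I would then choose the threshold with a Chernoff bound so that, with high probability, every declared-heavy $i$ has more than $t$ uncovered neighbors and every non-heavy one has at most $4t$. This yields $\sum_{j}|\{k\in\Remaining{ij}:k\le\First{i}{j}\}|\le4nt$ for every $i$. Size labels are computed as in \Cref{lem:ov-size}, with the records for sampled coordinates obtained similarly.

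The main obstacle is the checker, because $\Remaining{ij}$ is no longer small in total, only its prefix up to $\First{i}{j}$. I would search $\Domain{ij}$ in increasing coordinate order with doubling prefixes, which takes $\Ot(\sqrt{g_{ij}})$ queries. Here $g_{ij}$ is the number of domain positions through $\First{i}{j}$, or $|\Domain{ij}|+1$ if the pair is orthogonal. In both cases $g_{ij}\le a+s+|\{k\in\Remaining{ij}:k\le\First{i}{j}\}|+1$. These costs are unknown in advance, so I would use the unknown-cost form of \Cref{lem:unknown-cost-search} over $i\in I\setminus\{j\}$, with per-$j$ cost $\sqrt{\sum_i g_{ij}}$. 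Searching over $j$ requires an upper bound on $\sum_j\sum_i g_{ij}\le nr(a+s)+4rnt$. Since the per-$j$ completion times are themselves unknown, I would estimate each $\sum_i g_{ij}$ to within a constant factor, bucket the $j$'s by dyadic estimate, and run a separate search in each of the $O(\log n)$ buckets; this is where the extra terms arise. Plugging $C=\Ot(n\sqrt{s+n/t}+\sqrt{nra})$ and the enlarged $S,U$ into \Cref{thm:mnrs}, exactly as in \eqref{eq:ov-class-cost}, should give
\[
 \Ot\!\left(ns+n\sqrt t+\frac{n^2}{\sqrt{st}}+n^{5/4}\sqrt s+\frac{n^{9/4}}{t}\right),
\]
which is $\Ot(n^{17/12})$ for $s=\lceil n^{1/3}\rceil$ and $t=\lceil n^{5/6}\rceil$.
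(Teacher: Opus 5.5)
Your outline follows the paper's route closely: first-intersection neighborhoods $N_k(i)$, the deterministic bound $|\Right{j}|\le s$, an independent degree sample with a $t$ versus $4t$ Chernoff window, lazily loaded records via \Cref{lem:successive-discoveries}, ordered search of $\Domain{ij}$, and estimate-then-bucket over $j$. The gap is in the step you yourself call the main obstacle. You say you would estimate each $\sum_i g_{ij}$ (the paper's $\Work{j}$) ``to within a constant factor,'' but you give no procedure. You then take $C=\Ot(n\sqrt{s+n/t}+\sqrt{nra})$, which contains no term for this estimation.

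A multiplicative estimate obtained by straightforward sampling is not affordable. $\Work{j}$ can be as small as about $r$ inside the domain $I\times[n+1]$ of size $r(n+1)$, so it needs on the order of $n$ sampled pair tests, each a Grover search over a prefix costing $\Ot(\sqrt n)$. That bucket-membership cost is then multiplied by $\sqrt n$ when you estimate bucket sizes and sample from buckets, and by $n_a^{1/4}$ in the walk, which destroys the bound.

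The missing idea is that you only need an upper bound with additive slack $ra$. Since $\sum_j ra=nra$ already fits in the $O(nr(a+s))$ total, it suffices to do the following:
\begin{itemize}
\item Fix, once per checker invocation, a sample of $\Theta((n/a)\log n)$ pairs $(i,k)\in I\times[n+1]$.
\item Count a hit when $k\in\Domain{ij}\cup\{n+1\}$ and there is no common one before $k$.
\item Set $\Budget{j}=2ra+2r(n+1)\cdot(\text{hit fraction})$.
\end{itemize}
This gives $\Work{j}\le\Budget{j}\le6(\Work{j}+ra)$ at $\Ot(\sqrt{n(s+n/t)}+n^{3/2}/a)$ queries per $j$. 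Keeping the sample fixed also makes bucket membership a well-defined function of $j$ for the coherent searches.

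This is not a lower-order detail. Estimating $|J_B|$ and preparing uniform members then adds $n^2/a$ to $C$, hence $n_a^{1/4}n^2/a\le n^{9/4}/t$ to the walk. At $t=n^{5/6}$ that equals $n^{17/12}$, so it sits exactly at the final bound. It is the actual source of the $n^{9/4}/t$ term; your attribution to record loading is off, since that contributes only $n_a^{1/4}n\sqrt{n/t}\le n^{7/4}/\sqrt t$.

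There is a smaller omission as well. Size labels require $\Record{i}$, so the classes $V_a$ cannot be computed in preprocessing, and you do not say how the walk handles this. You must estimate $n_a$ and select class members by amplitude amplification with an $\Ot(\sqrt{n(s+n/t)})$-query membership test. This adds $n\sqrt{(s+n/t)/n_a}$ to $U$ and $r$ times that to $S$. These terms are dominated, but the step is needed for the walk over $\binom{V_a}{r}$ to be implementable.
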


We now describe the algorithm for $\OV_n$. Let $s,t\in[n]$ be independent integer parameters that we will eventually choose to be $s=\lceil n^{1/3}\rceil$ and $t=\lceil n^{5/6}\rceil$. We use the same test sets, size classes, and quantum walk as in the promise case. The main difference is that candidate neighborhoods now depend on the coordinate: a pair remains a candidate until its first common one. We will also need subroutines to access the test sets and estimate the checking costs, since these are no longer known from preprocessing.

\paragraph{Preprocessing and test sets.}

For distinct indices \(i,j\in[n]\), let \(\First{i}{j}\) be the least coordinate \(k\) with \(x_i(k)x_j(k)=1\), or \(n+1\) if the two vectors are orthogonal. Define $\First{i}{i}=0$ and
\begin{equation}\label{eq:ov-first}
 N_k(i)=\{j\in[n]:\First{i}{j}\ge k\}.
\end{equation}
Thus \(j\in N_k(i)\) iff \(i\ne j\) and the two vectors have no common one before coordinate \(k\). These sets take the place of the fixed neighborhoods \(N_G(i)\) in the promise case. We use them implicitly, without computing all pairwise first intersections.

Independently sample multisets \(\Pivots\) and \(\DegSamp\) of indices from \([n]\), uniformly at random, and query all coordinates of the corresponding vectors. The first sample is used for defining the covered sets, while the second is used for estimating the number of uncovered candidate neighbors. As in the promise case, \(s\) is the number of pivot draws. We choose
\begin{equation}\label{eq:ov-pools}
 |\Pivots|=s,\quad |\DegSamp|=\Theta\left(\frac nt\log n\right),\quad
 |\Pivots|+|\DegSamp|=\Ot(s+n/t).
\end{equation}
For each coordinate \(k\), define
\begin{equation}\label{eq:ov-general-coverage}
 \begin{aligned}
 \Covered{k}&=\bigcup_{p\in\Pivots:\,x_p(k)=1}N_k(p),\\
 \Heavy{k}&=\left\{i\in[n]:
 \frac{n}{|\DegSamp|}\sum_{p\in\DegSamp}
 \ind{\First{i}{p}\ge k}\ind{p\notin\Covered{k}}>2t\right\}.
 \end{aligned}
\end{equation}
As before, \(j\) is covered at coordinate \(k\) if some sampled vector has a one there and has \(j\) as a candidate neighbor. We call \(x_i\) heavy at \(k\) if the estimated number of its uncovered candidate neighbors exceeds \(2t\). The independent degree sample ensures that this estimate is accurate even though the covered sets depend on \(\Pivots\).

Using the covered and heavy sets in \eqref{eq:ov-general-coverage}, define
\begin{equation}\label{eq:ov-general-test-sets}
 \begin{aligned}
 \Left{i}&=\{k\in[n]:x_i(k)=1,\ i\in\Heavy{k}\},\\
 \Right{j}&=\{k\in[n]:x_j(k)=1,\ j\in\Covered{k}\},\\
 \Remaining{ij}&=\{k\in[n]:i\notin\Heavy{k},\ j\notin\Covered{k}\},\\
 \Domain{ij}&=\Left{i}\cup\Right{j}\cup\Remaining{ij}.
 \end{aligned}
\end{equation}
Unlike in the promise case, these sets are not known after preprocessing. We first prove their size bounds, and then describe how to access them.

\begin{lemma}[General preprocessing bounds]\label{lem:ov-general-structure}
For any choice of the sets \(\Covered{k}\) and \(\Heavy{k}\), every common one of \(x_i\) and \(x_j\) belongs to \(\Domain{ij}\). With probability at least \(0.98\), the following bounds hold simultaneously for all indices and coordinates:
\begin{equation}\label{eq:ov-degree-bounds}
 \begin{aligned}
 i\in\Heavy{k}&\ \Longrightarrow\ |N_k(i)\setminus\Covered{k}|>t,\\
 i\notin\Heavy{k}&\ \Longrightarrow\ |N_k(i)\setminus\Covered{k}|\le4t,
 \end{aligned}
\end{equation}
and
\begin{equation}\label{eq:ov-general-bounds}
 \sum_i|\Left{i}|\le\frac{100n^3}{st},\qquad |\Right{j}|\le s,\qquad
 \sum_{j\ne i}\bigl|\{k\in\Remaining{ij}:k\le\First{i}{j}\}\bigr|\le4nt.
\end{equation}
\end{lemma}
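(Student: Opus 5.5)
The proof follows the promise-case argument of \Cref{lem:ov-promise-structure}, with the fixed candidate graph replaced by the coordinate-dependent neighborhoods \(N_k(\cdot)\). It has five parts.

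\textbf{Domain containment.} This part is purely combinatorial and copies the promise case. Let \(x_i(k)=x_j(k)=1\). If \(i\in\Heavy{k}\), then \(k\in\Left{i}\). If \(j\in\Covered{k}\), then \(k\in\Right{j}\). Otherwise \(k\in\Remaining{ij}\). None of this uses how the two sets are defined.

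\textbf{Degree estimates \eqref{eq:ov-degree-bounds}.} Condition on \(\Pivots\). This fixes every \(\Covered{k}\) and hence every set \(D_{i,k}\coloneqq N_k(i)\setminus\Covered{k}\). Since \(\DegSamp\) is independent of \(\Pivots\), for each fixed pair \((i,k)\) the quantity \(\sum_{p\in\DegSamp}\ind{p\in D_{i,k}}\) is binomial with mean \(|\DegSamp|\,|D_{i,k}|/n\). The indicator \(\ind{\First{i}{p}\ge k}\) with \(\First{i}{i}=0\) is exactly membership in \(N_k(i)\), so the estimator in \eqref{eq:ov-general-coverage} is \(n/|\DegSamp|\) times this count. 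The two cases then follow from Chernoff bounds.
\begin{itemize}
\item If \(|D_{i,k}|\le t\), the estimate exceeds \(2t\) only if the count is at least twice an upper bound on its mean, which has probability \(e^{-\Omega(|\DegSamp|t/n)}\).
\item If \(|D_{i,k}|>4t\), the estimate falls to \(2t\) or below only if the count drops below half its mean, which has probability \(e^{-\Omega(|\DegSamp|\,|D_{i,k}|/n)}\).
\end{itemize}
With \(|\DegSamp|=\Theta((n/t)\log n)\) and a large constant, each failure probability is at most \(n^{-4}\). A union bound over the \(n^2\) pairs \((i,k)\) gives failure probability \(O(n^{-2})\).

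\textbf{Left lists.} I repeat the sequential-pivot argument for a fixed coordinate \(k\). Let \(\mathrm{count}_\ell\) be the number of indices \(i\) with \(x_i(k)=1\) and \(|N_k(i)\setminus\mathrm{Cov}_\ell|>t\), where \(\mathrm{Cov}_\ell\) is the covered set after \(\ell\) pivot draws. Because \(N_k(\cdot)\) is fixed once \(k\) is fixed, coverage only grows and \(\mathrm{count}_\ell\) is nonincreasing. A draw landing on such an index covers more than \(t\) new indices, so the same telescoping bound gives \(\mathbb E\,\mathrm{count}_s\le n^2/(st)\).

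On the event of \eqref{eq:ov-degree-bounds}, every \(i\in\Heavy{k}\) with \(x_i(k)=1\) has \(|D_{i,k}|>t\), so it is counted in \(\mathrm{count}_s\). Hence the number of left lists containing \(k\) is at most \(\mathrm{count}_s\). One subtlety: \(\mathrm{count}_s\) depends only on \(\Pivots\), so I will take its expectation unconditionally. Markov's inequality then gives the bound \(100n^3/(st)\) with probability \(0.99\), and I intersect this with the degree event.

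\textbf{Right lists.} If \(k\in\Right{j}\), then \(x_j(k)=1\) and there is a pivot \(p\) with \(x_p(k)=1\) and \(\First{p}{j}\ge k\). Together these force \(k=\First{p}{j}\). (Here \(p\ne j\), since \(\First{j}{j}=0<k\).) So each pivot draw contributes at most one coordinate, and \(|\Right{j}|\le s\) holds deterministically.

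\textbf{Remaining sets.} This is the step I expect to need the most care. I double count pairs \((j,k)\) with \(j\ne i\), \(k\in\Remaining{ij}\) and \(k\le\First{i}{j}\). The condition \(k\le\First{i}{j}\) says exactly \(j\in N_k(i)\), and \(k\in\Remaining{ij}\) says \(i\notin\Heavy{k}\) and \(j\notin\Covered{k}\). So for each \(k\) with \(i\notin\Heavy{k}\), the number of such \(j\) is \(|N_k(i)\setminus\Covered{k}|\le4t\) by the degree event, and summing over the \(n\) coordinates gives \(4nt\).

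Finally, the total failure probability is at most \(0.01+O(n^{-2})\le0.02\).
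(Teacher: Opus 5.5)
Your proposal is correct and follows the paper's proof essentially step for step. It conditions on \(\Pivots\) for the two Chernoff tails, runs the sequential-pivot counting argument with Markov (intersected with the degree event) for the left lists, uses the first-common-one argument for the right lists, and double counts for the remaining sets. The only difference is minor: the paper proves the full identity \(\Right{j}=\{\First{j}{p}:p\in\Pivots\}\cap[n]\), which it reuses later for record loading, whereas you prove only the inclusion needed for \(|\Right{j}|\le s\); that inclusion suffices for this statement.
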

\begin{proof}
The coverage statement follows from the same three cases as in \Cref{lem:ov-promise-structure}.

Condition on \(\Pivots\), so that every covered set is fixed. For fixed \(i,k\), the number of occurrences in \(\DegSamp\) belonging to \(N_k(i)\setminus\Covered{k}\) is binomial, with mean
\[
 \frac{|\DegSamp|}{n}|N_k(i)\setminus\Covered{k}|.
\]
If the true number of uncovered neighbors is at most \(t\), a Chernoff bound shows that its estimate exceeds \(2t\) with probability at most \(\exp(-\Omega(|\DegSamp|t/n))\). If the true number exceeds \(4t\), the same bound shows that its estimate is at most \(2t\) with at most this probability. Choosing the sampling constant sufficiently large and taking a union bound over all \(n^2\) choices of \(i,k\) proves \eqref{eq:ov-degree-bounds} with failure probability at most \(1/100\).

For the left list bound, fix a coordinate \(k\) and pick the pivot draws one at a time. The neighborhoods \(N_k(i)\) are fixed, so the counting argument in \Cref{lem:ov-promise-structure} applies without change: sampling an index \(i\) with \(x_i(k)=1\) and more than \(t\) uncovered neighbors covers more than \(t\) new indices. Consequently,
\[
 \mathbb E\sum_k
 \bigl|\{i:x_i(k)=1,\ |N_k(i)\setminus\Covered{k}|>t\}\bigr|
 \le\frac{n^3}{|\Pivots|t}=\frac{n^3}{st}.
\]
By Markov's inequality, this sum is at most \(100n^3/(st)\) with failure probability at most \(1/100\). On the degree estimation event, every entry in a left list is counted in this sum, proving \(\sum_i|\Left{i}|\le100n^3/(st)\). 

For the right lists, we have the exact identity
\begin{equation}\label{eq:ov-right-first}
 \Right{j}=\{\First{j}{p}:p\in\Pivots\}\cap[n].
\end{equation}
Indeed, if \(k\in\Right{j}\), then \(x_j(k)=x_p(k)=1\) for some pivot \(p\) with \(\First{j}{p}\ge k\). These conditions imply \(\First{j}{p}=k\). Conversely, a first common one with a pivot belongs to \(\Right{j}\). Each pivot contributes at most one coordinate, so \(|\Right{j}|\le|\Pivots|=s\).

Finally, since \(j\in N_k(i)\) iff \(i\ne j\) and \(k\le\First{i}{j}\), counting the same pairs \((j,k)\) in two orders gives
\[
 \sum_{j\ne i}\bigl|\{k\in\Remaining{ij}:k\le\First{i}{j}\}\bigr|
 =\sum_{k:\,i\notin\Heavy{k}}|N_k(i)\setminus\Covered{k}|\le4nt.
\]
The inequality follows by summing the second bound in \eqref{eq:ov-degree-bounds} over coordinates. Combining the two failure probabilities proves the lemma.
\end{proof}

We next group the vectors according to estimates of their left list sizes. Independently sample a multiset \(K\) of \(\Theta((n/t)\log n)\) coordinates. For each \(i\in[n]\), define \(\Size{i}\) to be the least power of two no smaller than
\begin{equation}\label{eq:ov-general-size-definition}
 \max\left\{2t,\frac{2n}{|K|}
              \sum_{k\in K}\ind{k\in\Left{i}}\right\}.
\end{equation}
For each power of two \(a\) with \(2t\le a\le4n\), define
\begin{equation}\label{eq:ov-general-classes}
 V_a=\{i\in[n]:\Size{i}=a\},\qquad n_a=|V_a|.
\end{equation}
Thus \(V_a\) contains the indices of vectors with size label \(a\), and \(n_a\) is their number. The sample \(K\) is independent of the covered and heavy sets, so \Cref{lem:ov-size} applies to these labels as well. We evaluate a label only when needed, rather than querying \(K\) in every vector during preprocessing. Throughout the rest of this subsection, condition on the events in \Cref{lem:ov-general-structure,lem:ov-size}.

\paragraph{Access to the test sets.}

To determine whether \(i\) belongs to \(\Covered{k}\) or \(\Heavy{k}\), it suffices to know its first intersection with every sampled vector. Define the record of \(x_i\) by
\begin{equation}\label{eq:ov-record}
 \Record{i}=(\First{i}{p})_{p\in\Pivots\sqcup\DegSamp},
\end{equation}
where \(\sqcup\) concatenates the two multisets, keeping one entry for each occurrence. A left record consists of \(\Record{i}\) together with \(\Left{i}\).

We load a record by finding successive common ones between \(x_i\) and sampled vectors whose first intersection has not yet been found. Each discovered coordinate determines the first intersection for every such sampled vector having a one there. Algorithm~\ref{alg:ov-record} gives the procedure. The set \(\mathit{unresolved}\) contains these sample occurrences, and \(\mathit{start}\) is the first coordinate still to be searched. Each search for the least coordinate uses \Cref{lem:successive-discoveries}.

\begin{algorithm}[H]
\caption{Load a record}\label{alg:ov-record}
\small
\begin{algorithmic}[1]
\Require \(i\in[n]\) and the fully read samples \(\Pivots,\DegSamp\).
\State \((\Record{i})_p\gets(n+1)\cdot\ind{p\ne i}\) for \(p\in\Pivots\sqcup\DegSamp\).
\State \(\mathit{unresolved}\gets\{p\in\Pivots\sqcup\DegSamp:p\ne i\}\); \(\mathit{start}\gets1\).
\While{\(\mathit{unresolved}\ne\varnothing\)}
  \State Find the least \(k\ge\mathit{start}\) with \(x_i(k)=1\) and \(\bigvee_{p\in\mathit{unresolved}}x_p(k)=1\).
  \If{no such \(k\) exists} \State \Return \(\Record{i}\). \EndIf
  \State \((\Record{i})_p\gets k\) for \(p\in\mathit{unresolved}\) with \(x_p(k)=1\).
  \State \(\mathit{unresolved}\gets\{p\in\mathit{unresolved}:x_p(k)=0\}\); \(\mathit{start}\gets k+1\).
\EndWhile
\State \Return \(\Record{i}\).
\end{algorithmic}
\end{algorithm}

\begin{lemma}[Record loading]\label{lem:ov-record}
Algorithm~\ref{alg:ov-record} loads \(\Record{i}\) in \(\Ot(\sqrt{n(s+n/t)})\) queries. The record determines membership of \(i\) in every \(\Covered{k}\) and \(\Heavy{k}\), and determines \(\Right{i}\). For \(i\in V_a\), loading its left record costs \(\Ot(\sqrt{na}+\sqrt{n(s+n/t)})\) queries.
\end{lemma}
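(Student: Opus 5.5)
The plan has three parts: bound the query cost of Algorithm~\ref{alg:ov-record} through \Cref{lem:successive-discoveries}, check that the record determines coverage, heaviness and the right list, and then add the cost of enumerating \(\Left{i}\).

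\textbf{Correctness of the loop.} Each iteration searches the ordered domain \([\mathit{start},n]\) for the least coordinate \(k\) with \(x_i(k)=1\) and \(x_p(k)=1\) for some unresolved \(p\). Since all sampled vectors have been fully read, the disjunction over unresolved \(p\) is free. A predicate evaluation therefore costs one query to \(x_i(k)\). Because the search returns the least such \(k\), every unresolved \(p\) with \(x_p(k)=1\) has no common one with \(x_i\) before \(k\). Hence \(\First{i}{p}=k\) is recorded correctly, and the vectors that remain unresolved share no one with \(x_i\) in \([\mathit{start},k]\). So the predicate changes only at marked positions, and positions already passed never need revisiting. Any occurrence still unresolved at termination is orthogonal to \(x_i\) and correctly keeps the value \(n+1\).

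\textbf{Query cost of the loop.} Each successful search resolves at least one occurrence in \(\Pivots\sqcup\DegSamp\). The number of marked positions found is therefore at most \(K=|\Pivots|+|\DegSamp|=\Ot(s+n/t)\) by \eqref{eq:ov-pools}. \Cref{lem:successive-discoveries} with \(M=n\) then gives \(\Ot(\sqrt{n(s+n/t)})\) queries for the whole loop.

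\textbf{What the record determines.} By \eqref{eq:ov-first}, \(i\in N_k(p)\) iff \(\First{p}{i}\ge k\), and \(\First{p}{i}=\First{i}{p}\) is stored in the record. Together with the known bits \(x_p(k)\), this decides \(i\in\Covered{k}=\bigcup_{p\in\Pivots:x_p(k)=1}N_k(p)\) for every \(k\).

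Heaviness is the main obstacle. The sum in \eqref{eq:ov-general-coverage} needs \(\ind{\First{i}{p}\ge k}\), which the record provides, and also \(\ind{p\notin\Covered{k}}\) for each \(p\in\DegSamp\). The latter depends on \(\First{p}{p'}\) for \(p'\in\Pivots\). Both vectors are sampled and fully read, so these values are known without further queries. Hence the record determines membership of \(i\) in every \(\Heavy{k}\).

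The right list follows from the identity \eqref{eq:ov-right-first}: \(\Right{i}=\{\First{i}{p}:p\in\Pivots\}\cap[n]\), which is read directly off the record.

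\textbf{Cost of the left record.} Once the record is known, heaviness at every coordinate is known. Testing \(k\in\Left{i}\) then costs one query to \(x_i(k)\). For \(i\in V_a\), \Cref{lem:ov-size} gives \(|\Left{i}|\le\Size{i}=a\). \Cref{lem:implicit-set-access}, part~3, with \(M=n\) and \(K=a\), enumerates \(\Left{i}\) in \(\Ot(\sqrt{n(a+1)})=\Ot(\sqrt{na})\) queries. Adding the cost of loading the record gives the claimed \(\Ot(\sqrt{na}+\sqrt{n(s+n/t)})\).
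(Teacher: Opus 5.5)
Your proposal is correct and follows essentially the same route as the paper. The loop is costed via \Cref{lem:successive-discoveries} with at most \(|\Pivots|+|\DegSamp|\) discoveries. Coverage and heaviness are read off the stored values \(\First{i}{p}\) together with the fully queried samples, \(\Right{i}\) comes from \eqref{eq:ov-right-first}, and \(\Left{i}\) is enumerated with \Cref{lem:implicit-set-access}, part~3. You also justify in more detail that the least-coordinate search records each first intersection correctly; the paper states that step only briefly.
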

\begin{proof}
The search predicate requires at most one query to \(x_i(k)\), since all sampled vectors have been queried. Each successful search determines at least one previously unknown first intersection, so there are at most \(s+|\DegSamp|\) iterations. Removing resolved occurrences can only remove possible common ones, so no earlier coordinate needs to be searched again. By \Cref{lem:successive-discoveries} and \eqref{eq:ov-pools}, the total cost is \(\Ot(\sqrt{n(s+n/t)})\). Each stored value is the earliest common one for its occurrence, and entries left at \(n+1\) correspond to orthogonal vectors.

Given \(\Record{i}\), membership in \(\Covered{k}\) is determined by the comparisons \(\First{i}{p}\ge k\) and the known bits \(x_p(k)\) for \(p\in\Pivots\). The comparisons needed for membership in \(\Heavy{k}\) are also stored. Moreover, whether a degree sample vector belongs to \(\Covered{k}\) is known from the fully queried samples. Thus \eqref{eq:ov-general-coverage} can be evaluated without further queries. The right list is determined by \eqref{eq:ov-right-first}.

Membership in \(\Left{i}\) then requires at most one query to \(x_i(k)\). Since \(|\Left{i}|\le a\), \Cref{lem:implicit-set-access}, part~3 enumerates this list in \(\Ot(\sqrt{na})\) queries. Including the record, the cost is
\begin{equation}\label{eq:ov-left-load}
 \Ot\!\left(\sqrt{n(s+n/t)}+\sqrt{na}\right).
\end{equation}
\end{proof}

\paragraph{The checker.}

We now describe a subroutine that, given a set \(I\subseteq V_a\) and its left records, checks whether some \(x_i\), with \(i\in I\), has an orthogonal partner anywhere in the input. Fix a nonempty such set and let \(r=|I|\). Once \(\Record{j}\) is loaded, \(\Domain{ij}\) is known for every \(i\in I\setminus\{j\}\): the left list is stored, while the right list and remaining set are determined by the records.

In the promise case, we bounded the cost of searching the entire test domain. Here, we search coordinates in increasing order and stop once a common one is found. To bound this cost, define
\begin{equation}\label{eq:ov-work}
 \begin{aligned}
 \Work{ij}&=\bigl|\{k\in\Domain{ij}\cup\{n+1\}:k\le\First{i}{j}\}\bigr|
       &&(i\ne j),\\
 \Work{j}&=\sum_{i\in I\setminus\{j\}}\Work{ij}.
 \end{aligned}
\end{equation}
For an intersecting pair, \(\Work{ij}\) is the position of its first common one in the increasingly ordered test domain. For an orthogonal pair, it is \(|\Domain{ij}|+1\). The extra coordinate \(n+1\) accounts for finishing a search with no common one.

\begin{lemma}[Total work]\label{lem:ov-work}
The quantities in \eqref{eq:ov-work} satisfy
\begin{equation}\label{eq:ov-work-bounds}
 \sum_{j\ne i}\Work{ij}=O(n(a+s))\quad(i\in I),\qquad
 \sum_j\Work{j}=O(nr(a+s)).
\end{equation}
\end{lemma}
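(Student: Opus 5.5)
The plan is to bound \(\Work{ij}\) by splitting the prefix \(\{k\in\Domain{ij}\cup\{n+1\}:k\le\First{i}{j}\}\) according to the three parts of \(\Domain{ij}\), plus the extra end marker. A union bound gives
\[
 \Work{ij}\le|\Left{i}|+|\Right{j}|
 +\bigl|\{k\in\Remaining{ij}:k\le\First{i}{j}\}\bigr|+1 .
\]
Here we simply drop the truncation at \(\First{i}{j}\) for the left and right lists. The three nontrivial terms are then handled with the bounds already established in \Cref{lem:ov-general-structure,lem:ov-size}, on which we are conditioning.

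Fix \(i\in I\subseteq V_a\) and sum over \(j\ne i\).
\begin{enumerate}
\item The left-list term contributes at most \(n|\Left{i}|\le n\,\Size{i}=na\), by \Cref{lem:ov-size}.
\item The right-list term contributes at most \(ns\), since \(|\Right{j}|\le s\) for every \(j\) by \eqref{eq:ov-general-bounds}.
\item The remaining-set term contributes at most \(4nt\), directly from the third bound in \eqref{eq:ov-general-bounds}.
\item The end marker contributes at most \(n\).
\end{enumerate}
Since \(a\ge2t\) and \(a\ge1\), we have \(4nt+n=O(na)\). This gives \(\sum_{j\ne i}\Work{ij}=O(n(a+s))\), which is the first bound.

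For the second bound, exchange the order of summation:
\[
 \sum_j\Work{j}=\sum_j\sum_{i\in I\setminus\{j\}}\Work{ij}
 =\sum_{i\in I}\sum_{j\ne i}\Work{ij}.
\]
Applying the first bound to each of the \(r\) indices \(i\in I\) yields \(O(nr(a+s))\).

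The only step requiring care is the remaining-set term. In the general case, \(\Remaining{ij}\) itself can be large, and only its truncation at \(\First{i}{j}\) is controlled. This is exactly why \(\Work{ij}\) is defined with the cutoff \(k\le\First{i}{j}\). The third bound in \eqref{eq:ov-general-bounds} is stated for precisely that truncated set, so no further argument is needed beyond the decomposition.
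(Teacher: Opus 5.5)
Your proposal is correct and is essentially the paper's proof. Like the paper, you bound each \(\Work{ij}\) by \(|\Left{i}|+|\Right{j}|+1\) plus the truncated remaining term, sum over \(j\ne i\) using \(|\Left{i}|\le a\), \(|\Right{j}|\le s\) and the \(4nt\) bound from \Cref{lem:ov-general-structure}, absorb \(4nt+n\) into \(O(na)\) via \(a\ge2t\), and finally sum over the \(r\) indices of \(I\).
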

\begin{proof}
For each pair, the left list, right list, and extra coordinate contribute at most \(|\Left{i}|+s+1\). A remaining coordinate contributes only while \(k\le\First{i}{j}\). By \Cref{lem:ov-general-structure},
\begin{align*}
 \sum_{j\ne i}\Work{ij}
 &\le n(1+|\Left{i}|+s)
   +\sum_{j\ne i}\bigl|\{k\in\Remaining{ij}:k\le\First{i}{j}\}\bigr|\\
 &\le n(1+|\Left{i}|+s+4t)=O(n(a+s)).
\end{align*}
The inequality follows from \Cref{lem:ov-general-structure}; the final equality uses \(|\Left{i}|\le a\) and \(a\ge2t\). Summing over the \(r\) indices in \(I\) proves the second bound.
\end{proof}

The quantities \(\Work{j}\) are not known, so we estimate them before using them as search budgets. For one invocation of the checker, independently sample a multiset \(\Samp\) of \(\Theta(\max\{1,(n+1)/a\}\log n)\) pairs from \(I\times[n+1]\), uniformly at random. Use this same sample for every budget evaluation within the invocation. A sampled pair \((i,k)\) is a hit for \(j\) if
\begin{equation}\label{eq:ov-hit}
 i\ne j,\qquad k\in\Domain{ij}\cup\{n+1\},\qquad \First{i}{j}\ge k.
\end{equation}
Define
\begin{equation}\label{eq:ov-general-budget}
 \Budget{j}=2ra+2r(n+1)
 \frac{\text{number of hits for }j\text{ in }\Samp}{|\Samp|}.
\end{equation}
The term \(2ra\) ensures that the budget remains an upper bound even when the work is small.

Algorithm~\ref{alg:ov-subroutines} evaluates this budget and tests a pair with a given allowance. In \textsc{PairTest}, the integer \(u\ge1\) allows a search of the first \(u^2\) coordinates of the test domain. The answer \textsc{Unfinished} means that no common one was found but the allowance is insufficient to certify orthogonality.

\begin{algorithm}[H]
\caption{Budget evaluation and pair testing}\label{alg:ov-subroutines}
\small
\begin{algorithmic}[1]
\Require \(I\subseteq V_a\), \(r=|I|\), its left records, and \(\Samp\).
\Function{Budget}{$j,I,a,\Samp$}
  \State Load \(\Record{j}\) by Algorithm~\ref{alg:ov-record}.
  \For{\((i,k)\in\Samp\) with \(i\ne j\) and \(k\in\Domain{ij}\cup\{n+1\}\)}
    \State Test \(\sum_{\ell=1}^{k-1}x_i(\ell)x_j(\ell)=0\) by Grover search.
  \EndFor
  \State \Return \(\Budget{j}\) from \eqref{eq:ov-general-budget}, counting successful tests as hits.
\EndFunction
\Function{PairTest}{$i,j,u$}
  \State Use the stored \((\Record{i},\Left{i})\) and \(\Record{j}\).
  \State Search the first \(\min\{u^2,|\Domain{ij}|\}\) entries of \(\Domain{ij}\) for a common one.
  \If{a common one is found} \State \Return \textsc{Nonorthogonal}.
  \ElsIf{\(u^2>|\Domain{ij}|\)} \State \Return \textsc{Orthogonal}.
  \Else \State \Return \textsc{Unfinished}. \EndIf
\EndFunction
\end{algorithmic}
\end{algorithm}

\begin{lemma}[Sampled budgets]\label{lem:ov-budget}
For each fixed \(I\), with probability at least \(1-n^{-c}\) over \(\Samp\), for any prescribed constant \(c>0\), the budgets satisfy
\begin{equation}\label{eq:ov-budget-bounds}
 \Work{j}\le\Budget{j}\le6(\Work{j}+ra),\qquad
 \sum_j\Budget{j}=O(nr(a+s)).
\end{equation}
Each budget can be evaluated in \(\Ot(\sqrt{n(s+n/t)}+n^{3/2}/a)\) queries.
\end{lemma}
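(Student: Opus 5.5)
The budget is a sampling estimator of $\Work{j}$, so the plan is to write $\Work{j}$ as a scaled count of hits and apply Chernoff bounds. Let $H_j$ be the set of pairs $(i,k)\in I\times[n+1]$ satisfying \eqref{eq:ov-hit}. Then $\Work{j}=|H_j|$ exactly, since $\Work{ij}$ counts the $k\in\Domain{ij}\cup\{n+1\}$ with $k\le\First{i}{j}$. Each sampled pair is a hit with probability $p_j=|H_j|/(r(n+1))$. Writing $m=|\Samp|$ and $X$ for the number of hits, we have $\mathbb E X=mp_j$, and
\[
\Budget{j}=2ra+2r(n+1)X/m
\]
has expectation $2ra+2\Work{j}$.

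For the lower bound $\Work{j}\le\Budget{j}$, I split into cases. If $\Work{j}\le 2ra$, the additive term already suffices. Otherwise $\mathbb E X\ge m\cdot 2ra/(r(n+1))=\Omega(\log n)$ by the choice $m=\Theta(\max\{1,(n+1)/a\}\log n)$, so a lower-tail Chernoff bound gives $X\ge \mathbb E X/2$, i.e. $2r(n+1)X/m\ge\Work{j}$, except with probability $n^{-c-1}$.

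For the upper bound $\Budget{j}\le 6(\Work{j}+ra)$, I use the upper tail form already used in \Cref{lem:ov-size}: the probability that $X$ exceeds $2(\mathbb E X+\mu_0)$, where $\mu_0=m\,ra/(r(n+1))=\Omega(\log n)$, is at most $e^{-\Omega(\mu_0)}$. On the complementary event,
\[
\Budget{j}\le 2ra+4(\Work{j}+ra)\le 6(\Work{j}+ra).
\]
A union bound over the $n$ values of $j$ gives both inequalities simultaneously with probability at least $1-n^{-c}$ after raising the sampling constant. Summing the upper bound and applying \Cref{lem:ov-work} gives
\[
\sum_j\Budget{j}\le 6\Bigl(\sum_j\Work{j}+nra\Bigr)=O(nr(a+s)).
\]

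For the cost, \textsc{Budget} loads $\Record{j}$ in $\Ot(\sqrt{n(s+n/t)})$ queries by \Cref{lem:ov-record}. With $\Record{j}$ and the stored left records, membership $k\in\Domain{ij}$ is known without further queries. Each sampled pair then needs one Grover test of whether $x_i$ and $x_j$ have a common one before $k$, costing $\Ot(\sqrt n)$ at inverse-polynomial error. There are $|\Samp|=\Ot(1+n/a)$ tests, for a total of $\Ot(\sqrt n+n^{3/2}/a)$. The $\sqrt n$ term is absorbed by $\sqrt{n(s+n/t)}$, giving the stated bound. The errors of these Grover tests are handled by amplifying to inverse-polynomial error and adding them to the union bound.

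The main obstacle is getting a relative-plus-additive guarantee uniformly, including when $\Work{j}$ is tiny. This is exactly why the additive $2ra$ term is needed and why the sample size must scale as $(n+1)/a$: that scaling makes the expected hit count from $ra$ work equal to $\Theta(\log n)$. I must check the constants so the upper tail threshold $2(\mathbb E X+\mu_0)$ yields the factor $6$.
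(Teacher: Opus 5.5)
Your proposal is correct and follows essentially the same route as the paper. You identify \(\Work{j}\) as the number of hits in \(I\times[n+1]\) and apply Chernoff bounds at additive scale \(ra\), which is \(\Omega(\log n)\) expected hits since \(|\Samp|\,a/(n+1)=\Omega(\log n)\). You then take a union bound over \(j\), sum using the total-work lemma, and bound the cost by one record load plus \(|\Samp|\) Grover prefix tests. Your case split for the lower tail just unpacks the paper's two-sided estimate \(\Work{j}/2-ra\le r(n+1)\cdot(\text{hits for }j)/|\Samp|\le 2(\Work{j}+ra)\).
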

\begin{proof}
Exactly \(\Work{j}\) of the \(r(n+1)\) pairs in \(I\times[n+1]\) satisfy \eqref{eq:ov-hit}. Thus the sampled fraction, multiplied by \(r(n+1)\), estimates \(\Work{j}\). Chernoff bounds, as in \Cref{lem:ov-size}, give
\[
 \frac{\Work{j}}2-ra
 \le r(n+1)\frac{\text{number of hits for }j}{|\Samp|}
 \le2(\Work{j}+ra)
\]
simultaneously for all \(j\), with failure probability at most \(n^{-c}\). Indeed, the additive scale \(ra\) corresponds to a fraction \(a/(n+1)\) of the sampling domain, and \(|\Samp|a/(n+1)=\Omega(\log n)\). Substituting these bounds into \eqref{eq:ov-general-budget} proves the pointwise inequalities. Summing over \(j\) and applying \Cref{lem:ov-work} gives the total budget bound.

Once \(\Record{j}\) is loaded, the first two conditions in \eqref{eq:ov-hit} can be checked without queries. The last condition holds iff the two vectors have no common one before \(k\), which Grover search tests in \(\Ot(\sqrt n)\) queries. The record is reused for all sampled pairs, so evaluating the budget costs
\begin{equation}\label{eq:ov-budget-cost}
 \Ot\!\left(\sqrt{n(s+n/t)}+\max\{1,n/a\}\sqrt n\right)
 =\Ot\!\left(\sqrt{n(s+n/t)}+\frac{n^{3/2}}a\right).
\end{equation}
Here the \(\sqrt n\) term is absorbed by the record loading cost.
\end{proof}

We can now describe the checker. For a fixed \(j\), apply \Cref{lem:unknown-cost-search} over \(i\in I\setminus\{j\}\), using \textsc{PairTest}. To search over \(j\), group the vectors by their sampled budgets:
\[
 J_B=\{j\in[n]:B/2<\Budget{j}\le B\},
 \qquad B\in\{1,2,4,\ldots\}.
\]
For \(j\in J_B\), the value \(B\) bounds \(\Work{j}\). The groups are accessed through the budget evaluation subroutine. Algorithm~\ref{alg:ov-general-check} gives the checker; each search accepts if it finds an orthogonal pair.

\begin{algorithm}[H]
\caption{General checker}\label{alg:ov-general-check}
\small
\begin{algorithmic}[1]
\Require \(I\subseteq V_a\), \(r=|I|\), and \((\Record{i},\Left{i})_{i\in I}\).
\Function{GeneralRight}{$j,B$}
  \State Load \(\Record{j}\) by Algorithm~\ref{alg:ov-record}.
  \State Use \Call{PairTest}{$i,j,u$} for \(i\in I\setminus\{j\}\), accepting \textsc{Orthogonal}.
  \State \Return search using \Cref{lem:unknown-cost-search} with bound \(4B\).
\EndFunction
\Function{GeneralCheck}{$I,a$}
  \State \(\Samp\gets\Theta(\max\{1,(n+1)/a\}\log n)\) samples from \(I\times[n+1]\).
  \For{\(B\in\{1,2,4,\ldots\}\) with \(B\le32rn\)}
    \State Access \(J_B\) using \Call{Budget}{$j,I,a,\Samp$}.
    \State Estimate \(|J_B|\) using \Cref{lem:implicit-set-access}, part~1; skip empty groups.
    \State Prepare uniform members of \(J_B\) using \Cref{lem:implicit-set-access}, part~2.
    \State Search \(J_B\) using \Call{GeneralRight}{$j,B$}.
    \If{the search accepts} \State \Return \textsc{Yes}. \EndIf
  \EndFor
  \State \Return \textsc{No}.
\EndFunction
\end{algorithmic}
\end{algorithm}

\begin{lemma}[General checking cost]\label{lem:ov-general-check}
Given \(I\subseteq V_a\) and its left records, Algorithm~\ref{alg:ov-general-check} decides whether some \(x_i\), with \(i\in I\), has an orthogonal partner. Its query cost is
\begin{equation}\label{eq:ov-general-checking}
 C=\Ot\!\left(n\sqrt{s+n/t}+\frac{n^2}a+\sqrt{nra}\right).
\end{equation}
\end{lemma}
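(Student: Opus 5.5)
We split the proof into correctness and cost. Throughout we condition on the events of \Cref{lem:ov-general-structure,lem:ov-size} and on the budget event of \Cref{lem:ov-budget} for the current \(I\).

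\textbf{Correctness.} Fix \(j\) and \(i\in I\setminus\{j\}\). By \Cref{lem:ov-general-structure}, every common one of \(x_i,x_j\) lies in \(\Domain{ij}\). \textsc{PairTest} searches \(\Domain{ij}\) in increasing coordinate order, using the doubling-then-bisection search from the paragraph on searching in coordinate order. It therefore returns a correct answer whenever \(u^2\ge\Work{ij}\). So its completion allowance is \(\tau_{ij}=\lceil\sqrt{\Work{ij}}\rceil\), and its cost with allowance \(u\) is \(\Ot(u)\), since domain membership needs no queries once both records are stored.

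For \(j\in J_B\), we have \(\sum_i\tau_{ij}^2\le\sum_i(\Work{ij}+2\sqrt{\Work{ij}}+1)\le 4\Work{j}+O(r)\). Since \(\Work{j}\le\Budget{j}\le B\) and \(\Budget{j}\ge2ra\ge r\), this is at most \(4B\) after adjusting constants. Hence \Cref{lem:unknown-cost-search} with bound \(4B\) correctly decides whether some \(i\in I\) is orthogonal to \(x_j\).

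Every \(j\) lies in some \(J_B\) with \(B\le 32rn\). This holds because \(\Budget{j}\le6(\Work{j}+ra)\) and \(\Work{j}\le r(n+1)\). Searching every nonempty group therefore covers all \(j\), and an orthogonal partner of some \(i\in I\) is found exactly when one exists.

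\textbf{Cost per group.} Membership in \(J_B\) costs one budget evaluation, which is \(\beta=\Ot(\sqrt{n(s+n/t)}+n^{3/2}/a)\) by \Cref{lem:ov-budget}. Estimating \(|J_B|\) costs \(\Ot(\beta\sqrt n)\).

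For the search over \(J_B\), we use amplitude amplification over the uniform superposition on \(J_B\). This takes \(O(\sqrt{|J_B|})\) rounds. Each round prepares the superposition at cost \(\Ot(\beta\sqrt{n/|J_B|})\) by \Cref{lem:implicit-set-access}, part~2, and runs \textsc{GeneralRight}. That subroutine loads a record, costing \(\Ot(\sqrt{n(s+n/t)})\) by \Cref{lem:ov-record}, and then runs the inner search, costing \(\Ot(\sqrt B)\).

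The group cost is then
\[
\Ot\!\left(\beta\sqrt n+\sqrt{|J_B|}\bigl(\sqrt{n(s+n/t)}+\sqrt B\bigr)\right).
\]
Using \(|J_B|\le n\) gives \(\sqrt{|J_B|}\sqrt{n(s+n/t)}\le n\sqrt{s+n/t}\). Since each \(j\in J_B\) has \(\Budget{j}>B/2\), we get \(|J_B|B\le2\sum_j\Budget{j}=O(nr(a+s))\) by \Cref{lem:ov-budget}. Hence \(\sqrt{|J_B|B}=\Ot(\sqrt{nra}+\sqrt{nrs})\), and \(\sqrt{nrs}\le n\sqrt s\) because \(r\le n\).

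\textbf{Total.} The budget-related terms give \(\beta\sqrt n=\Ot(n\sqrt{s+n/t}+n^2/a)\). There are \(O(\log(rn))\) groups, so summing over groups yields \eqref{eq:ov-general-checking}.

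The delicate step is the group-wise accounting for nested searches whose membership oracle is itself expensive. One must ensure that the state-preparation cost \(\beta\sqrt{n/|J_B|}\) per round, multiplied by \(\sqrt{|J_B|}\) rounds, collapses to \(\beta\sqrt n\) rather than \(\beta n\). One must also ensure that the \(+1\) terms and the \(2ra\) offset in the budget keep \(\sum_i\tau_{ij}^2\le 4B\), so that \Cref{lem:unknown-cost-search} applies with the stated bound.
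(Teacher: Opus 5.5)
Your proof is correct and follows the paper's argument essentially step for step: completion allowances \(\lceil\sqrt{\Work{ij}}\rceil\), the inner variable-time search with bound \(4B\), the budget groups \(J_B\), amplitude amplification in which the \(\sqrt{|J_B|}\) factors cancel against the preparation cost, and the bound \(|J_B|B\le 2\sum_j\Budget{j}\). One small slip: from \(\Budget{j}\le6(\Work{j}+ra)\), \(\Work{j}\le r(n+1)\), and \(a\le4n\) you only get \(\Budget{j}\le30rn+6r\), which does not force \(B\le32rn\). Use instead the deterministic bound \(\Budget{j}\le2ra+2r(n+1)\le12rn\) from the definition of the budget, since the hit fraction is at most one. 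Also, \(\Work{ij}\ge1\) gives \(\sum_i\lceil\sqrt{\Work{ij}}\rceil^2\le4\Work{j}\) directly, so no constant adjustment is needed.
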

\begin{proof}
Fix \(j\in[n]\). A call to \textsc{PairTest} costs \(\Ot(u)\) queries and finishes once \(u\ge\lceil\sqrt{\Work{ij}}\rceil\). For an intersecting pair, this allowance includes its first common one. For an orthogonal pair, it is enough to exhaust the test domain. Smaller allowances return \textsc{Unfinished}. The completion allowances satisfy
\[
 \sum_{i\in I\setminus\{j\}}
 \lceil\sqrt{\Work{ij}}\rceil^2\le4\Work{j}.
\]
On the event in \Cref{lem:ov-budget}, a vector \(j\in J_B\) satisfies \(\Work{j}\le\Budget{j}\le B\). Thus \Cref{lem:unknown-cost-search} applies with bound \(4B\), and searching over \(i\), including the cost of loading \(\Record{j}\), uses \(\Ot(\sqrt{n(s+n/t)}+\sqrt B)\) queries.

It remains to search over \(j\). Membership in \(J_B\) costs \(\Ot(\sqrt{n(s+n/t)}+n^{3/2}/a)\) by \Cref{lem:ov-budget}. By \Cref{lem:implicit-set-access}, estimating its size costs \(\Ot(n\sqrt{s+n/t}+n^2/a)\). If \(J_B\) is nonempty, preparing a uniform member costs \(\Ot((n\sqrt{s+n/t}+n^2/a)/\sqrt{|J_B|})\). Grover search over this group therefore costs
\begin{align*}
 &\Ot\!\left(\sqrt{|J_B|}
 \left(\frac{n\sqrt{s+n/t}+n^2/a}{\sqrt{|J_B|}}
       +\sqrt{n(s+n/t)}+\sqrt B\right)\right)\\
 &\hspace{1.5em}=\Ot\!\left(n\sqrt{s+n/t}+\frac{n^2}a+\sqrt{|J_B|B}\right).
\end{align*}
The total budget bound gives
\[
 |J_B|B\le2\sum_{j\in J_B}\Budget{j}
 \le2\sum_j\Budget{j}=O(nr(a+s)).
\]
There are \(O(\log n)\) groups, so their total search cost is \(\Ot(n\sqrt{s+n/t}+n^2/a+\sqrt{nra})\), where the contribution \(\sqrt{nrs}\) is absorbed by \(n\sqrt{s+n/t}\) since \(r\le n\).

Every budget is positive and, since \(a\le4n\), its definition gives \(\Budget{j}\le2ra+2r(n+1)\le12rn\). Hence every vector belongs to a searched group. The test domains contain every common one, so the checker accepts exactly when a vector indexed by \(I\) has an orthogonal partner, up to the sampling and quantum search errors. The sample is fixed throughout an invocation, and the error convention in \Cref{sec:quantum-tools} applies to these subroutines.
\end{proof}

\paragraph{The quantum walk.}

We now use this checker in a quantum walk over each nonempty class \(V_a\). As in the promise case, the states are the \(r\)-element subsets \(I\in\binom{V_a}{r}\), and a state is marked if some \(x_i\), with \(i\in I\), has an orthogonal partner anywhere in the input. Each state now stores the left records \((\Record{i},\Left{i})_{i\in I}\).

The size classes are no longer known explicitly, so preparing and updating a state also requires searching for members of its class. The next lemma bounds this additional cost. As before, \(S\) is the query cost of preparing a state with its data, and \(U\) is the cost of one replacement in the Johnson walk.

\begin{lemma}[Access to implicit classes]\label{lem:ov-class-access}
Class membership costs \(\Ot(\sqrt{n(s+n/t)})\) queries, and estimating \(n_a\) within a factor of two costs \(\Ot(n\sqrt{s+n/t})\) queries. For a nonempty class and \(1\le r\le n_a/2\), the setup and update costs are
\begin{equation}\label{eq:ov-general-su}
 S=\Ot\!\left(r\sqrt{na}+rn\sqrt{(s+n/t)/n_a}\right),\qquad
 U=\Ot\!\left(\sqrt{na}+n\sqrt{(s+n/t)/n_a}\right).
\end{equation}
\end{lemma}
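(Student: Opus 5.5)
Class membership for \(i\) is the test \(\Size{i}=a\), so I will first bound its cost. By \eqref{eq:ov-general-size-definition}, the label depends only on the bits \(x_i(k)\) for \(k\in K\) and on whether \(i\in\Heavy{k}\) at those sampled coordinates. The plan is to load \(\Record{i}\) with Algorithm~\ref{alg:ov-record}, which costs \(\Ot(\sqrt{n(s+n/t)})\) queries by \Cref{lem:ov-record}. From the record we read off heaviness at every \(k\in K\). We then query \(x_i(k)\) for the \(|K|=\Theta((n/t)\log n)\) sampled coordinates, compute \(\Size{i}\), and compare it with \(a\). The \(|K|\) extra queries fit in the bound: since \(t\le n\), we have \(n/t\le\sqrt{n\cdot n/t}\), so this term is absorbed. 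To keep the walk coherent, we uncompute the record and the queried bits after the test.

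The estimate of \(n_a\) comes from \Cref{lem:implicit-set-access}, part~1, applied with domain \([n]\) and membership cost \(q=\Ot(\sqrt{n(s+n/t)})\). This gives a factor-two estimate in \(\Ot(q\sqrt n)=\Ot(n\sqrt{s+n/t})\) queries. With this estimate, part~2 of the same lemma prepares a uniform superposition over \(V_a\) in \(\Ot(q\sqrt{n/n_a})=\Ot(n\sqrt{(s+n/t)/n_a})\) queries. This sampling primitive is what drives both the setup cost \(S\) and the update cost \(U\).

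For the update, one Johnson step uncomputes the left record of the departing index and computes the left record of a uniform new index of \(V_a\setminus I\). We generate the new index by preparing a uniform member of \(V_a\) and conditioning on it lying outside \(I\). Since \(r\le n_a/2\), this conditioning costs at most a constant factor in amplitude amplification, because membership in \(I\) is known. Loading a left record costs \(\Ot(\sqrt{na}+\sqrt{n(s+n/t)})\) by \Cref{lem:ov-record}. Adding the sampling cost, and using \(\sqrt{n(s+n/t)}\le n\sqrt{(s+n/t)/n_a}\) because \(n_a\le n\), gives \(U\) as stated.

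For the setup, we prepare a uniform \(r\)-subset of \(V_a\) together with its records. We do this by drawing \(r\) distinct uniform members one at a time (or a uniform \(r\)-tuple, symmetrized) and loading each left record, at cost \(r\) times the per-element cost, which gives \(S\). The main obstacle is coherence and exactness. The prepared superposition is only approximately uniform, and the class labels depend on random samples that must be fixed classically beforehand. I expect both issues to be handled by fixing \(K\), \(\Pivots\) and \(\DegSamp\) before the walk, so that \(V_a\) is a fixed set. The inverse-polynomial preparation errors then accumulate over \(\Ot(\sqrt{n})\) steps only to a negligible total, by the error convention of \Cref{sec:quantum-tools}.
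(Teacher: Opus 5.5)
Your proposal is correct and follows essentially the same route as the paper. You evaluate \(\Size{i}\) by loading \(\Record{i}\) and querying \(x_i\) on \(K\), apply parts~1 and~2 of \Cref{lem:implicit-set-access} with membership cost \(\Ot(\sqrt{n(s+n/t)})\), and combine the selection cost with \Cref{lem:ov-record}; excluding \(I\) costs only a constant factor since \(r\le n_a/2\), and the record term is absorbed via \(n_a\le n\). One tiny slip: the absorption \(n/t\le\sqrt{n\cdot n/t}=n/\sqrt t\) holds because \(t\ge1\), not because \(t\le n\).
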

\begin{proof}
To compute \(\Size{i}\), load \(\Record{i}\) and query \(x_i\) at the coordinates in \(K\). The record determines which of these coordinates belong to \(\Left{i}\), so the label can then be computed from \eqref{eq:ov-general-size-definition}. The cost is
\[
 \Ot\!\left(\sqrt{n(s+n/t)}+n/t\right)=\Ot(\sqrt{n(s+n/t)}).
\]
Testing \(\Size{i}=a\) gives class membership. By \Cref{lem:implicit-set-access}, parts~1 and~2, estimating the class size costs \(\Ot(n\sqrt{s+n/t})\), and preparing a uniform class member costs \(\Ot(n\sqrt{(s+n/t)/n_a})\).

Preparing a state requires \(r\) selections without replacement and loading their left records. Since \(r\le n_a/2\), excluding previously selected indices changes each selection cost by at most a constant factor. An update removes one stored index and its record, selects a new index outside the subset, and loads its left record. Combining the selection cost with \Cref{lem:ov-record} gives the stated setup and update costs; the selection cost absorbs the record loading term since \(n_a\le n\).
\end{proof}

For each nonempty class, we obtain an estimate \(\widehat n_a\in[n_a/2,2n_a]\) and choose \(r=\max\{1,\lfloor\sqrt{\widehat n_a/2}\rfloor\}\). This ensures \(r=\Theta(\sqrt{n_a})\). The full algorithm for $\OV_n$ is given below.

\begin{algorithm}[H]
\caption{General algorithm}\label{alg:ov-general}
\small
\begin{algorithmic}[1]
\Require \(x_1,\ldots,x_n\in\bits^n\).
\State \(s\gets\lceil n^{1/3}\rceil\), \(t\gets\lceil n^{5/6}\rceil\).
\State Sample \(\Pivots,\DegSamp\) as in \eqref{eq:ov-pools}; query all their entries.
\State \(K\gets\Theta((n/t)\log n)\) independent samples from \([n]\).
\State Access \(\Covered{k},\Heavy{k},\Size{i},V_a\) by \eqref{eq:ov-general-coverage}, \eqref{eq:ov-general-size-definition}, \eqref{eq:ov-general-classes}.
\For{powers of two \(a\) with \(2t\le a\le4n\)}
  \State Estimate \(n_a\) by \(\widehat n_a\in[n_a/2,2n_a]\) using \Cref{lem:implicit-set-access}; skip empty classes.
  \State \(r\gets\max\{1,\lfloor\sqrt{\widehat n_a/2}\rfloor\}\).
  \State Store \((\Record{i},\Left{i})_{i\in I}\) for each state \(I\in\binom{V_a}{r}\).
  \State Run MNRS with checker \Call{GeneralCheck}{$I,a$}.
  \If{the search accepts} \State \Return \textsc{Yes}. \EndIf
\EndFor
\State \Return \textsc{No}.
\end{algorithmic}
\end{algorithm}

\begin{proof}[Proof of \Cref{thm:ov-upper-bound}]
We analyze Algorithm~\ref{alg:ov-general}, keeping \(s,t\) as independent parameters until the final bound. Consider a nonempty class \(V_a\). As in the promise case, if a vector with an orthogonal partner has its index in \(V_a\), at least an \(r/n_a\) fraction of the walk states are marked. The spectral gap is \(\Theta(1/r)\). By \Cref{thm:mnrs}, together with \Cref{lem:ov-class-access,lem:ov-general-check}, searching the class and estimating its size costs\footnote{For \(n_a\le2\), prepare the records and run the checker for each walk state, at total cost \(O(S+C)\), within the same bound.}
\begin{align*}
 &\Ot\!\left(n\sqrt{s+n/t}+S+\sqrt{n_a}\,U+\sqrt{n_a/r}\,C\right)\\
 &\hspace{1.5em}=\Ot\!\left(\sqrt{na n_a}
       +n\sqrt{s+n/t}\,(1+n_a^{1/4})+\frac{n^2}a n_a^{1/4}\right)\\
 &\hspace{1.5em}\le\Ot\!\left(n\sqrt t+\frac{n^2}{\sqrt{st}}
       +n^{5/4}\sqrt s+\frac{n^{9/4}}t\right).
\end{align*}
Here we use \(r=\Theta(\sqrt{n_a})\), \(a n_a=O(nt+n^3/(st))\), \(a\ge2t\), and \(n_a\le n\). The degree sample contributes at most \(n^{7/4}/\sqrt t\), which is at most \(n^{9/4}/t\) since \(t\le n\).

Preprocessing reads the sampled vectors in \(\Ot(ns+n^2/t)\) queries. The coordinates in \(K\) are queried only when evaluating a class label, and that cost is included above. There are \(O(\log n)\) nonempty size classes, so, absorbing \(n^2/t\) into \(n^{9/4}/t\), the total query cost is
\[
 \Ot\!\left(ns+n\sqrt t+\frac{n^2}{\sqrt{st}}
       +n^{5/4}\sqrt s+\frac{n^{9/4}}t\right).
\]
Choosing \(s=\lceil n^{1/3}\rceil\) and \(t=\lceil n^{5/6}\rceil\) makes this \(\Ot(n^{17/12})\).

Finally, every index belongs to a size class. If an orthogonal pair exists, a state containing either index is marked, and \Cref{lem:ov-general-check} gives a correct checker. If no orthogonal pair exists, no state is marked. The preprocessing and size label events hold with probability at least \(0.97\); reducing the search errors makes the overall success probability at least \(2/3\).
\end{proof}
\section{An existential analogue of verification}
\label{sec:existential-row}

Verification asks whether \emph{every} row of \(A\boolprod B\) equals the
corresponding row of \(C\).  We now ask instead whether \emph{some} row
of \(A\boolprod B\) equals a given vector.

\subsection{Boolean Matrix Product Row Search}
\label{sec:bmprs-bounds}

\begin{definition}[Boolean Matrix Product Row Search]
\label{def:bmprs}
For \(A,B\in\bits^{n\times n}\) and \(v\in\bits^n\),
\begin{equation}
  \BMPRS_n(A,B,v)
  \coloneqq
  \bigvee_{i\in[n]}
  \bigwedge_{j\in[n]}
  \ind{(A\boolprod B)_{ij}=v_j}.
  \label{eq:bmprs-formula}
\end{equation}
All three inputs are accessed through entry queries.  We write
\(\BMPRS_n^{\ones}(A,B)\coloneqq\BMPRS_n(A,B,\ones_n)\) for the
restriction to the all-ones target, whose input consists of \(A\) and
\(B\) only.
\end{definition}

For the all-ones target, verification and row search differ only in
their outermost quantifier:
\begin{align*}
  A\boolprod B=\J_n
  &\quad\Longleftrightarrow\quad
  \forall i\in[n]\ \forall j\in[n]\ \exists t\in[n]:\
  A_{it}=B_{tj}=1,\\
  \BMPRS_n^{\ones}(A,B)=1
  &\quad\Longleftrightarrow\quad
  \exists i\in[n]\ \forall j\in[n]\ \exists t\in[n]:\
  A_{it}=B_{tj}=1.
\end{align*}
The first condition is verification with \(C=\J_n\).  This special
case is as hard as verification up to rescaling (\Cref{thm:bmpv-final}
and the discussion following it), and it can be decided with
\(\widetilde O(n^{17/12})\) queries by
\Cref{thm:bmpv-final,thm:ov-upper-bound}.  In \Cref{sec:bmprs-bounds}, we show
that the second condition requires \(\Omega(n^{3/2}/\sqrt{\log n})\)
queries, while \(O(n^{3/2})\) queries suffice for every target \(v\).
For \(A=B=\widehat A_G\), the two conditions state that \(G\) has
diameter at most two and radius at most two, respectively, by
\Cref{eq:closed-neighborhood-square}.  In \Cref{sec:radius}, we give
reductions in both directions between \(\BMPRS^{\ones}\) and Radius at
Most Two, so the separation carries over to graphs.

\begin{theorem}[Query bounds for Boolean Matrix Product Row Search]
\label{thm:bmprs-bounds}
For every \(n\ge8\),
\[
  \Omega\!\left(\frac{n^{3/2}}{\sqrt{\log n}}\right)
  \le Q(\BMPRS_n^{\ones})
  \le Q(\BMPRS_n)
  =O(n^{3/2}).
\]
\end{theorem}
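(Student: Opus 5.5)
The plan has three parts: a trivial middle inequality, an upper bound by robust formula evaluation, and a lower bound by encoding non-surjectivity into an all-ones product row and then composing with \(\OR\).

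\textbf{Middle inequality.} The inequality \(Q(\BMPRS_n^{\ones})\le Q(\BMPRS_n)\) is immediate, since fixing \(v=\ones_n\) restricts the input.

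\textbf{Upper bound.} Evaluate \Cref{eq:bmprs-formula} as a depth-three formula \(\OR_n\circ\AND_n\circ g\). Here each leaf \(g_{ij}\) compares \((A\boolprod B)_{ij}=\OR_t(A_{it}\booland B_{tj})\) with the queried bit \(v_j\). A leaf is thus an \(\OR\) of \(n\) constant-query bits, followed by one more query. Composing bounded-error searches robustly (e.g.\ the Høyer--Mosca--de Wolf error-robust search, or the adversary composition \(\Adv(\OR_n\circ\AND_n\circ\OR_n)=\Theta(n^{3/2})\) together with \cite{LeeEtAl2011}) gives \(O(\sqrt n\cdot\sqrt n\cdot\sqrt n)=O(n^{3/2})\) without logarithmic factors. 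I would cite the composition theorem for the adversary bound, which is tight for bounded-error query complexity.

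\textbf{Lower bound: encoding a missed value.} I follow the overview. Let \(m\approx n/\log n\) and consider \(\OR_m\circ\ONTO'\), where \(\ONTO'\) is ``\(f\) misses some value'' for \(f\colon[n]\to[m]\), written as an \(\exists x\,\forall j\,[f(j)\ne x]\). Encode each value with \(\ell=\lceil\log_2 m\rceil\) bits, and let row \(i\) of \(A\) be indexed by a candidate value \(x\). Give \(A\) columns \((b,\beta)\) with \(A_{x,(b,\beta)}=\ind{x_b=\beta}\), and give \(B\) entries \(B_{(b,\beta),j}=\ind{f(j)_b\ne\beta}\). Then \((A\boolprod B)_{xj}=1\) iff some bit \(b\) has \(x_b\ne f(j)_b\), i.e.\ iff \(f(j)\ne x\). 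So an all-ones row exists iff \(f\) is non-surjective onto the encoded values; pad rows and columns to size \(n\) so that the index and dimension sets match. This alone only gives an \(\Omega(n)\)-type bound, since \(\ONTO\) on \(n\) points has complexity \(\Theta(n)\) while the encoding uses only \(O(\log n)\) rows of \(B\).

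\textbf{Lower bound: gaining \(\sqrt{n/\log n}\).} To reach \(n^{3/2}/\sqrt{\log n}\), replace each encoded bit \(B_{(b,\beta),j}\) by an \(\OR\) of \(\Theta(n/\log n)\) fresh variables. These occupy \(\Theta(n/\log n)\) rows \((b,\beta,c)\) of \(B\) per bit, and the matching columns of \(A\) duplicate \(A_{x,(b,\beta)}\); this gives \(O(\log n)\cdot\Theta(n/\log n)=O(n)\) rows. The function computed is then \(\BMPRS\) on a composed input \(\ONTO\text{-like}\circ\OR_{n/\log n}\). By \cite{LeeEtAl2011}, the adversary bound multiplies, giving \(\Omega(\Adv(\text{outer})\cdot\sqrt{n/\log n})\). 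It therefore suffices that the outer problem (is some \(x\in[m]\) missed by the Boolean-encoded \(f\), with bit-level inputs) has adversary bound \(\Omega(n)\).

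\textbf{Main obstacle.} The hard step is making this outer bound legitimate, because the bits of \(f(j)\) are Boolean variables rather than one symbol. Beame--Machmouchi gives \(\Omega(n)\) for \(\ONTO\) over alphabet \([m]\), and the negative-weight adversary applies to non-Boolean alphabets. I would argue that a symbol oracle for \(f(j)\) can be simulated with \(O(\log n)\) bit queries, which loses exactly a \(\log n\) factor instead of \(\sqrt{\log n}\). To avoid that loss, I would instead treat the group of \(\ell\) \(\OR\)-gadgets for each \(j\) as a single composed gadget over the alphabet \([m]\). Equivalently, I would apply the composition theorem with inner function ``\(\ell\) parallel \(\OR_{n/\log n}\) producing a symbol'', whose adversary bound I would show to be \(\Omega(\sqrt{n/\log n})\) per symbol (via \(\sqrt{\ell\cdot n/\log n}\) divided appropriately). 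I would also handle the range mismatch of \(\ONTO_q\) (domain \(2q-2\)) by choosing \(q=\Theta(n)\) values with \(\Theta(n)\) columns, so that \(\Adv(\text{outer})=\Omega(n)\). Balancing these parameters so that exactly one \(\sqrt{\log n}\) is lost is the step I expect to require the most care.
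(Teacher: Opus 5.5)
\textbf{Gap at the key step.} Most of your plan matches the paper. The upper bound is the same nested search. Your encoding $A_{x,(b,\beta)}=\ind{x_b=\beta}$, $B_{(b,\beta),j}=\ind{f(j)_b\ne\beta}$ is exactly the paper's $\gamma(x)=(c(x),\overline{c(x)})$ encoding. The $\OR_\kappa$ blow-up followed by adversary composition is also the same. The problem is the step you flag as the main obstacle: the bound $\Adv(F_q)=\Omega(q)$ for the bit-level outer function $F_q(y)=\bigvee_{x}\bigwedge_{j}\bigvee_{b:\gamma(x)_b=0}y_{j,b}$. Your diagnosis runs the simulation in the wrong direction. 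Simulating a symbol query to $f$ with $O(\log n)$ bit queries converts a symbol-query algorithm into a bit-query algorithm. That only upper-bounds the bit-level problem, so it gives no lower bound at all, not one that loses a $\log n$ factor. Your fallback is to compose with an inner gadget whose output is a symbol in $\bits^\ell$. The multiplicativity of $\Adv$ you invoke is for Boolean inner functions such as $\OR_\kappa$, so this needs a different composition theorem. You give neither that theorem nor a proof of the claimed per-symbol bound. As written, the outer bound $\Omega(n)$ is therefore not established.

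\textbf{The missing idea.} Use the easy direction. Given a symbol oracle for $f\colon[2q-2]\to[q]$, each bit $y_{j,b}=\gamma(f(j))_b$ costs one query to $f(j)$, plus uncomputation. So any algorithm for $F_q$ yields an algorithm for the complement of $\ONTO_q$ with constant overhead. Beame--Machmouchi then gives $Q(F_q)=\Omega(q)$, and $\Adv(F_q)=\Omega(q)$ because $\Adv$ characterizes $Q$. Take $F_q$ to be the total Boolean function defined by the formula, with $\ONTO_q$ embedded as a restriction. Composing with the Boolean $\OR_\kappa$ then gives $\Omega(q\sqrt\kappa)$ directly. There is nothing to balance: the single $\sqrt{\log n}$ loss comes entirely from $\kappa=\lfloor n/\ell\rfloor$. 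That value is forced because the $\ell=\Theta(\log n)$ encoded bits must share the $n$ middle indices.

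\textbf{Two smaller points.}
\begin{itemize}
\item Your first parameter choice, range $m\approx n/\log n$ on domain $n$, cannot yield an $\Omega(n)$ outer bound. Nested search decides that surjectivity problem with $O(\sqrt{n\cdot n/\log n})$ queries. You need $q=\Theta(n)$ values with domain $2q-2\le n$, as you say later.
\item The padding must preserve the answer. Set the padded columns of $B$ to one on the auxiliary rows, so genuine rows stay all-ones there. Set the padded rows of $A$ to zero, so they can never be all-ones rows of the product.
\end{itemize}
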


The upper bound comes from evaluating \Cref{eq:bmprs-formula} level by
level with quantum search.  For fixed \(i\) and \(j\), query \(v_j\) and
use Grover search over \(t\) to decide whether
\((A\boolprod B)_{ij}=\bigvee_{t\in[n]}(A_{it}\booland B_{tj})\) equals
\(v_j\); this costs \(O(\sqrt n)\) queries.  With these bounded-error tests as
inputs, a search over \(j\) for an entry with
\((A\boolprod B)_{ij}\ne v_j\) decides whether row \(i\) of
\(A\boolprod B\) equals \(v\) with \(O(\sqrt n\cdot\sqrt n)\) queries,
and a search over \(i\) decides \(\BMPRS_n\) with
\(O(\sqrt n\cdot\sqrt n\cdot\sqrt n)=O(n^{3/2})\) queries.
The two outer searches use the algorithm of H{\o}yer, Mosca, and de
Wolf~\cite{HoyerMoscaDeWolf2003} for search on bounded-error inputs,
which needs no prior error reduction, so no logarithmic factors arise.

The rest of this subsection proves the lower bound.  As for Orthogonal
Vectors, we start from surjectivity, but the quantifier structure of
\(\BMPRS^{\ones}\) now permits a direct embedding.  A function \(f\) is
not surjective exactly when \(\exists x\,\forall j:f(j)\ne x\), which
matches the two outer quantifiers of \(\BMPRS^{\ones}\).  The innermost
existential quantifier \(\exists t\) asks whether row \(x\) of \(A\) and
column \(j\) of \(B\) share a one.  To make this condition equivalent to
\(f(j)\ne x\), we encode each value \(x\) by a string \(\gamma(x)\) of
length \(O(\log n)\), let row \(x\) of \(A\) mark the zeros of
\(\gamma(x)\), and let column \(j\) of \(B\) hold \(\gamma(f(j))\).

We expand each encoding position into a block of \(\Theta(n/\log n)\)
auxiliary indices, repeating the corresponding column of \(A\)
throughout the block.  For each column of \(B\), the entries in this
block are fresh variables whose OR replaces the original encoded bit.
This implements composition with OR.  Applying adversary composition
multiplies the \(\Omega(n)\) lower bound for surjectivity by
\(\Omega(\sqrt{n/\log n})\).  Unlike the
proof of \Cref{thm:ov-lower-bound}, which places an AND outside many
surjectivity instances, this proof uses a single instance and composes
it with ORs at its inputs.  The length of the encoding accounts for the
factor \(\sqrt{\log n}\) in \Cref{thm:bmprs-bounds}.

\paragraph{Encoding the predicate \(x\ne y\).}
Let \(q\ge4\) be even, and let \(\ell\coloneqq2\ceil{\log_2q}\).  For
\(x\in[q]\), let \(c(x)\in\bits^{\ell/2}\) be the binary representation
of \(x-1\), and define
\[
  \gamma(x)\coloneqq\bigl(c(x),\overline{c(x)}\bigr)\in\bits^\ell.
\]
If \(x\ne y\), then some \(b\in[\ell]\) satisfies \(\gamma(x)_b=0\) and
\(\gamma(y)_b=1\): take a bit on which \(c(x)\) and \(c(y)\) differ, in
the first half of the encoding if \(c(x)\) has a zero there and in the
complemented second half otherwise.  If \(x=y\), no such \(b\) exists.
Hence, for all \(x,y\in[q]\),
\begin{equation}
  \bigvee_{\substack{b\in[\ell]\\\gamma(x)_b=0}}\gamma(y)_b
  =\ind{x\ne y}.
  \label{eq:gamma-inequality}
\end{equation}

For \(y\in\bits^{(2q-2)\times\ell}\), define
\[
  F_q(y)
  \coloneqq
  \bigvee_{x\in[q]}
  \bigwedge_{j\in[2q-2]}
  \bigvee_{\substack{b\in[\ell]\\\gamma(x)_b=0}}y_{j,b}.
\]
For \(\kappa\ge1\), let \(\Phi_{q,\kappa}\coloneqq F_q\circ\OR_\kappa\)
be obtained by replacing each variable \(y_{j,b}\) with the OR of
\(\kappa\) fresh variables; that is, for
\(z\in\bits^{(2q-2)\times\ell\times\kappa}\),
\begin{equation}
  \Phi_{q,\kappa}(z)
  =
  \bigvee_{x\in[q]}
  \bigwedge_{j\in[2q-2]}
  \bigvee_{\substack{b\in[\ell]\\\gamma(x)_b=0}}
  \bigvee_{t\in[\kappa]}z_{j,b,t}.
  \label{eq:encoded-nonsurjectivity}
\end{equation}

\begin{lemma}[Encoded non-surjectivity]
\label{lem:encoded-nonsurjectivity}
For every even \(q\ge4\) and every integer \(\kappa\ge1\),
\[
  Q(\Phi_{q,\kappa})=\Omega(q\sqrt\kappa).
\]
\end{lemma}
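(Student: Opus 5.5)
The plan is to combine the Beame--Machmouchi lower bound for surjectivity with multiplicativity of the negative-weight adversary bound under composition, as in \Cref{lem:and-of-onto}.

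\emph{Step 1: reduce non-surjectivity to \(F_q\).} Given \(f\colon[2q-2]\to[q]\), set \(y_{j,b}\coloneqq\gamma(f(j))_b\). By \eqref{eq:gamma-inequality}, the inner disjunction \(\bigvee_{b:\gamma(x)_b=0}y_{j,b}\) equals \(\ind{f(j)\ne x}\). Hence \(F_q(y)=1\) exactly when some \(x\) is missed by \(f\), that is, \(F_q(y)=1-\ONTO_q(f)\). We view \(F_q\) as a total Boolean function on \(\bits^{(2q-2)\times\ell}\). Its restriction to the ``well-formed'' inputs, where each row \(y_{j,*}\) lies in \(\gamma([q])\), is equivalent to \(1-\ONTO_q\). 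Each row is determined by the single value \(f(j)\), so a query to \(y\) is simulated by one query to \(f\). It follows that \(\Adv(F_q)\ge\Adv(\text{restriction})=\Omega(Q(\ONTO_q))=\Omega(q)\). Here I would use that the adversary bound of a restriction to a subdomain is at most that of the total function, and that an oracle reduction with one query per query cannot increase the bound. It is cleanest to phrase this as \(Q(F_q)\ge Q(\ONTO_q)=\Omega(q)\) and then use that \(\Adv\) characterizes \(Q\) up to constants~\cite{LeeEtAl2011}.

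\emph{Step 2: compose.} Now \(\Phi_{q,\kappa}=F_q\circ\OR_\kappa\), with a separate copy of \(\OR_\kappa\) on each of the \((2q-2)\ell\) inputs of \(F_q\). For Boolean functions, the composition theorem of~\cite{LeeEtAl2011} gives
\[
\Adv(F_q\circ\OR_\kappa)=\Adv(F_q)\,\Adv(\OR_\kappa)=\Omega(q)\cdot\sqrt\kappa .
\]
Since \(Q=\Theta(\Adv)\), this yields \(Q(\Phi_{q,\kappa})=\Omega(q\sqrt\kappa)\).

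The main obstacle is Step 1: making sure the \(\Omega(q)\) bound transfers to the \emph{total} Boolean function \(F_q\), since the composition theorem requires \(F_q\) to be Boolean-input. Monotonicity handles this. A lower bound on the promise version, restricted to well-formed inputs, is a lower bound on the total function. The reduction from \(f\) to \(y\) is query-efficient because each bit \(y_{j,b}\) depends only on \(f(j)\). If the version of the composition theorem needs both functions to be total, that condition is met, because \(F_q\) as written is a total Boolean formula and \(\OR_\kappa\) is total.
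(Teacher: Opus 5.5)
Your proposal is correct and follows the paper's proof essentially step for step. The paper also embeds $\ONTO_q$ into the total Boolean function $F_q$ via $y_{j,*}=\gamma(f(j))$ to get $Q(F_q)=\Omega(q)$, and then uses that $\Adv$ characterizes $Q$ and is multiplicative under composition~\cite{LeeEtAl2011}, with $\Adv(\OR_\kappa)=\sqrt\kappa$. One small correction: a coherent query to $y$ costs $O(1)$ queries to $f$ (compute $f(j)$, then uncompute it), not exactly one, but this does not affect the bound.
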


\begin{proof}
We first show that \(Q(F_q)=\Omega(q)\).  Given an instance
\(f\from[2q-2]\to[q]\) of \(\ONTO_q\), set the \(j\)-th row of \(y\) to
\(y_{j,*}\coloneqq\gamma(f(j))\).  By \Cref{eq:gamma-inequality},
\[
  F_q(y)=1
  \quad\Longleftrightarrow\quad
  \exists x\in[q]\ \forall j\in[2q-2]:f(j)\ne x,
\]
that is, exactly when \(f\) is not surjective.  Each bit of \(y\)
depends on a single value \(f(j)\), so each query to \(y\) can be
simulated with \(O(1)\) queries to \(f\).  Hence
\(Q(\ONTO_q)=O(Q(F_q))\), and the lower bound of Beame and
Machmouchi~\cite[Corollary~6]{BeameMachmouchi2012} gives
\(Q(F_q)=\Omega(q)\).

As in the proof of \Cref{lem:and-of-onto}, the negative-weight adversary
bound characterizes bounded-error quantum query complexity up to a
constant factor and is multiplicative under
composition~\cite[Theorem~1.1 and Lemma~5.2]{LeeEtAl2011}.  Hence
\(\Adv(F_q)=\Omega(q)\), and since \(\Adv(\OR_\kappa)=\sqrt\kappa\),
\[
  Q(\Phi_{q,\kappa})
  =\Omega\!\left(\Adv(F_q)\,\Adv(\OR_\kappa)\right)
  =\Omega(q\sqrt\kappa).
  \qedhere
\]
\end{proof}

\begin{proof}[Proof of \Cref{thm:bmprs-bounds}]
The upper bound was shown above.  The middle inequality holds because
\(\BMPRS_n^{\ones}\) is a restriction of \(\BMPRS_n\).
For the lower bound, set
\[
  q\coloneqq2\floor{n/4},
  \qquad
  \kappa\coloneqq\floor{n/\ell},
\]
with \(\ell=2\ceil{\log_2q}\) as above.  For \(n\ge8\), the integer \(q\)
is even and at least four, and \(\ell\le q\le2q-2\le n\).  Hence
\(\kappa\ge1\) and \(\ell\kappa\le n\).

We embed \(\Phi_{q,\kappa}\) into \(\BMPRS_n^{\ones}\).  Index the rows
of \(A\) by \(x\in[n]\) and the columns of \(B\) by \(j\in[n]\), and
label \(\ell\kappa\) of the auxiliary indices, shared by the columns of
\(A\) and the rows of \(B\), by the pairs
\((b,t)\in[\ell]\times[\kappa]\).  For \(x\in[q]\), \(j\in[n]\), and
\((b,t)\in[\ell]\times[\kappa]\), define
\[
  A_{x,(b,t)}\coloneqq\ind{\gamma(x)_b=0},
  \qquad
  B_{(b,t),j}\coloneqq
  \begin{cases}
    z_{j,b,t}, & j\in[2q-2],\\
    1, & j\notin[2q-2].
  \end{cases}
\]
All other entries of \(A\) and \(B\) are zero.  For \(x\in[q]\) and
\(j\in[2q-2]\),
\[
  (A\boolprod B)_{x,j}
  =
  \bigvee_{\substack{b\in[\ell]\\\gamma(x)_b=0}}
  \bigvee_{t\in[\kappa]}z_{j,b,t}.
\]
For \(x\in[q]\) and \(j\notin[2q-2]\), we have \((A\boolprod B)_{x,j}=1\),
because \(\gamma(x)\) has \(\ell/2\ge1\) zeros, so row \(x\) of \(A\) is
nonzero.  For \(x\notin[q]\), row \(x\) of \(A\), and hence row \(x\) of
\(A\boolprod B\), is zero.  Comparing with
\Cref{eq:encoded-nonsurjectivity} gives
\(\BMPRS_n^{\ones}(A,B)=\Phi_{q,\kappa}(z)\).  Every entry of \(A\) is
fixed, and every entry of \(B\) is fixed or equal to a single bit of
\(z\).  By \Cref{lem:encoded-nonsurjectivity},
\[
  Q(\BMPRS_n^{\ones})=\Omega(q\sqrt\kappa)
  =\Omega\!\left(n\sqrt{\frac{n}{\log n}}\right)
  =\Omega\!\left(\frac{n^{3/2}}{\sqrt{\log n}}\right),
\]
since \(q=\Theta(n)\), \(\ell=\Theta(\log n)\), and
\(\kappa=\Theta(n/\log n)\).
\end{proof}

\subsection{Radius at Most Two}
\label{sec:radius}

By \Cref{eq:closed-neighborhood-square}, a graph \(G\) has radius at
most two exactly when \(\widehat A_G\boolprod\widehat A_G\) has an
all-ones row, which reduces Radius at Most Two to \(\BMPRS^{\ones}\).
For the reverse reduction, we use a layered construction of Abboud,
Vassilevska Williams, and Wang~\cite{AbboudVassilevskaWilliamsWang2016},
stated directly in terms of the matrices \(A\) and \(B\).  Row, middle,
and column indices become three layers of vertices, and auxiliary
vertices ensure that only row vertices can be within distance two of
every vertex.

\begin{proposition}[Boolean Matrix Product Row Search and radius]
\label{prop:bmprs-radius}
For every \(n\ge1\),
\[
  Q(\RAD_{\le2,n})
  \le Q(\BMPRS_n^{\ones})
  \le Q(\RAD_{\le2,\,3n+3}).
\]
\end{proposition}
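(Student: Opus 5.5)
The plan is to prove each inequality by an explicit reduction with constant query overhead: one adjacency or entry query is simulated by at most one query on the other side.

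\emph{First inequality.} Given an $n$-vertex graph $G$, we run an algorithm for $\BMPRS_n^{\ones}$ on the pair $(A,B)=(\widehat A_G,\widehat A_G)$. A diagonal entry of $\widehat A_G$ is the fixed value $1$. Each off-diagonal entry is one adjacency query. By \Cref{eq:closed-neighborhood-square}, row $u$ of $\widehat A_G\boolprod\widehat A_G$ is all ones exactly when every vertex is within distance two of $u$. Such a row exists exactly when some vertex has eccentricity at most two, that is, when $\operatorname{rad}(G)\le2$.

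\emph{Second inequality.} Given $A,B\in\bits^{n\times n}$, I will build a graph $H$ on $3n+3$ vertices. It has three layers and three auxiliary vertices:
\begin{itemize}
\item row vertices $r_1,\ldots,r_n$, which form a clique;
\item middle vertices $m_1,\ldots,m_n$;
\item column vertices $c_1,\ldots,c_n$;
\item auxiliary vertices $p$, $q$, and $q'$.
\end{itemize}
The edges are as follows:
\begin{itemize}
\item $r_x m_t$ is an edge iff $A_{xt}=1$;
\item $m_t c_j$ is an edge iff $B_{tj}=1$;
\item $p$ is adjacent to every $r_x$ and every $m_t$;
\item $q$ is adjacent to every $r_x$ and to $q'$;
\item $q'$ has no other neighbors.
\end{itemize}
There are no other edges. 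Every adjacency entry of $H$ is either fixed or equal to a single entry of $A$ or $B$.

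We then compute the eccentricity of a row vertex $r_x$:
\begin{itemize}
\item $r_x$ is at distance one from the other row vertices, from $p$, and from $q$;
\item $r_x$ reaches every $m_t$ through $p$, and reaches $q'$ through $q$;
\item $r_x$ reaches $c_j$ in at most two steps iff some $t$ has $A_{xt}=B_{tj}=1$, because the only neighbors of $c_j$ are middle vertices.
\end{itemize}
Hence $r_x$ has eccentricity at most two exactly when row $x$ of $A\boolprod B$ is all ones. This gives the direction from an all-ones product row to $\operatorname{rad}(H)\le2$.

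For the converse, I will show that no vertex outside $R$ can be a center. The main point to verify is that the auxiliary gadget rules out every non-row vertex:
\begin{itemize}
\item The neighborhood of $q'$ is $\{q\}$, and the neighborhood of $q$ is $R\cup\{q'\}$. So every vertex within distance two of $q'$ lies in $R\cup\{q,q'\}$. This excludes $p$, every $m_t$, and every $c_j$.
\item Every path from $q$ or $q'$ to a column vertex $c_1$ must enter $M$, since $c_1$ has only middle neighbors. Such a path passes through $R$ and then $M$, so it has length at least three. This excludes $q$ and $q'$. Here $n\ge1$ guarantees that $c_1$ exists.
\end{itemize}
Hence any center of radius at most two is some $r_x$, and its row of $A\boolprod B$ is all ones.

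Counting vertices gives $3n+3$. The reduction therefore yields $Q(\BMPRS_n^{\ones})\le Q(\RAD_{\le2,3n+3})$.
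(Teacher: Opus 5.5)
Your proof is correct and takes essentially the same approach as the paper: your $p$, $q$, $q'$ play the roles of the paper's $\beta$, $\alpha$, $\alpha'$, and you rule out non-row centers by distances from $q'$ and to $c_1$, as the paper does. The one difference is that you make the row vertices a clique; this is harmless but unnecessary, since any two row vertices are already within distance two through $p$ or $q$.
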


\begin{proof}
For the first inequality, let \(G\) be an \(n\)-vertex graph, and set
\(A=B=\widehat A_G\).  Entry \((a,b)\) of \(\widehat A_G\) is \(1\) if
\(a=b\) and is the adjacency bit \((A_G)_{ab}\) otherwise, so it costs
at most one query.  By \Cref{eq:closed-neighborhood-square}, row \(a\)
of \(\widehat A_G\boolprod\widehat A_G\) is all ones exactly when every
vertex is within distance two of \(a\).  Hence
\(\RAD_{\le2,n}(G)=\BMPRS_n^{\ones}(\widehat A_G,\widehat A_G)\).

For the second inequality, let \(A,B\in\bits^{n\times n}\) be an
instance of \(\BMPRS_n^{\ones}\).  The graph \(H\) has \(3n+3\) vertices:
a vertex \(r_i\) for each row index \(i\) of \(A\), a vertex \(w_t\) for
each middle index \(t\), a vertex \(c_j\) for each column index \(j\) of
\(B\), and three auxiliary vertices \(\alpha\), \(\beta\), and
\(\alpha'\).  Its edges are
\begin{itemize}
\item \(\{\alpha,r_i\}\) and \(\{\beta,r_i\}\) for every \(i\in[n]\);
\item \(\{\beta,w_t\}\) for every \(t\in[n]\), and the edge
  \(\{\alpha,\alpha'\}\);
\item \(\{r_i,w_t\}\) whenever \(A_{it}=1\), and \(\{w_t,c_j\}\)
  whenever \(B_{tj}=1\).
\end{itemize}
Only the edges in the last item depend on the input, and each is
answered by a single query to \(A\) or \(B\).

We claim that the vertices within distance two of every vertex of \(H\)
are exactly the vertices \(r_i\) for which row \(i\) of \(A\boolprod B\)
is all ones.  A vertex \(r_i\) is adjacent to \(\alpha\) and \(\beta\),
reaches every other \(r_{i'}\) and \(\alpha'\) through \(\alpha\), and
reaches every \(w_t\) through \(\beta\).  It is not adjacent to \(c_j\),
and their common neighbors are the vertices \(w_t\) with
\(A_{it}=B_{tj}=1\), so
\begin{equation}
  \operatorname{dist}_H(r_i,c_j)\le2
  \quad\Longleftrightarrow\quad
  (A\boolprod B)_{ij}=1.
  \label{eq:bmprs-radius-distance}
\end{equation}
It remains to rule out the other vertices.  The vertex \(\alpha\) and
every \(c_j\) are at distance at least three from each other, because
all neighbors of \(c_j\) have the form \(w_t\) and none of them is
adjacent to \(\alpha\).  The vertex \(\alpha'\) and every vertex
\(\beta\) or \(w_t\) are at distance at least three from each other,
because the only neighbor of \(\alpha'\) is \(\alpha\), and \(\alpha\) is
adjacent to neither.  Hence \(H\) has radius at most two exactly when
\(\BMPRS_n^{\ones}(A,B)=1\).
\end{proof}

\begin{corollary}[Query bounds for Radius at Most Two]
\label{cor:radius-bounds}
For every \(n\ge27\),
\[
  \Omega\!\left(\frac{n^{3/2}}{\sqrt{\log n}}\right)
  \le Q(\RAD_{\le2,n})=O(n^{3/2}).
\]
\end{corollary}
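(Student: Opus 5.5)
The corollary follows by chaining \Cref{prop:bmprs-radius} with \Cref{thm:bmprs-bounds}, with a padding step to handle arbitrary \(n\).

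For the upper bound, the first inequality of \Cref{prop:bmprs-radius} gives \(Q(\RAD_{\le2,n})\le Q(\BMPRS_n^{\ones})\le Q(\BMPRS_n)=O(n^{3/2})\) by \Cref{thm:bmprs-bounds}. This holds for every \(n\ge8\), and hence for \(n\ge27\).

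For the lower bound, set \(m\coloneqq\floor{(n-3)/3}\). Then \(3m+3\le n\), and \(m\ge8\) once \(n\ge27\); this is exactly where the threshold \(27\) comes from. By the second inequality of \Cref{prop:bmprs-radius} and \Cref{thm:bmprs-bounds},
\[
  Q(\RAD_{\le2,3m+3})\ge Q(\BMPRS_m^{\ones})=\Omega\bigl(m^{3/2}/\sqrt{\log m}\bigr)=\Omega\bigl(n^{3/2}/\sqrt{\log n}\bigr).
\]

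It remains to pass from \(3m+3\) vertices to exactly \(n\) vertices. The difference \(n-(3m+3)\) is between \(0\) and \(2\). The main thing to verify is that the padding does not change the radius-at-most-two answer. I would attach each extra vertex as a pendant to the auxiliary vertex \(\alpha\) in the graph \(H\) of \Cref{prop:bmprs-radius}, i.e.\ treat it as another copy of \(\alpha'\).

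I then check that this padding preserves the characterization of centers.
\begin{itemize}
\item A row vertex \(r_i\) reaches each new pendant through \(\alpha\), so its distances to everything else are unchanged.
\item Each new pendant is at distance at least three from \(\beta\), exactly as \(\alpha'\) is, so it cannot be a center.
\item No other vertex gains new distance-two paths, because the pendants are leaves.
\item The vertices that were excluded as centers in \(H\) stay excluded.
\end{itemize}
Adjacency queries to the padded graph cost at most one input query each. This gives \(Q(\RAD_{\le2,n})\ge Q(\RAD_{\le2,3m+3})\), which completes the lower bound.
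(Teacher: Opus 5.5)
Your proof is correct and follows the paper's argument exactly: the same upper-bound chain through \Cref{prop:bmprs-radius} and \Cref{thm:bmprs-bounds}, the same choice \(m=\floor{(n-3)/3}\), and the same padding by extra pendant copies of \(\alpha'\) attached to \(\alpha\). One small wording point is that your padding check applies only to graphs of the form \(H\), so it directly yields \(Q(\RAD_{\le2,n})\ge Q(\BMPRS_m^{\ones})\), which is all the lower bound needs, rather than the general inequality \(Q(\RAD_{\le2,n})\ge Q(\RAD_{\le2,3m+3})\) that you state.
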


\begin{proof}
The upper bound follows from the first inequality in
\Cref{prop:bmprs-radius} and \Cref{thm:bmprs-bounds}.  For the lower
bound, let \(m\coloneqq\floor{(n-3)/3}\), so that \(m\ge8\) and
\(3m+3\le n\).  Apply the second reduction of \Cref{prop:bmprs-radius} to
an instance of \(\BMPRS_m^{\ones}\), and add \(n-(3m+3)\) further vertices
adjacent only to \(\alpha\).  Each new vertex plays the same role as
\(\alpha'\): it is at distance two from every \(r_i\) and at distance
three from \(\beta\).  Hence the answer is unchanged, and
\Cref{thm:bmprs-bounds} gives
\[
  Q(\RAD_{\le2,n})
  \ge Q(\BMPRS_m^{\ones})
  =\Omega\!\left(\frac{m^{3/2}}{\sqrt{\log m}}\right)
  =\Omega\!\left(\frac{n^{3/2}}{\sqrt{\log n}}\right).
  \qedhere
\]
\end{proof}

Thus Boolean Matrix Product Row Search, already with the all-ones target, and
Radius at Most Two have quantum query complexity
\(\widetilde\Theta(n^{3/2})\), so nested search is optimal for them up
to a factor \(\sqrt{\log n}\).  For their universal counterparts, the
condition \(A\boolprod B=\J_n\) and Diameter at Most Two, nested search
is not optimal: by \Cref{thm:ov-upper-bound} and the reductions of
\Cref{sec:bmpv-and-ov}, both can be decided with
\(\widetilde O(n^{17/12})\) queries.  Changing the outermost quantifier from ``every'' to ``some''
therefore makes both problems polynomially harder.  In particular,
Radius at Most Two is polynomially harder than Diameter at Most Two in
the adjacency matrix model.

\phantomsection
\addcontentsline{toc}{section}{Acknowledgements}
\section*{Acknowledgements}

We thank Andrew Childs for many helpful discussions. ASG 
received support from the National Science Foundation (grant 26-17356) 
and the Department of Energy (grant DE-SC0020264 and the Office of 
Science, Office of Advanced Scientific Computing Research, Accelerated 
Research in Quantum Computing program). FLG is supported by JSPS KAKENHI 
grants JP24H00071 and JP25K24674, MEXT Q-LEAP grant JPMXS0120319794, JST 
ASPIRE grant JPMJAP2302, and JST CREST grant JPMJCR24I4. XZ is
supported by NSERC Grant RGPIN-2024-06493.

\paragraph{Use of generative AI.}
OpenAI GPT 6 Astra and Anthropic Opus 5.5 were used for literature 
searches, manuscript drafting and revision, proof development and 
checking, and LaTeX preparation. The authors retain full 
responsibility for reviewing the generated material and for all 
mathematical claims and final text.

In particular, GPT 6 Astra found bottlenecks in authors' 
earlier arguments for the main upper bound and after significant 
back-and-forth, suggested a preprocessing step that eventually 
resulted in the upper bound of $\tilde{O}(n^{29/20})$ for the promise 
problem considered in \Cref{sec:ov-promise}. It was then used to 
improve the exponents and remove the promise. The resulting algorithm 
was quite complicated, and it was later simplified by the authors 
resulting in the algorithm presented in \Cref{thm:ov-upper-bound}. 

\begingroup
\small
\raggedright
\bibliographystyle{alphaurl}
\AddToHookNext{cmd/thebibliography/after}{%
  \addcontentsline{toc}{section}{References}%
}
\bibliography{references}
\endgroup

\end{document}